\documentclass{lmcs}

\usepackage[utf8]{inputenc}

\usepackage[sort]{natbib}
\citestyle{plain}

\usepackage{listings}
\lstdefinelanguage{Cext}[]{C}{morekeywords={boolean,true,false,assume,assert}}
\lstset{language=Cext,frame=trbl,basicstyle={\fontfamily{ppl}\selectfont}}

\usepackage{amsthm}
\newtheorem{theorem}{Theorem}
\newtheorem{lemma}[theorem]{Lemma}

\usepackage{ifpdf,color,graphicx} 
\ifpdf

\else

\fi

\usepackage{alltt,verbatim}
\usepackage{amsfonts,amssymb,stmaryrd}

\usepackage{float} 
\usepackage{url}

\usepackage[pdfauthor={David Monniaux},
  pdfkeywords={abstract interpretation, quantifier elimination, program analysis},
  pdftitle={Automatic Modular Abstractions for Template Numerical Constraints}]{hyperref}
\usepackage{algorithm,algorithmic}

\title{Automatic Modular Abstractions for Template Numerical Constraints}

\date{May 26, 2010}
\author{David Monniaux}
\address{CNRS / VERIMAG\\Centre \'Equation\\2, avenue de Vignate\\38610 Gi\`eres cedex\\France}
\email{David.Monniaux@imag.fr}
\thanks{VERIMAG is a joint research laboratory of CNRS, Universit\'e Joseph Fourier and Grenoble-INP}{
\thanks{\parbox{2em}{\includegraphics[width=1.7em]{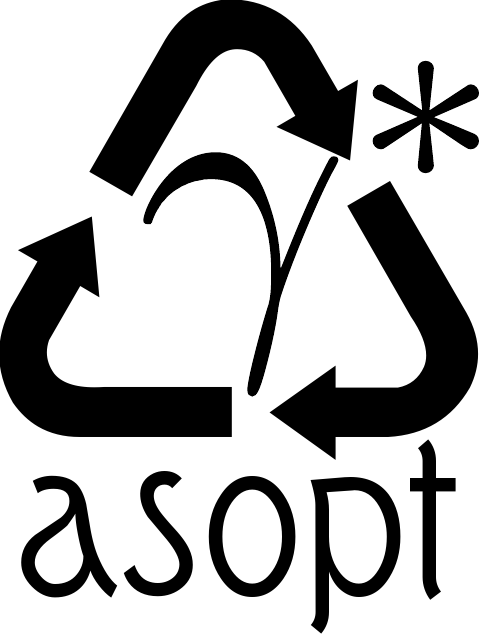}} This work was partially funded by the ``ASOPT'' project of the \emph{Agence nationale de la recherche} (ANR)}

\newcommand{\true}{\textrm{true}}
\newcommand{\false}{\textrm{false}}

\newcommand{\eabs}{\varepsilon_{\text{abs}}}
\newcommand{\elast}{\varepsilon_{\text{last}}}
\newcommand{\erel}{\varepsilon_{\text{rel}}}

\newcommand{\bbQ}{\mathbb{Q}}

\newcommand{\bbZ}{\mathbb{Z}}
\newcommand{\bbR}{\mathbb{R}}

\newcommand{\parts}[1]{\mathcal{P}(#1)}
\newcommand{\sem}[1]{\llbracket #1 \rrbracket}
\newcommand{\definedAs}{\stackrel{\vartriangle}{=}}
\newcommand{\lfp}{\mathop{\text{lfp}}}

\newcommand{\abstr}[1]{#1^\sharp}

\makeatletter
\let\@ORGmakecaption\@makecaption
\long\def\@makecaption#1#2{\@ORGmakecaption{#1}{#2}\vskip\belowcaptionskip\relax}
\makeatother

\begin{document}
\begin{abstract}
We propose a method for automatically generating abstract transformers for static analysis by abstract interpretation. The method focuses on linear constraints on programs operating on rational, real or floating-point variables and containing linear assignments and tests. 
Given the specification of an abstract domain, and a program block, our method automatically outputs an implementation of the corresponding abstract transformer. It is thus a form of program transformation.

In addition to loop-free code, the same method also applies for obtaining least fixed points as functions of the precondition, which permits the analysis of loops and recursive functions.

The motivation of our work is data-flow synchronous programming languages, used for building control-command embedded systems, but it also applies to imperative and functional programming.

Our algorithms are based on quantifier elimination and symbolic manipulation techniques over linear arithmetic formulas. We also give less general results for nonlinear constraints and nonlinear program constructs. 
\end{abstract}

\maketitle

\section{Introduction}
Program analysis consists in deriving properties of the possible executions of a program from an algorithmic processing of its source or object code. Example of interesting properties include: ``the program always terminates''; ``the program never executes a division by zero''; ``the program always outputs a well-formed XML document''; ``variable \lstinline|x| always lies between 1 and 3''. There has been a considerable amount of work done since the late 1970s on \emph{sound} methods of program analysis --- that is, methods that produce results that are guaranteed to hold for all program executions, as opposed to \emph{bug finding} methods such as program testing, which cannot provide such guarantees in general.

Static analysis by \emph{abstract interpretation} is one of the various approaches to sound program analysis. Grossly speaking, abstract interpretation casts the problem of obtaining supersets of the set of reachable states of programs into a problem of finding fixed points of certain monotone operators on certain ordered sets, known as \emph{abstract domains}. When dealing with programs operating on arithmetic values (integer or real numbers, or, more realistically, bounded integers and floating-point values), these sets are often defined by numerical constraints, and ordered by inclusion. One may, for instance, attempt to compute, for each program point and each variable, an interval that is guaranteed to contain all possible values of that variable at that point. The problem is of course how to compute these fixed points. Obviously, the smaller the intervals, the better, so we would like to compute them as small as possible. Ideally, we would like to compute the least fixed point, that is, the least inductive invariant that can be expressed using intervals.

The purpose of this article is to expose how to compute such least fixed points exactly, at least for certain classes of programs and certain abstract domains. Specifically, we consider programs operating over real variables using only linear comparisons (e.g. $x + 2y \leq 3$ but not $x^2 \leq y$), and abstract domains defined using a finite number of linear constraints $\sum_i a_i v_i \leq C$, where the $a_i$ are fixed coefficients, $C$ is a parameter (whose computation is the goal of the analysis) and the $v_i$ are the program variables. Such domains evidently include the intervals, where the constraints are of the form $v_i \leq C$ and $-v_i \leq C$.

Not only can we compute such least fixed points exactly if all parameters are known, but we can also deal with the case where some of the parameters are unknowns, in which case we obtain the parameters of the least fixed point as explicit, algorithmic, functions of the unknowns. We can thus generate, once and for all, the \emph{abstract transformers} for blocks of code: that is, those that map the parameters in the precondition of the block of code to a suitable postcondition. For instance, in the case of interval analysis, we derive explicit functions mapping the bounds on variables at the entrance of a loop-free block to the tightest bounds at the outcome of the block, or to the least inductive interval invariant of a loop. This allows \emph{modular analysis}: it is possible to analyse functions or other kind of program modules (such as nodes in synchronous programming) in isolation of the rest of the code.

A crucial difference with other methods, even on loop-free code, is that we derive the optimal --- that is, the most precise --- abstract transformer for the whole sequence. In contrast, most static analyses only have optimal transformers for individual instructions; they build transformers for whole sequences of code by composition of the transformers for the individual instructions. Even on very simple examples, the optimal transformer for a sequence is not the composition of the optimal transformers of the individual instructions. Furthermore, most other methods are forced to merge the information from different execution traces at the end of ``if-then-else'' and other control flow constructs in a way that loses precision. In contrast, our method distinguishes different paths in the control flow graph as long as they do not go through loop iteration.

Our approach is based on \emph{quantifier elimination}, a technique operating on logical formulas that transforms a formula with quantifiers into an equivalent quantifier-free formula: for instance, $\forall x~(x \geq y \Rightarrow x \geq 1)$ is turned into the equivalent $y \geq 1$. Essentially, we define the result that we would like to obtain using a quantified logical formula, and, using quantifier elimination and further formula manipulations, we obtain an algorithm that computes this result. Thus, one can see our method as automatically transforming a non executable \emph{specification} of the result of the analysis into an algorithm computing that result.
\medskip

In \S\ref{part:background}, we shall provide background about abstract interpretation and quantifier elimination. In \S\ref{part:template_linear_constraints_abstraction}, we shall provide the main results of the article, that is, how to derive optimal abstract transformers for template linear constraint domains. In \S\ref{part:extensions}, we shall investigate various extensions to that framework, still based on quantifier elimination in linear real arithmetic; e.g. how to deal with floating-point values. In \S\ref{part:control-flow}, we shall explain how to move from the single loop case to more complex control flow. In \S\ref{part:experiments}, we shall describe our implementation of the main algorithm and some limited extensions. In \S\ref{part:other_numerical_domains}, we shall investigate extensions of the same framework using quantifier elimination on other theories, namely Presburger arithmetic and the theory of real closed fields (nonlinear real arithmetic). In \S\ref{part:related}, we shall list some related work and compare our method to other relevant approaches. Finally, \S\ref{part:conclusion} concludes.

\emph{This article is based upon two conference articles \cite{Monniaux_POPL09,Monniaux_SAS07}.}

\section{Background}
\label{part:background}
In this section, we shall recall a few definitions, notations and results on program analysis by abstract interpretation, then quantifier elimination.

\subsection{Abstract interpretation}
\label{part:absint}

It is well-known that, in the general case, fully automatic program analysis is impossible for any nontrivial property.%
\footnote{This result, formally given within the framework of recursive function theory, is known as Rice's theorem~\citep[p.~34]{Rogers}\citep[corollary~B]{Rice_1953}. It is obtained by generalisation from Turing's halting theorem. Interpreted upon program semantics, the theorem states that the only properties of the denotational semantics of programs that can be algorithmically decided on the source code are the trivial properties: uniformly ``true'' or uniformly ``false''.}
Thus, all analysis methods must have at least one of the following characteristics:
\begin{itemize}
\item They may bound the memory size of the studied program, which then becomes a finite automaton, on which most properties are decidable. \emph{Explicit-state model-checking} works by enumerating all reachable states, which is tractable only  while \emph{implicit state model-checking} represents sets of states using data structures such as binary decision diagrams~\citep{ClarkeGrumbergPeled99}.
\item They may restrict the programming language used, making it not Turing-complete, so that properties become decidable. For instance, reachability in pushdown automata is decidable even though their memory size is unbounded~\citep{DBLP:conf/concur/BouajjaniEM97}.
\item They may restrict the class of properties expressed to properties of bounded executions; e.g., ``within the first 10000 steps of execution, there is no division by zero'', as in \emph{bounded model checking}~\citep{DBLP:journals/ac/BiereCCSZ03}.
\item They may be \emph{unsound} as proof methods: they may fail to detect that the desired property is violated. Typically, \emph{bug-finding} and testing programs are in that category, because they may fail to detect a bug. Some such analysis techniques are not based on program semantics, but rather on finding patterns in the program syntax~\citep{ASTREE_TASE07}.
\item They may be \emph{incomplete} as proof methods: they may fail to prove that a certain property holds, and report spurious violations. Methods based on abstraction fall in that category.
\end{itemize}

\emph{Abstraction} by over-approximation consists in replacing the original problem, undecidable or very difficult to decide, by a simpler ``abstract'' problem whose behaviours are guaranteed to include all behaviours of the original problem.

\lstset{language=Cext}
\begin{figure}\noindent%
\begin{minipage}{0.45\textwidth}
\begin{lstlisting}[caption={Original program}]
int x, y;
bool a;
if (x < y) a = false;
\end{lstlisting}
\end{minipage}\hfill
\begin{minipage}{0.45\textwidth}
\begin{lstlisting}[caption={Boolean program}]
bool a;
if (nondet()) a = false;
\end{lstlisting}
\end{minipage}

\caption{Transformation of a program into a Boolean program by erasing the numeric part and replacing tests over numerical quantities by nondeterministic choice (\lstinline|nondet()| nondeterministically returns true or false).}
\label{fig:Boolean_transform}
\end{figure}
\bigskip

An example of an abstraction is to erase from the program all constructs dealing with numerical and pointer types (or replacing them with nondeterministic choices, if their value is used within a test), keeping only Boolean types (Fig.~\ref{fig:Boolean_transform}). Obviously, the behaviours of the resulting program encompass all the behaviours of the original program, plus maybe some extra ones.

Further abstraction can be applied to this Boolean program: for instance, the ``3-value logic'' abstraction~\citep{DBLP:conf/cav/RepsSW04}
which maps any input or output variable to an abstract parameter taking its value in a 3-element set: ``is 0'', ``is 1'', ``can be 0 or 1'';%
\footnote{For brevity, we identify ``false'' with 0 and ``true'' with 1.}
for practical purposes it may be easier to encode these values using a couple of Booleans, respectively meaning ``can be 0'' and ``can be 1'', thus the abstract values $(1,0)$, $(0,1)$ and $(1,1)$. The abstract value $(0,0)$ obtained for any variable at a program point means that this program point is unreachable. Given a vector of input abstract parameters, one for each input variable of the program, the \emph{forward abstract transfer function} gives a correct vector of output abstract parameters, one for each output variable of the program. Quite obviously, in the absence of loops, it is possible to obtain a suitable forward abstract transfer function by applying simple logical rules to the Boolean function defined by the Boolean program. An effective implementation of the forward abstract transfer function can thus be obtained by a transformation of the source program.
\bigskip

For programs operating over numerical quantities, a common abstraction is \emph{intervals}.~\citep{CousotCousot76-1,Cousot_PhD} To each input $x$, one associates an interval $[x_{\min},x_{\max}]$, to each output $x'$ an interval $[x'_{\min},x'_{\max}]$. How can one compute the $(x'_{\min},x'_{\max})_{x \in V}$ bounds from the $(x_{\min},x_{\max})_{x \in V}$? The most common method is \emph{interval arithmetic}: to each elementary arithmetic operation, one attaches an abstract counterpart that gives bounds on the output of the operation given bounds on the inputs. For instance, if one knows $[a_{\min},a_{\max}]$ and $[b_{\min},b_{\max}]$, and $c=a+b$, then one computes $[c_{\min},c_{\max}]$ as follows: $c_{\min}=a_{\min}+b_{\min}$ and $c_{\max}=a_{\max}+b_{\max}$. If a program point can be reached by several paths (e.g. at the end of an if-then-else construct), then a suitable interval $[x_{\min},x_{\max}]$ can be obtained by a \emph{join} of all the intervals for $x$ at the end of these paths: $[a,b] \sqcap [c,d] = [\min(a,c),\max(b,d)]$.
Again, for a loop-free program, one can obtain a suitable effective forward abstract transfer function by a program transformation of the source code. 

\begin{figure}
\begin{minipage}{0.45\textwidth}
\begin{lstlisting}[mathescape=true,caption={Program computing zero}]
double x, y, z;
/* x lies in $[x_{\min},x_{\max}]$} */
y = x;
z = x-y;
\end{lstlisting}
\end{minipage}
\hfill
\begin{minipage}{0.45\textwidth}
\begin{lstlisting}[mathescape=true,caption={Interval transformer}]
$y_{\min} = x_{\min}$;
$y_{\max} = x_{\max}$;
$z_{\min} = x_{\min} - y_{\max}$;
$z_{\max} = x_{\max} - y_{\min}$;
\end{lstlisting}
\end{minipage}

\caption{The transformer for the interval domain obtained by composition of locally optimal abstract transformers is imprecise. For each statement (on the left) we use a corresponding optimal transformer (on the right), but the composition of these transformers is not optimal. For the sake of simplicity, all variables are considered to be real numbers.}
\label{fig:imprecise_interval}
\end{figure}

The abstract transfer function defined by interval arithmetic is always correct, but is not necessarily the most precise. For instance, on example in Fig.~\ref{fig:imprecise_interval},
the best abstract transfer function maps any input range to $z_{\min}=z_{\max}=0$, since the output $z$ is always zero, while the one obtained by applying interval arithmetic to all program statements yields, in general, larger intervals.
The weakness of the interval domain on this example is evidently due to the fact that it does not keep track of relationships between variables (here, that $x=y$). \emph{Relational abstract domains} such as the \emph{octagons} \citep{mine:hosc06,Mine_AST_WCRE01,Mine_PhD} or \emph{convex polyhedra} \citep{CousotHalbwachs78,Halbwachs_PhD,PPL} address this issue. Yet, neither octagons nor polyhedra provide analyses that are guaranteed to give optimal results.

Consider the following program:
\begin{lstlisting}[caption={Undistinguished paths}]
int x;
if (x > 0) x= 1; else x= -1;
if (x == 0) x= 2;
\end{lstlisting}
Obviously, the second test can never be taken, since $x$ can only be $\pm 1$; however an interval, octagon or polyhedral analysis will conclude, after joining the informations for both branches of the first test, that $x$ lies in $[-1,1]$ and will not be able to exclude the case $x = 0$. The final invariant will therefore be $x \in [-1,2]$.

In contrast, our method, applied to this example, will correctly conclude that the optimal output invariant for $x$ is $[-1,1]$. In fact, our method yields the same result as considering the (potentially exponentially large) set of paths between the beginning and the end of the program, and for each path, computing the least output interval, then computing the join of all these intervals.

\begin{figure}
\begin{center}
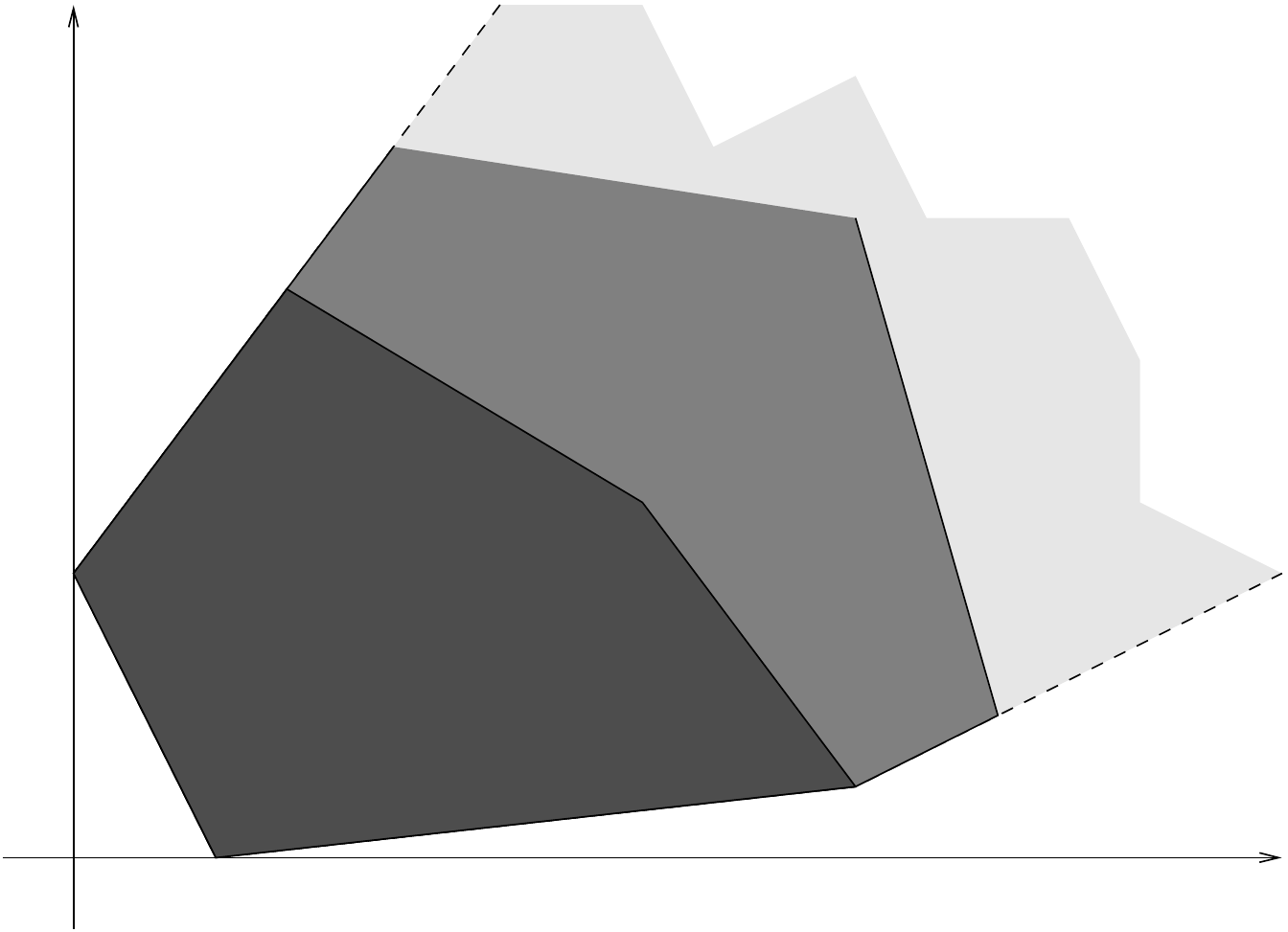
\end{center}

\caption{The standard widening on convex polyhedra \cite{Halbwachs_PhD,CousotHalbwachs78}, here demonstrated on polyhedra in dimension 2 (polygons). The widening operator observes the sets of reachable states $P_0$ and $P_1$ at two consecutive iterations, and keeps only the constraints (polyhedral faces, here polygon edges) that are stable across iterations. The resulting $P_\infty$ polyhedron is then proposed as an invariant.}
\label{fig:widening_polyhedra}
\end{figure}

We have so far left out programs containing loops; when programs contain loops or recursive functions, a central problem of program analysis is to find \emph{inductive invariants}. In the case of Boolean programs, given constraints on the input parameters, the set of reachable states can be computed exactly by model-checking algorithms; yet, these algorithms do not give a closed-form representation of the abstract transfer function mapping input parameters to output parameters for the 3-value abstraction. In the case of numerical abstractions such as the intervals, octagons or polyhedra, the most common way to find invariants is through the use of a \emph{widening operator}~\citep{CousotCousot76-1,CousotCousot_JLC92}.

Intuitively, widening operators observe the sets of reachable states after $N$ and $N+1$ loop iterations and extrapolate them to a ``candidate invariant''. For instance, the widening operator, observing a sequence of intervals $[0,1]$, $[0,2]$, $[0,3]$ may wish to try $[0,+\infty)$. See Fig.~\ref{fig:widening_polyhedra} for an example with the standard widening operator on convex polyhedra.

Let $u_0$ be the set of initial states of a loop, and let $\rightarrow_\tau$ be transition relation for this loop ($\sigma \rightarrow_\tau \sigma'$ means that $\sigma'$ is reachable in one loop step from $\sigma$). The set of reachable states at the head of the loop is the least fixed point of $f: u \mapsto u \cup \{ \sigma' \mid \exists \sigma \in u \land \sigma \rightarrow_\tau \sigma' \}$, which is obtained as the limit of the ascending sequence defined by $u_{n+1} = f(u_n)$. By abstract interpretation, we replace this sequence by an abstract sequence $\abstr{u}_n$ defined by $\abstr{u}_{n+1} = \abstr{f}(\abstr{u}_n)$, such that for any $n$, $\abstr{u}_n$ is an abstraction of $u_n$. If this sequence is stationary, that is, $\abstr{u}_{N+1}=\abstr{u}_N$ for some~$N$, then $\abstr{u}_N$ is an abstraction of the least fixed point of $f$ and thus of the least invariant of the transition relation $\tau$ containing~$u_0$.

When one uses a widening operator $\triangledown$, the function $\abstr{f}(\abstr{u}_n)$ is defined as  $\abstr{u}_n \triangledown \abstr{f}_p(\abstr{u}_n)$ where $\abstr{f}_p$ is an abstraction of $f$. The design of the widening operator ensures convergence of $\abstr{u}_n$ in finite time.
The exceptions to the use of widening operators are static analysis domains that satisfy the \emph{ascending condition}, such as the domain of linear equality constraints \citep{Karr76} and that of linear congruence constraints~\citep{Granger91}: with $\abstr{f}=\abstr{f}_p$ the sequence $\abstr{u}_n$ always becomes stable within finite time.

Again, widening operators provide correct results, but these results can be grossly over-approximated. Much of the literature on applied analysis by abstract interpretation discusses workarounds that give precise results in certain cases: \emph{narrowing} iterations \citep{CousotCousot76-1,Cousot_PhD}, widening ``up to'' \citep[\S3.2]{Halbwachs_CAV93}, ``delayed'' or with ``thresholds'' \citep{ASTREE_PLDI03}, etc.

A simple example of a program with one single variable where narrowing iterations fail to improve precision is Listing~\ref{ex:circularbuffer}.
\lstset{language=Cext}
\begin{lstlisting}[float,label=ex:circularbuffer,caption={Circular buffer}]
i = 0;
while (true) {
  if (nondet()) {
    i = i+1;
    if (i >= 10) i=0;
  }
}
\end{lstlisting}
This program is a much simplified version of a piece of code maintaining a circular buffer inside a large reactive control loop, where some piece of data is inserted only at certain loop iterations. We only kept the instructions relevant to the array index \lstinline|i| and abstracted away the choice of the loop iterations where data is inserted as nondeterministic \lstinline|nondet()|.

If we analyse this loop using intervals with the standard widening with a widening point at the head of the loop, we obtain the sequence $[0,0]$, $[0,1]$, $[0,2]$ and then widening to $[0,+\infty)$. Narrowing iterations then fail to increase the precision. The reason is that the transition relation for this loop includes the identity function (when \lstinline|nondet()| is false), thus the concrete function whose least fixed point defines the set of reachable states \citep{CousotCousot_JLC92} satisfies $X \subseteq \phi(X)$ for all~$X$ (in other word, $\phi$ is \emph{expansive}). Thus, once the widening operator overshoots the least fixed point, it can never recover it.

A similar problem is posed by:
\begin{lstlisting}
i = 0;
while (true) {
  i = i+1;
  if (i == 10) i=0;
}
\end{lstlisting}
The usual widening iterations overshoot to $[0,+\infty)$, and narrowing does not recover from there. Furthermore, this example illustrates how widening makes analysis \emph{non monotonic}: contrary to one could expect, having extra precision on the precondition of a program can result in worse precision for the inferred invariants. For instance, consider the above problem and replace the first line by \lstinline|assume(i>=0 && i<=9)|. Clearly, the resulting program is an abstraction of the above example, since it has strictly more behaviours (we allow $1,\dots,9$ as initial values for \lstinline|i|). Yet, the analysis of the loop will yield a more precise behaviour: the interval $[0,9]$ is stable and the analysis terminates immediately.

Both these very simple examples can be precisely analysed using \emph{widening up to}\citep[\S3.2]{Halbwachs_CAV93}, also known as \emph{widening with thresholds}~\cite[Sec.~7.1.2]{ASTREE_PLDI03}: a preliminary phase collects all constants to which \lstinline|i| is compared, and instead of widening to $+\infty$, one widens to the next larger such constant if it exists --- in this case, since \lstinline|x < 10| stands for $x \leq 9$, widening with threshold would widen to~$9$, which is the correct value. This approach is not general --- it fails if instead of the constant 10 we have some computed value. Of course, improvements are possible: for instance, one could analyse all the program up to this loop in order to prove that certain variables are constant, then use this information for setting thresholds for further loops. Yet, again, this is not a general approach.

This is the second problem that this article addresses: how to obtain, in general, optimal invariants for certain classes of programs and numerical constraints. Furthermore, our methods provide these invariants as functions of the parameters of the precondition of the loop; this is one difference with our proposal, which computes the best, thus, again, they provide effective, optimal abstract transfer functions for loop constructs.

\subsection{Quantifier elimination}
\label{part:qe}
Consider a set $A$ of atomic formulas. The set $U(A)$ of \emph{quantifier-free formulas} is the set of formulas constructed from $A$ using operators $\land$, $\lor$ and $\neg$; the set $Q(A)$ of \emph{quantified formulas} is the set of formulas constructed from $A$ using the above operators and the $\exists$ and $\forall$ quantifiers. Such formulas are thus trees whose leaves are the atomic formulas.
A \emph{literal} is an atomic formula or the negation thereof. The set of \emph{free variables} $FV(F)$ of a formula $F$ is defined as usual. A quantifier-free formula without variables is said to be \emph{ground}. A formula without free variables is said to be \emph{closed}; the \emph{existential closure} of a formula $F$ is $F$ with existential quantifiers for all free variables prepended; the \emph{universal closure} is the same with universal quantifiers. A quantifier-free formula is said to be in \emph{disjunctive normal form} (DNF) if it is a disjunction of conjunctions, that is, is of the form $(l_{1,1} \land \dots \land l_{1,n_1}) \lor \dots \lor (l_{m,1} \land \dots \land l_{m,n_m})$, and is said to be in \emph{conjunctive normal form} (CNF) if it is a conjunction of disjunctions. Any quantifier-free formula can be converted into CNF or DNF by application of the distributivity laws $(A \lor B) \land C \equiv (A \land C) \lor (B \land C)$ and $(A \land B) \lor C \equiv (A \lor C) \land (B \lor C)$, though better algorithms exist, such as ALL-SAT modulo theory~\citep{Monniaux_LPAR08}.

\subsubsection{Linear real inequalities}
\label{part:qf_lra}
Let $A$ be the set of linear inequalities with integer or rational coefficients over a set of variables $V$. By elementary calculus, such inequalities can be equivalently written in the following forms: $l(v_1, v_2, \dots) \geq C$ or $l(v_1, v_2, \dots) > C$, with $l$ a linear expression with integer coefficients over~$V$ and $C$ a constant. Let us first consider the theory of \emph{linear real arithmetic} (LRA): models of a formula $F$ are mappings from $F$ to the real field~$\bbR$, and notions of equivalence and satisfiability follow. Note that satisfiability and equivalence are not affected by taking models to be mappings from $F$ to the rational field~$\bbQ$. Deciding whether a LRA formula is satisfiable is, again, a NP-complete problem known as \emph{satisfiability modulo theory} (SMT) of real linear arithmetic. Again, practical implementations, known as SMT-solvers , are capable of dealing with rather large formulas; examples include Yices~\citep{DBLP:conf/cav/DutertreM06} and Z3~\citep{DBLP:conf/tacas/MouraB08}.%
\footnote{The yearly \href{http://www.smtcomp.org/}{SMT-COMP} competition has SMT-solvers compete on a large set of benchmarks. The \href{http://www/smtlib.org}{SMT-LIB} site \citep{SMTLIB} documents various theories amenable to SMT, including large libraries of benchmarks. \citet{Kroening_Strichman_08} is an excellent introductory material to the algorithms behind SMT-solving.}

The theory of linear real arithmetic admits quantifier elimination. For instance, the quantified formula $\forall x~ (x \geq y \implies x \geq 3)$ is equivalent to the quantifier-free formula $y \geq 3$.
Quantified linear real arithmetic formulas are thus decidable: by quantifier elimination, one can convert the existential closure of the formula to an equivalent ground formula, the truth of which is trivially decidable by evaluation.  The decision problem for quantified formulas over rational linear inequalities requires at least exponential time, thus quantifier elimination is at least exponential.~\cite[\S 7.4]{BradleyManna07}. \cite[Th.~3]{Fischer_Rabin_exponential_74} \citet{Weispfenning88} discusses complexity issues in more detail.

Again, most quantifier elimination algorithms proceed by induction over the structure of the formula, and thus begin by eliminating the innermost quantifiers, progressively replacing branches of the formula containing quantifiers by quantifier-free equivalent branches. By application of the equivalence $\forall x~F \equiv \neg \exists x~\neg F$, one can reduce the problem to eliminating existential quantifiers only. Consider now the problem of eliminating the existential quantifiers from $\exists~x_1 \dots x_n~F$ where $F$ is quantifier-free. We can first convert into DNF: $\exists x_1 \dots x_n~(C_1 \lor \dots \lor C_m)$ where the $C_i$ are conjunctions, then to the equivalent formula $(\exists~x_1 \dots x_n~C_1) \lor \dots \lor (\exists~x_1 \dots x_n~C_m)$. We thus have reduced the quantifier elimination problem for general formula to the problem of quantifier elimination for conjunctions of linear inequalities. Remark that, geometrically, this quantifier elimination amounts to computing the projection of a convex polyhedron along the dimensions associated with the variables $x_1, \dots, x_n$, with the original polyhedron and its projection being defined by systems of linear inequalities. The Fourier-Motzkin elimination procedure \citep[\S 5.4]{Kroening_Strichman_08} converts $\exists x_1 \dots x_n~C$ into an equivalent conjunction. This is what we refer to the ``conversion to DNF followed by projection approach''. This approach is good for quickly proving that the theory admits quantifier elimination, but it is very inefficient. We shall now see better methods.

\begin{figure}
\begin{center}
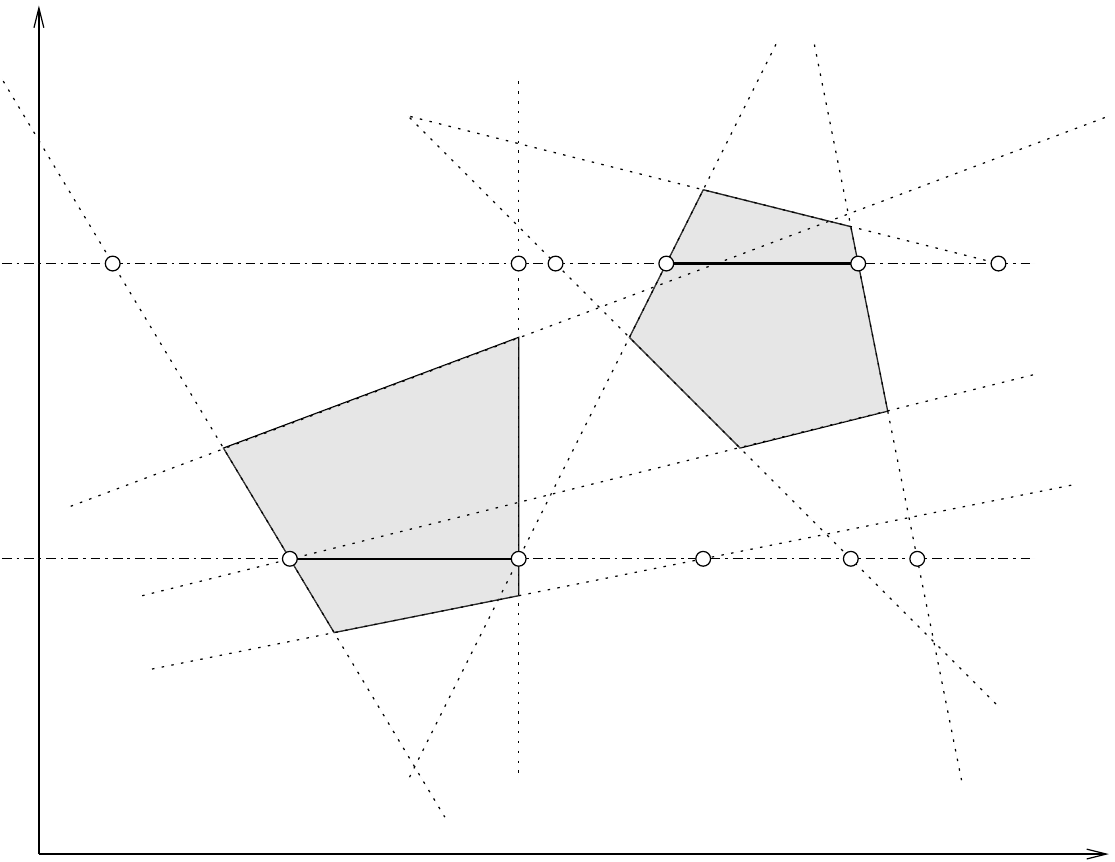
\end{center}

\caption{The gray zone $S$ is the set of $(x,y)$ solutions of formula $F$, whose atoms are the linear inequalities corresponding to the lines $\Delta$ drawn with dashes. For fixed $y=y_0$, the set of $x$ such that $F(x,y)$ is true is made of intervals whose ends lie within the set $I$ of intersections of the $y=y_0$ line with the lines in $\Delta$, drawn with a small circle. $y=y_0$ therefore has an intersection with $S$ if and only if a point in $I$, or an interval with both ends in $I \cup \{-\infty,+\infty\}$, lies within $S$. This condition can be tested using $x \rightarrow \pm \infty$ and all midpoints to intervals with both ends in $I$, as per \citet{FerranteRackoff75}, or, in addition to $I \cup \{ -\infty \}$, for any element of $I$ a point infinitesimally close to the right of it, as per \citet{LoosWeispfenning93}. Both methods exploit the fact that the coordinates of all points from $I$ (intersection of $y=y_0$ and a line from $\Delta$) can be expressed as affine linear functions of~$y_0$.}
\label{fig:qelim_substitution}
\end{figure}

\citet{FerranteRackoff75} proposed a substitution method \citep[\S 7.3.1]{BradleyManna07}\citep[\S 4.2]{NipkowIJCAR08}\citep{Weispfenning88}: a formula of the form $\exists x~F$ where $F$ is quantifier-free is replaced by an equivalent disjunction $F[e_1/x] \lor \dots \lor F[e_n/x]$, where the $e_i$ are affine linear expressions built from the free variables of~$\exists x~F$. Note the similarity to the naive elimination procedure we described for Boolean variables: even though the existential quantifier ranges over an infinite set of values, it is in fact only necessary to test the formula $F$ at a finite number of points (see Fig.~\ref{fig:qelim_substitution}).
The drawback of this algorithm is that $n$ is proportional to the square of the number of occurrences of $x$ in the formula; thus, the size of the formula can be cubed for each quantifier eliminated. \citet{LoosWeispfenning93} proposed a \emph{virtual substitution} algorithm%
\footnote{This method replaces $x$ by a formula that does not evaluate to a rational number, but to a sum of a rational number and optionally an infinitesimal $\varepsilon$, taken to be a number greater than 0 but less than any positive real; the infinitesimals are then erased by application of the rules governing comparisons. In practical implementations, one does both substitution and erasure of infinitesimals in one single pass, and infinitesimals never actually appear in formulas; thus the phrase \emph{virtual substitution}.}~\citep[\S 4.4]{NipkowIJCAR08}
that works along the same general ideas but for which $n$ is proportional to the number of occurrences of $x$ in the formula. Our benchmarks show that \citeauthor{LoosWeispfenning93}'s algorithm is generally much more efficient than \citeauthor{FerranteRackoff75}'s, despite the latter method being better known.~\citep{Monniaux_LPAR08,Monniaux_CAV10}

The main drawback of substitution algorithms is that the size of the formulas generally grows very fast as successive variables are eliminated. There are few opportunities for simplifications, save for replacing inequalities equivalent to $\false$ (e.g. $0 < 0$) by $\false$ and similarly for $\true$, then applying trivial rewrites (e.g. $\false \vee x \rightsquigarrow x$, $\false \land x \rightsquigarrow \false$). Our experience is that these algorithms tend to terminate with out-of-memory~\citep{Monniaux_LPAR08,Monniaux_CAV10}. \citet{DBLP:conf/tacas/SchollDPK09} have proposed using and-inverter graphs (AIGs) and SAT modulo theory (SMT) solving in order to simplify formulas and keeping their size manageable.

Another class of algorithms improve on the conversion to DNF then projection approach, by combining both phases: we proposed an eager algorithm based on this idea \citep{Monniaux_LPAR08}, then a lazy version, which computes parts of formulas as needed~\citep{Monniaux_CAV10}. Instead of syntactic conversion to DNF, we use a SMT solver to point to successive elements of the DNF, and instead of using Fourier-Motzkin elimination, which tends to needlessly blow up the size of the formulas, we use libraries for computations over convex polyhedra, which can compute a minimal constraint representation of the projection of a polyhedron given by constraints (that is, inequalities) --- which is the geometrical counterpart of performing elimination of a block of existentially quantified variables. Experiments have shown that such methods are generally competitive with substitution approaches, with different classes of benchmarks showing a preference for one of the two families of methods~\citep{Monniaux_CAV10}; for the kinds of problems we consider in this article, it seems that the SMT+projection methods are more efficient~\citep{Monniaux_LPAR08}.

\subsubsection{Presburger arithmetic}
\label{part:qf_lia}
The theory of \emph{linear integer arithmetic} (LIA), also known as Presburger arithmetic, has the same syntax for formulas, but another semantics, replacing rational numbers ($\bbQ$) by integers ($\bbZ$). Linear inequalities are then insufficient for quantifier elimination --- we also need congruence constraints: $\exists k~x=2k$ simplifies to $x \equiv 0 \pmod 2$.

Decision of formulas in Presburger arithmetic is doubly exponential, and thus quantifier elimination is very expensive in the worst case \cite{Fischer_Rabin_exponential_74}. \citet{Presburger29} provided a quantifier elimination procedure, but its complexity was impractical; \citet{Cooper72} proposed a better algorithm \citep[\S7.2]{BradleyManna07}; \citet{Pugh_Omega91} proposed the ``Omega test'' \citep[\S5.5]{Kroening_Strichman_08}.

The practical complexity of Presburger arithmetic is high. In particular, the formulas produced tend to be very complex, even when there exists a considerably simpler and ``understandable'' equivalent formula, as seen with experiments in \S\ref{part:presburger_abstraction}.

\citeauthor{Cooper72} and \citeauthor{Pugh_Omega91}'s procedures are very geometrical in nature. Integers, however, can also be seen as words over the $\{0,1\}$ alphabet, and sets of integers can thus be recognised by finite automata \citep[\S 8]{Perrin_HTCS}. Addition is encoded as a 3-track automaton recognising that the number on the third track truly is the sum of the numbers on the first two tracks; equivalently, this encodes subtraction. Existential quantifier elimination just removes some of the tracks, making transitions depending on bits read on that track nondeterministic. Negation is complementation (which can be costly, thus explaining the high cost of quantifier alternation). Multiplication by powers of two is also easily encoded, and multiplications by arbitrary constants can be encoded by a combination of additions and multiplications by powers of two.%
\footnote{This method embeds Presburger arithmetic into a stronger arithmetic theory, represented by the automata, then performs elimination over these automata. This partly explains why it is difficult to recover a Presburger formula from the resulting automaton.}
The same idea can be extended to real numbers written as their binary expansion, using automata on infinite words.

This leads to an interesting arithmetic theory, with two sorts of variables: reals (or rationals) and integers. This could be used to model computer programs, with integers for integer variables and reals for floating-point variables (if necessary by using the semantic transformations described in \S~\ref{part:float}). \citet{Boigelot_et_al_TOCL05} described a restricted class of $\omega$-automata sufficient for quantifier elimination. \citet{LIRA} implemented the \textsc{Lira} tool based on such ideas. Unfortunately, this approach suffers from two major drawbacks: the practical performances are very bad for purely real problems \citep{Monniaux_LPAR08}, and it is impossible to recover an arithmetical formula from almost all these automata. We therefore did not pursue this direction.

\subsubsection{Nonlinear real arithmetic}
\label{part:cad}
What happens if we do not limit ourselves to linear arithmetic, but also allow polynomials? Over the integers, the resulting theory is known as Peano arithmetic. It is well known that there can exist no decision procedure for quantified Peano arithmetic formulas.%
\footnote{One does not need the full language of quantified Peano formulas for the problem to become undecidable. It is known that there exists no algorithm that decides whether a given nonlinear Diophantine equation (a polynomial equation with integer coefficients) has solutions, and that deciding such a problem is equivalent to deciding Turing's halting problem. in other words, it is impossible to decide whether a formula $P(x_1,\dots,x_n)=0 \land x_1 \geq 0 \land \dots \land x_n \geq 0$ is satisfiable over the integers. See the literature on Hilbert's tenth problem, e.g. the book by \citet{Matiyasevich}.}
Since a quantifier elimination algorithm would turn a quantified formula without free variables into an equivalent ground formula, and ground formulas are trivially decidable, it follows that there can exist no quantifier elimination algorithm for this theory.

The situation is wholly different over the real numbers. The satisfiability or equivalence of polynomial formulas does not change whether the models are taken over the real numbers, the real algebraic numbers, or, for the matter, any \emph{real closed field}; this theory is thus known as the theory of real closed fields, or \emph{elementary algebra}. \citet{Tarski51_RAND,Tarski51} and \citet{Seidenberg54} showed that this theory admits quantifier elimination, but their algorithms had impractically high complexity. \citet{CAD_Collins75} introduced a better algorithm based on \emph{cylindrical algebraic decomposition}. For instance, quantifier elimination on $\exists x~ax^2+bx+c=0$ by cylindrical algebraic decomposition yields
\begin{multline}
\left(c<0\land \left(\left(b<0\land a\geq \frac{b^2}{4 c}\right)\lor \
(b=0\land a>0)\lor \left(b>0\land a\geq \frac{b^2}{4 \
c}\right)\right)\right)\lor c=0\lor\\
\left(c>0\land \left(\left(b<0\land \
a\leq \frac{b^2}{4 c}\right)\lor (b=0\land a<0)\lor \left(b>0\land a\leq \
\frac{b^2}{4 c}\right)\right)\right)
\end{multline}
Note the cylindrical decomposition: first, there is a case disjunction according to the values of $c$, then, for each disjunct for $c$, a case disjunction for the value of $b$; more generally, cylindrical algebraic decomposition builds a tree of case disjunctions over a sequence of variables $v_1, v_2, \dots$ , where the guard expressions defining the cases for $v_i$ can only refer to $v_1, \dots, v_i$. This decomposition only depends on the polynomials inside the formula and not on its Boolean structure, and computing it may be very costly even if the final result is simple. This is the intuition why despite various improvements \citep{CAD_1998,Basu_Pollack_Roy_algoreal_2006} the practical complexity of quantifier elimination algorithms for the theory of real closed fields remain high. The theoretical space complexity is doubly exponential \cite{Davenport_Heintz88,Brown_Davenport_ISSAC07}. This is why our results in \S\ref{sec:real_polynomial_abstraction} are of a theoretical rather than practical interest.

Minimal extensions to this formula language may lead to undecidability. This is for instance the case when one adds trigonometric functions: it is possible to define $\pi$ as the least positive zero of the sine, then define the set of integers as the numbers $k$ such that $\sin(k\pi)=0$, and thus one can encode Peano arithmetic formulas into that language~\citep{Anai_Weispfenning_00}. Also, naive restrictions of the language do not lead to lower complexity. For instance, limiting the degree of the polynomials to two does not make the problem simpler, since formulas with polynomials of arbitrary degrees can be encoded as formulas with polynomials of degree at most two, simply by introducing new variables standing for subterms of the original polynomials. For instance, $\exists x~ ax^3+bx^2+cx+d=0$ can be encoded, using Horner's form for the polynomial, as $\exists x \exists y \exists z ~ z= ax+b \land y= zx +c \land yx + d=0$. Certain stronger restrictions may however work; for instance, if the variables to be eliminated occur only linearly, then one can adapt the substitution methods described in~\S\ref{part:qf_lra}.

\section{Optimal Abstraction over Template Linear Constraint Domains}
\label{part:template_linear_constraints_abstraction}

When applying abstract interpretation over domains of linear constraints, such as octagons \cite{mine:hosc06,Mine_PhD,Mine_AST_WCRE01}, one generally applies a widening operator, which may lead to imprecisions. In some cases, \emph{acceleration} techniques leading to precise results can be applied \citep{DBLP:conf/sas/GonnordH06,Gonnord_PhD}: instead of attempting to extrapolate a sequence of iterates to its limit, as does widening, the exact limit is computed. In this section, we describe a class of constraint domains and programs for which abstract transfer functions of loop-free codes and of loops can be exactly computed; thus the \emph{optimality}. Furthermore, the analysis outputs these functions in closed form (as explicit expressions combining linear expressions and functional if-then-else constructs), so the result of the analysis of a program fragment can be stored away for future use; thus the \emph{modularity}. Our algorithms are based on quantifier elimination over the theory of real linear arithmetic~(\S\ref{part:qe}).

\subsection{Template Linear Constraint Domains}
\label{part:template_linear_constraints}
Let $F$ be a formula over linear inequalities. We call $F$ a domain definition formula if the free variables of $F$ split into $n$ \emph{parameters} $p_1,\dots,p_n$ and $m$ \emph{state variables} $s_1,\dots,s_m$. We note $\gamma_F: \bbQ^n \rightarrow \parts{\bbQ^m}$ defined by $\gamma_F(\vec{p}) = \{ \vec{s} \in \bbQ^m \mid (\vec{p}, \vec{s}) \models F \}$. As an example, the interval abstract domain for 3 program variables $s_1, s_2, s_3$ uses 6 parameters $m_1, M_1, m_2, M_2, m_3, M_3$; the formula is $m_1 \leq s_1 \leq M_1 \land m_2 \leq s_2 \leq M_2 \land m_3 \leq s_3 \leq M_3$.

In this section, we focus on the case where $F$ is a conjunction $L_1(s_1, \dots, s_m) \leq p_1 \land \dots \land L_n(s_1, \dots, s_m) \leq p_n$ of linear inequalities whose left-hand side is fixed and the right-hand sides are parameters. Such conjunctions define the class of \emph{template linear constraint domains}~\citep{Sankaranarayana+others/05/Scalable}. Particular examples of abstract domains in this class are:
\begin{itemize}
\item the intervals (for any variable $s$, consider the linear forms $s$ and $-s$); because of the inconvenience of talking intervals of the form $[-a,b]$, we shall often taking them of the form $[x_{\min},x_{\max}]$, with the optimal value for $x_{\min}$ being a greatest lower bound instead of a least upper bound;
\item the difference bound matrices (for any variables $s_1$ and $s_2$, consider the linear form $s_1-s_2$);
\item the octagon abstract domain (for any variables $s_1$ and $s_2$, distinct or not, consider the linear forms $\pm s_1 \pm s_2$)~\cite{Mine_AST_WCRE01}
\item the octahedra (for any tuple of variables $s_1, \dots, s_n$, consider the linear forms $\pm s_1 \dots \pm s_n$).~\cite{Clariso_Cortadella_SAS2004}
\end{itemize}

Remark that $\gamma_F$ is in general not injective, and thus one should distinguish the \emph{syntax} of the values of the abstract domain (the vector of parameters~$\vec{p}$) and their \emph{semantics} $\gamma_F(\vec{p})$. As an example, if one takes $F$ to be $s_1 \leq p_1 \land s_2 \leq p_2 \land s_1+s_2 \leq p_3$, then both $(p_1,p_2,p_3)=(1, 1, 2)$ and $(1, 1, 3)$ define the same set for state variables $s_1$ and $s_2$. If $\vec{u} \leq \vec{v}$ coordinate-wise, then $\gamma_F(\vec{u}) \subseteq \gamma_F(\vec{v})$, but the converse is not true due to the non-uniqueness of the syntactic form.

Take any nonempty set of states $W \subseteq \bbQ^m$. Take for all $i=1,\dots,m$: $p_i = \sup_{\vec{s} \in W} L_i(\vec{s})$. Clearly, $W \subseteq \gamma_F(p_1, \dots, p_m)$, and in fact $\vec{p}$ is such that $\gamma_F(\vec{p})$ is the least solution to this inclusion.
$p_i$ belongs in general to $\bbR \cup \{ +\infty \}$, not necessarily to $\bbQ \cup \{ +\infty \}$. (for instance, if $W = \{ s_1 \mid s_1^2 \leq 2 \}$ and $L_1=s_1$, then $p_1 = \sqrt{2}$). We have therefore defined a map $\alpha_F: \parts{\bbR^m} \rightarrow \{ \bot \} \cup (\bbR \cup \{ +\infty \})^n$, and $(\alpha_F, \gamma_F)$ form a \emph{Galois connection}: $\alpha_F$ maps any set to its best upper-approximation.%
\footnote{See e.g. \citep{CousotCousot_JLC92} for more information on Galois connection and their use in static analysis. Not all abstract interpretation techniques can be expressed within Galois connections. Indeed, there are abstract domains where there is not necessarily a best abstraction of a set of concrete states, e.g. the domain of convex polyhedra, which has no best abstraction for a disc, for instance. In this article, all abstractions, except the non-convex ones of \S~\ref{part:non-convex}, are through Galois connections.}
The fixed points of $\alpha_F \circ \gamma_F$ are the \emph{normal forms}; the normal form of $\abstr{x}$ is the minimal abstract element that stands for the same concrete set as~$\abstr{x}$.%
\footnote{In the terminology of some authors, these can be referred to as the \emph{reduced forms} or \emph{closed forms}, and the $\alpha_F \circ \gamma_F$ operation is a \emph{reduction} or \emph{closure}. For instance, in the octagon abstract domain, the closure $\alpha_F \circ \gamma_F$ is implemented by a variant of Floyd-Warshall shortest path~\cite{mine:hosc06,Mine_AST_WCRE01}.}
For instance, $s_1 \leq 1 \land s_2 \leq 1 \land s_1 + s_2 \leq 2$ is in normal form, while $s_1 \leq 1 \land s_2 \leq 1 \land s_1 + s_2 \leq 3$ is not.

\subsection{Optimal Abstract Transformers for Program Semantics}
\label{part:template_linear_constraints_transformers}
We shall consider the input-output relationships of programs with rational or real variables. We first narrow the problem to programs without loops and consider programs built from linear arithmetic assignments, linear tests, and sequential composition. 
Noting $a, b, \dots$ the values of program variables $\texttt{a}, \texttt{b}\dots$ at the beginning of execution and $a', b', \dots$ the output values, the semantics of a program $P$ is defined as a formula $\sem{P}$ such that $(a, b, \dots, a', b', \dots) \models P$ if and only if the memory state $(a', b', \dots)$ can be reached at the end of an execution starting in memory state $(a, b, \dots)$:
\begin{description}
\item[Arithmetic]
  $\sem{a:=L(a, b, \dots)+K}_F \definedAs a'=L(a, b, \dots)+K \land b'=b \land c'=c \land \dots$ where $K$ is a real (in practice, rational) constant and $L$ is a linear form, and $b,c,d\dots$ are all the variables except~$a$;

\item[Tests]
  $\sem{\texttt{if~} c \texttt{~then~} p_1 \texttt{~else~} p_2} \definedAs
   (c \land \sem{p_1}_F) \lor (\neg c \land \sem{p_2}_F)$;

\item[Non deterministic choice]
  $\sem{a:=\texttt{random}} \definedAs b'=b \land c'=c \land \dots$, for all variables except~$a$;

\item[Failure]
  $\sem{\texttt{fail}} \definedAs \textsf{false}$;

\item[Skip]
  $\sem{\texttt{skip}} \definedAs a'=a \land b'=b \land c'=c \land \dots$

\item[Sequence]
$\sem{P_1; P_2}_F \definedAs \exists a'', b'', \dots ~ f_1 \land f_2$ where
$f_1$ is $\sem{P_1}_F$ where $a'$ has been replaced by $a''$, $b'$ by $b''$ etc., 
$f_2$ is $\sem{P_2}_F$ where $a$ has been replaced by $a''$, $b$ by $b''$ etc.
\end{description} 

In addition to  linear inequalities and conjunctions, such formulas contain disjunctions (due to tests and multiple branches) and existential quantifiers (due to sequential composition).

Note that so far, we have represented the concrete denotational semantics \emph{exactly}. This representation of the transition relation using existentially quantified formulas is evidently as expressive as a representation by a disjunction of convex polyhedra (the latter can be obtained from the former by quantifier elimination and conversion to disjunctive normal form), but is more compact in general.

Consider now a domain definition formula $F \definedAs L_1(s_1, s_2, \dots) \leq p_1 \land \dots \land L_n(s_1, s_2, \dots) \leq p_n$ on the program inputs, with parameters $\vec{p}$ and free variables $\vec{s}$, and another $F' \definedAs L'_1(s'_1, s'_2, \dots) \leq p'_1 \land \dots \land L'_n(s'_1, s'_2, \dots) \leq p'_{n'}$ on the program outputs, with parameters $\vec{p'}$ and free variables $\vec{s'}$. Sound forward program analysis consists in deriving a \emph{safe post-condition} from a precondition: starting from any state verifying the precondition, one should end up in the post-condition. Using our notations, the \emph{soundness condition} is written 
\begin{equation}\label{eqn:soundness}
\forall \vec{s},\vec{s'}~ F \land \sem{P} \implies F'
\end{equation}
The free variables of this relation are $\vec{p}$ and $\vec{p'}$: the formula links the value of the parameters of the input constraints to admissible values of the parameters for the output constraints.
Note that this soundness condition can be written as a universally quantified formula, with no quantifier alternation. Alternatively, it can be written as a conjunction of correctness conditions for each output constraint parameter:
\begin{equation}\label{eqn:loopless_correctness}
C'_i \definedAs \forall \vec{s},\vec{s'}~ F \land \sem{P} \implies L'_i(\vec{s'}) \leq p'_i.
\end{equation}

Let us take a simple example: if $P$ is the program instruction $z:=x+y$, $F \definedAs x \leq p_1 \land y \leq p_2$, $F' \definedAs z \leq p'_1$, then $\sem{P} \definedAs z' = x+y$, and the soundness condition is
$\forall x, y, z'~ (x \leq p_1 \land y \leq p_2 \land z'=x+y \implies z' \leq p'_1)$. Remark that this soundness condition is equivalent to a formula without quantifiers $p'_1 \geq p_1 + p_2$, which may be obtained through quantifier elimination. Remark also that while any value for $p'_1$ fulfilling this condition is \emph{sound} (for instance, $p'_1=1000$ for $p_1=p_2=1$), only one value is \emph{optimal} ($p'_1=2$ for $p_1=p_2=1$). An optimal value for the output parameter $p'_i$ is defined by $O'_i \definedAs C'_i \land \forall q'_i~ (C'_i[q'_i / p'_i] \implies p'_i \leq q'_i)$. Again, quantifier elimination can be applied; on our simple example, it yields $p'_1 = p_1 + p_2$.

If there are $n$ input constraint parameters $p_1, \dots, p_n$, then the optimal value for each output constraint parameter $p'_i$ is defined by a formula $O'_i$ with $n+1$ free variables  $p_1, \dots, p_n, p'_i$:
\begin{equation}\label{eqn:loopless_optimality}
O'_i \definedAs C'_i \land \forall p''_i (C'_i[p''_i/p'_i] \Rightarrow p'_i \leq p'_i)
\end{equation}

\begin{lemma}\label{lem:loopless_correctness}
The formula $O'_i$ defined at Eq.~\ref{eqn:loopless_optimality}, using the correctness subformula $C'_i$ from Eq.~\ref{eqn:loopless_correctness}, defines $p'_i$ as the least possible value for the parameter of the constraint $L_i$ after executing the transition $\sem{p}$ from a state verifying constraints~$F$.
\end{lemma}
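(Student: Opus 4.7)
The plan is to prove the lemma by unfolding the semantic meaning of $C'_i$ and $O'_i$ separately, and then observing that conjunctively they pin down $p'_i$ as the minimum of an upward-closed set of sound bounds.

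First I would interpret $C'_i$ semantically. For fixed input parameters $\vec{p}$, let
\[ R_i(\vec{p}) = \{ L'_i(\vec{s'}) \mid \exists \vec{s},\ (\vec{p},\vec{s}) \models F \text{ and } (\vec{s},\vec{s'}) \models \sem{P} \} \]
be the set of values the linear form $L'_i$ can take on reachable output states of $P$ starting from some state in $\gamma_F(\vec{p})$. By Eq.~\ref{eqn:loopless_correctness}, $C'_i(\vec{p}, q)$ holds iff every element of $R_i(\vec{p})$ is $\leq q$, i.e.\ iff $q$ is an upper bound of $R_i(\vec{p})$. Hence the set $B_i(\vec{p}) \definedAs \{ q \mid C'_i(\vec{p}, q) \}$ is upward-closed in $\bbR \cup \{ +\infty \}$.

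Next I would unfold $O'_i$ from Eq.~\ref{eqn:loopless_optimality}, reading the innermost conclusion as $p'_i \leq p''_i$ (the printed $p'_i \leq p'_i$ is manifestly a typo, since it would render the second conjunct vacuous and contradict the motivating example where only $p'_1=2$ is optimal). The conjunct $C'_i$ forces $p'_i \in B_i(\vec{p})$; the universally-quantified conjunct says that $p'_i$ is a lower bound of $B_i(\vec{p})$. Together, $p'_i = \min B_i(\vec{p})$, which by upward-closedness equals $\sup R_i(\vec{p})$ whenever that supremum is attained, and which coincides in all cases with the best-abstraction value furnished by the Galois connection $\alpha_F$ recalled in \S\ref{part:template_linear_constraints}. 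This is precisely the least admissible value for the parameter of the $L'_i$ constraint after executing $P$ from a state satisfying $F$.

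The only delicate point is the handling of boundary cases. In the present linear-arithmetic setting, $R_i(\vec{p})$ is the image under the linear form $L'_i$ of a finite union of rational polyhedra (obtained from $\sem{P}$ by the construction of \S\ref{part:template_linear_constraints_transformers}), so whenever the supremum is finite it is attained at a vertex, and $O'_i$ has a unique rational solution. The degenerate cases $R_i(\vec{p}) = \emptyset$ (pre-state infeasible, every $q$ admissible) and $\sup R_i(\vec{p}) = +\infty$ (unbounded) are absorbed by the same unfolding and yield, respectively, $-\infty$ and the absence of a finite solution, consistent with the standard Galois-connection reading. Beyond this bookkeeping the argument is purely definitional, so I do not foresee any substantive obstacle.
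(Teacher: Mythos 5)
Your proof is correct and follows essentially the same definitional approach as the paper's (very terse) proof: $C'_i$ characterizes the set of sound upper bounds for $L'_i$ over the reachable post-states, and $O'_i$ picks the least element of that upward-closed set. You are also right that Eq.~\ref{eqn:loopless_optimality} contains a typo (the conclusion should read $p'_i \leq p''_i$, not $p'_i \leq p'_i$); the only small inaccuracy in your bookkeeping is the claim that a finite supremum of $R_i(\vec{p})$ is always attained at a vertex---strict inequalities coming from $\sem{P}$ (e.g.\ from an \emph{else} branch) can make the sup non-attained---but this does not affect the argument, since $\min B_i(\vec{p})$ is the least upper bound whether or not it is achieved.
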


\begin{proof}
$O'_i$ explicitly defines the least possible value of $C'_i$. $C'_i$ explicitly defines all the acceptable values for parameter~$p'_i$ in the postcondition constraint.
\end{proof}

This formula defines a \emph{partial function} from $\bbQ^n$ to $\bbQ$, in the mathematical sense: for each choice of $p_1, \dots, p_n$, there exist at most a single $p'_i$. The values of $p_1, \dots, p_n$ for which there exists a corresponding $p'_i$ make up the \emph{domain of validity} of the abstract transfer function. Indeed, this function is in general not defined everywhere; consider for instance the program:
\lstset{language=Cext}
\begin{lstlisting}
if (x >= 10) { y = nondeterministic_choice_in_all_reals; }
else { y = 0; }
\end{lstlisting}
If $F = x \leq p_1$ and $F' = y \leq p'_1$, then $O'_1 \equiv p_1 < 10 \land p'_1=0$, and the function is defined only for $p_1 < 10$, with constant value~$0$.

At this point, we have a characterisation of the optimal abstract transformer corresponding to a program fragment $P$ and the input and output domain definition formulas as $n$ formulas (where $n$ is the number of output parameters) $O'_i$ each defining a function (in the mathematical sense) mapping the input parameters $\vec{p}$ to the output parameter~$p'_i$.

Another example: the absolute value function $y:=|x|$, again with the interval abstract domain. The semantics of the operation is $(x \geq 0 \land y=x) \lor (x < 0 \land y=-x)$; the precondition is $x \in [x_{\min}, x_{\max}]$ and the post-condition is $y \in [y_{\min}, y_{\max}]$.
Acceptable values for $(y_{\min},y_{\max})$ are characterised by
formula
\begin{equation}
C' \definedAs
\forall x~x_{\min} \leq x \leq x_{\max} \implies y_{\min} \leq |x| \leq y_{\max}
\end{equation}
The optimal value for $y_{\max}$ is defined by $O' \definedAs C' \land \forall y'_{\max}~ C'[y'_{\max}/y_{\max}] \allowbreak\implies\allowbreak y_{\max} \leq y'_{\max}$. Quantifier elimination over this last formula gives as characterisation for the least, optimal, value for $y_{\max}$:
\begin{equation}\label{eqn:abs_dnf}
(x_{\min} + x_{\max} \geq 0 \land y_{\max}=x_{\max}) \lor
(x_{\min} + x_{\max} < 0 \land y_{\max} = -x_{\min}).
\end{equation}

We shall see in the next sub-section that such a formula can be automatically compiled into code (Listing~\ref{ex:optimalabs}).
\lstset{language=Cext}
\begin{lstlisting}[float,caption={Optimal transformer for $y_{\max}$, corresponding to program $y = |x|$ with $x_{\min} \leq x \leq x_{\max}$},label=ex:optimalabs]
if (xmin + xmax >= 0) {
  ymax = xmax;
} else {
  ymax = -xmin;
}
\end{lstlisting}

\subsection{Generation of the Implementation of the Abstract Domain}
\label{part:template_linear_constraints_gen}
Consider formula~\ref{eqn:abs_dnf}, defining an abstract transfer function.
On this disjunctive normal form we see that the function we have defined is \emph{piecewise linear}: several regions of the range of the input parameters are distinguished (here, $x_{\min} + x_{\max} < 0$ and $x_{\min} + x_{\max} \geq 0$), and on each of these regions, the output parameter $y_{\max}$ is a linear function of the input parameters. Given a disjunct (such as $y_{\max} = -x_{\min} \land x_{\min} + x_{\max} < 0$), the domain of validity of the disjunct can be obtained by existential quantifier elimination over the result variable (here $\exists y_{\max}~(y_{\max} = -x_{\min} \land x_{\min} + x_{\max} < 0)$). The union of the domains of validity of the disjuncts is the domain of validity of the full formula. The domains of validity of distinct disjuncts can overlap, but in this case, since $O'_i$ defines a partial function in the mathematical sense, that is, a relation $R(x,y)$ such that for any $x$ there is at most one $y$ such that $R(x,y)$, the functions defined by such disjuncts coincide on their overlapping domains of validity.

This suggests a first algorithm for conversion to an executable form:
\begin{enumerate}
\item Put $O'_i$ into quantifier-free, disjunctive normal form $C_1 \lor \dots \lor C_n$.
\item For each disjunct $C_j$, obtain the validity domain $V_j$ as a conjunction of linear inequalities; this corresponds to a projection of the polyhedron defined by $C_j$ onto variables $p_1,\dots,p_n$, parallel to $p'_i$. Then, solve $C_j$ for $p'_i$ (obtain $p'_i$ as a linear function $v_i$ of the $p_1, \dots, p_n$).
\item Output the result as a cascade of if-then-else and assignments.
\end{enumerate}

Consider now the example at the end of \S\ref{part:template_linear_constraints_transformers} and especially Formula~\ref{eqn:abs_dnf}, defining $y_{\max}$ as a function of $x_{\min}$ and $x_{\max}$:
$(x_{\min} + x_{\max} \geq 0 \land y_{\max}=x_{\max}) \lor
(x_{\min} + x_{\max} < 0 \land y_{\max} = -x_{\min})$.
Let us first take the first disjunct $C_1 \definedAs x_{\min} + x_{\max} \geq 0 \land y_{\max}=x_{\max}$. Its validity domain is $\exists y_{\max}~C_1$, that is, $x_{\min} + x_{\max} \geq 0$. Furthermore, on this validity domain, we can solve for $y_{\max}$ as a function of $x_{\min}$ and $x_{\max}$, and we obtain $y_{\max} = x_{\max}$. We therefore can print out the following test:
\lstset{language=Cext}
\begin{lstlisting}
if (xmin + xmax >= 0) {
  ymax = xmax;
}
\end{lstlisting}

Now take the second disjunct $C_2 \definedAs (x_{\min} + x_{\max} < 0 \land y_{\max} = -x_{\min})$. It can similarly be turned into another test, which we put in the ``else'' branch of the preceding one. We thus obtain:
\lstset{language=Cext}
\begin{lstlisting}
if (xmin + xmax >= 0) {
  ymax = xmax;
} else
if (xmin + xmax < 0) {
  ymax = -xmax;
}
\end{lstlisting}

Observe that in the above program, the second test is redundant. If we had been more clever, we would have realized that in the ``else'' branch of the first test, $x_{\min}+x_{\max} < 0$ always holds, and thus it is useless to test this condition. Furthermore, in an if-then-else cascade obtained with the above method, the same condition may have to be tested several times. We shall now propose an algorithm that constructs an if-then-else \emph{tree} such that no useless tests are generated.


\begin{algorithm}
\caption{$\textsc{ToITEtree}(F,p',T)$: turn a formula defining $p'$ as a function of the other free variables of $F$ into a tree of if-then-else constructs, assuming that $T$ holds.}
\label{alg:ToIfThenElseTree}

\begin{algorithmic}
\STATE $D (= C_1 \lor \dots \lor C_n) \gets \textsc{QElimDNFModulo}(\{\},F,T)$
\FORALL{$C_i \in D$}
  \STATE $P_i \gets \textsc{QElimDNFModulo}(p', F, T)$
\ENDFOR
\STATE $P \gets \textsc{Predicates}(P_1, \dots, P_n)$
\IF{$P = \emptyset$}
  \ENSURE $\exists p'~F$ is always true
  \STATE $O \gets \textsc{Solve}(D, p')$
\ELSE
  \STATE $K \gets \textsc{Choose}(P)$
  \STATE $O \gets \textsf{IfThenElse}(K, \textsc{ToITEtree}(F,p',T \land K),\allowbreak \textsc{ToITEtree}(F,p',T \land \neg K))$
\ENDIF
\end{algorithmic}
\end{algorithm}

The idea of the algorithm is as follows:
\begin{itemize}
\item Each path in the if-then-else tree corresponds to a conjunction $C$ of conditions (if one goes through the ``if'' branch of \texttt{if (a)} and the ``else'' branch of \texttt{if (b)}, then the path corresponds to $a \land \neg b$).
\item The formula $O'_i$ is simplified relatively to~$C$, a process that prunes out conditions that are always or never satisfied when $C$~holds.
\item If the path is deep enough, then the simplified formula becomes a conjunction. This conjunction, as a relation, is a partial function from the $p_1,\dots,p_n$ to the $p'_i$ we wish to compute. Again, we solve this conjunction to obtain $p'_i$ as an explicit function of the $p_1,\dots,p_n$. For instance, in the above example, we obtain $y_{\max}$ as a function of $x_{\min}$ and~$x_{\max}$.
\end{itemize}

We say that two formulas $A$ and $B$ are equivalent, denoted by $A \equiv B$, if they have the same models, and we say that they are equivalent modulo a third formula $T$, denoted by $A \equiv_T B$, if $A \land T \equiv B \land T$. Intuitively, this means that we do not care what happens where $F$ is false, thus the terminology ``don't care set'' sometimes found for~$\neg F$.

$\textsc{QElimDNFModulo}\allowbreak(\vec{v},F,T)$ is a function that, given a possibly empty vector of variables $\vec{v}$, a formula $F$ and a formula $T$, outputs a quantifier-free formula $F'$ in disjunctive normal form such that $F' \equiv_T \exists \vec{v}~F$ and no useless predicates are used. In \citep{Monniaux_LPAR08}, we have presented such a function as a variant of quantifier elimination.

We need some more auxiliary functions. $\textsc{Predicates}(F)$ returns the set of atomic predicates of~$F$. $\textsc{Solve}(C, p')$ solves a conjunction of inequalities $C$ for variable $p'$. If $C$ does not contain redundant inequalities, then it is sufficient to look for inequalities of the form $p' \geq L$ or $p' \leq L$ and output $p' = L$.%
\footnote{In practice, any library for convex polyhedra can provide a basis of the equalities implied by a polyhedron given by constraints, in other words a system of linear inequalities defining the affine span of the polyhedron. It is therefore sufficient to take that basis and keep any equality where $p'$~appears.} Finally $\textsc{Choose}(P)$ chooses any predicate in the set of predicates $P$; one good heuristic seems to be to choose the most frequent atomic predicate where $p'_i$ does not occur.

Let us take, as a simple example, Formula~\ref{eqn:abs_dnf}. We wish to obtain $y_{\max}$ as a function of $x_{\min}$ and $x_{\max}$, so in the algorithm \textsc{ToITEtree} we set $p' \definedAs y_{\max}$. $C_1$ is the first disjunct $x_{\min} + x_{\max} \geq 0 \land y_{\max}=x_{\max}$, $C_2$ is the second disjunct $x_{\min} + x_{\max} < 0 \land y_{\max} = -x_{\min}$. We project $C_1$ and $C_2$ parallel to $y_{\max}$, obtaining respectively $P_1 = (x_{\min}+x_{\max} \geq 0)$ and $P_2 = (x_{\min}+x_{\max} < 0)$. We choose $K$ to be the predicate $x_{\min}+x_{\max} \geq 0$ (in this case, the choice does not matter, since $P_1$ and $P_2$ are the negation of each other).
\begin{itemize}
\item The first recursive call to \textsc{ToITEtree} is made with $T \definedAs (x_{\min}+x_{\max} \geq 0)$. Obviously, $F \land T \equiv (y_{\max}=x_{\max}) \land T$ and thus $(\exists y_{\max} F) \land T \equiv T$.

$\textsc{QElimDNFModulo}(y_{\max}, F, T)$ will then simply output the formula ``true''. It then suffices to solve for $y_{\max}$ in $y_{\max}=x_{\max}$. This yields the formula for computing the correct value of $y_{\max}$ in the cases where $x_{\min}+x_{\max} \geq 0$.

\item The second recursive call is made with $T \definedAs (x_{\min}+x_{\max} < 0 $. The result is $y_{\max}=-x_{\min}$, the formula for computing the correct value of $y_{\max}$ in the cases where $x_{\min}+x_{\max} < 0$.
\end{itemize}
These two results are then reassembled into a single if-then-else statement, yielding the program at the end of \S\ref{part:template_linear_constraints_transformers}.

The algorithm terminates because paths of depth $d$ in the tree of recursive calls correspond to truth assignments to $d$ atomic predicates among those found in the domains of validity of the elements of the disjunctive normal form of $F$. Since there is only a finite number of such predicates, $d$ cannot exceed that number. A single predicate cannot be assigned truth values twice along the same path because the simplification process in $\textsc{QElimDNFModulo}$ erases this predicate from the formula.

This algorithm seems somewhat unnecessarily complex. It is possible that techniques based upon AIGs, performing simplification with respect to ``don't care sets'' \citep{DBLP:conf/tacas/SchollDPK09}, could also be used.

\subsection{Least Inductive Invariants}
\label{part:template_linear_constraints_invariants}

We have so far considered programs without loops. The analysis of program loops, as well as proofs of correctness of programs with loops using Floyd-Hoare semantics, is based upon the notion of \emph{inductive invariants}. A set of states $I$ is deemed to be an inductive invariant for a loop if it contains the initial state and it is stable by the loop iteration --- in other words, if it is impossible to move from a state inside $I$ to a state outside $I$ by one iteration of the loop. The intersection of all inductive invariants is also an inductive invariant --- in fact, it is the least inductive invariant.

Any property true over the least inductive invariant of a loop is true throughout the execution of that loop; for this reason, some authors call such properties \emph{invariants} (without the ``inductive'' qualifier), while some other authors call invariants what we call inductive invariants.

It would be interesting to be able to compute the least invariant (inductive or noninductive) within our chosen abstract domain; in other words, compute the least element in our abstract domain that contains the least inductive invariant of the loop or program. Unfortunately, this is in general impossible; indeed, doing so would entail solving the halting problem. Just take any program $P$, create a fresh variable, and consider the program that initialises $x$ to $0$, runs $P$, and then sets $x$ to $1$. Clearly, the least invariant interval for this program and variable $x$ is $[0,0]$ if $P$ does not terminate, and $[0,1]$ if it does terminate.

We thus settle for a simpler problem: find the \emph{least inductive invariant within our abstract domain}, that is, the least element in our abstract domain that is an inductive invariant. Note that even on very simple examples, this can be vastly different from computing the least invariant within the abstract domain. Take for instance the simple program of Fig.~\ref{fig:unstable}, which has a couple of real variables $(x,y)$ and rotates them by $45^\circ$ at each iteration. The $(x,y)$ couple always stays within the square $[-1,1]\times[-1,1]$, so this square is an invariant within the interval domain. Yet this square is not an inductive invariant, for it is not stable if rotated by $45^\circ$; in fact, the only inductive invariant within the interval domain is $(-\infty,+\infty)\times(-\infty,+\infty)$, which is rather uninteresting! Note that on the same figure, the octagon abstract domain would succeed (and produce a regular octagon centered on $(0,0)$).

\begin{figure}
\begin{center}
\includegraphics[width=0.5\columnwidth]{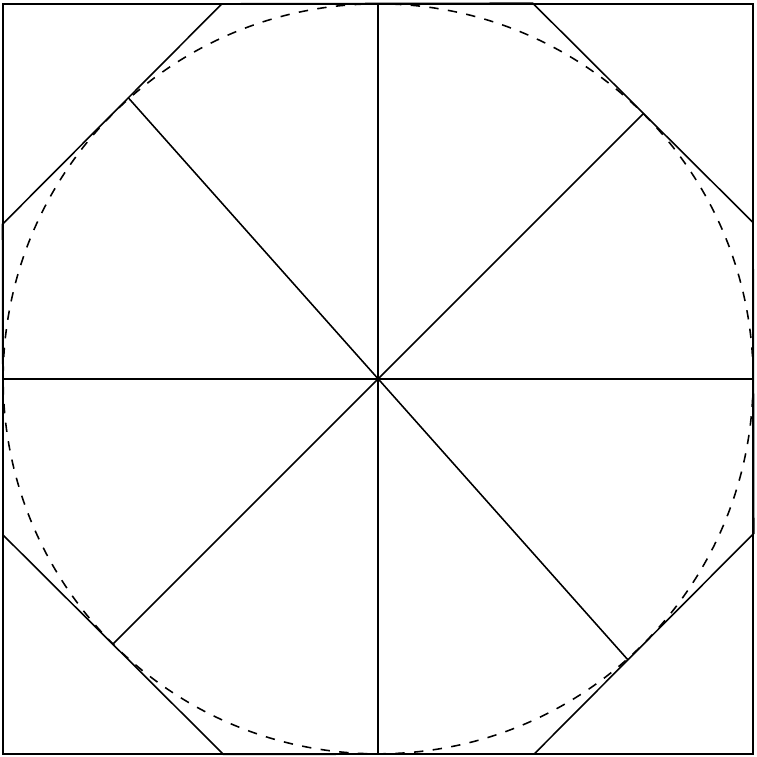}
\end{center}
\caption{The least fixed point representable in the domain ($\lfp(\alpha \circ f \circ \gamma)$) is not necessarily the least approximation of the least fixed point  ($\alpha(\lfp f)$) inside the abstract domain. For instance, if we take a program initialised by $x \in [-1,1]$ and $y=0$, and at each iteration, we rotate the point by $45^\circ$, the least invariant is an 8-point star, and its best approximation inside the abstract domain of intervals is the square $[-1,1]^2$. However, this square is not an inductive invariant: no rectangle (product of intervals) is stable under the iterations, thus there is no abstract inductive invariant within the interval domain.
Using the domain of convex polyhedra, one would obtain a regular octagon.}
\label{fig:unstable}
\end{figure}

\subsubsection{Stability Inequalities}
Consider a program fragment: \lstinline@while (c) { p; }@.
We have domain definition formulas $F \definedAs L_1(s_1, \dots, s_m) \leq p_1 \land \dots \land L_n(s_1, \dots, s_m) \leq p_n$ for the precondition of the program fragment , and $F' \definedAs L'_1(s_1, \dots, s_m) \leq p'_1 \land \dots \land L'_n(s_1, \dots, s_m) \leq p'_{n'}$ for the invariant.

Define $G = \sem{c} \land \sem{p}$. $G$ is a formula whose free variables are $s_1, \dots, s_m, s'_1, \dots, s'_m$ such that  $(s_1, \dots, s_m, s'_1, \dots, s'_m) \models G$ if and only if the state $(s'_1, \dots, s'_m)$ can be reached from the state $(s_1, \dots, s_m)$ in exactly one iteration of the loop. A set $W \subseteq \bbQ^m$ is said to be an \emph{inductive invariant} for the head of the loop if $\forall \vec{s} \in W, \forall \vec{s'}~(\vec{s}, \vec{s'}) \models G \implies \vec{s'} \in W$. We seek inductive invariants of the shape defined by $F'$, thus solutions for $\vec{p'}$ of the \emph{stability condition}:
\begin{equation}\label{eqn:stability}
\forall \vec{s}, \vec{s'}~ F' \land G \implies F'[\vec{s'}/\vec{s}].
\end{equation}

Not only do we want an inductive invariant, but we also want the initial states of the program to be included in it. The condition then becomes
\begin{equation}\label{eqn:inductive_invariant}
H \definedAs (\forall \vec{s}, F \implies F') \land
(\forall \vec{s}, \vec{s'}~ F' \land G \implies F'[\vec{s'}/\vec{s}])
\end{equation}
This is an invariant condition for the head $A$ of a loop written as:
\begin{lstlisting}
loop:
  /* A */
  if (! c) goto end;
  /* B */
  loop body
  goto loop;
end:
\end{lstlisting}

Alternatively, one can consider an invariant condition for location~$B$. The condition then becomes
\begin{equation}\label{eqn:inductive_invariant_alt}
H_{\textrm{alt}} \definedAs \forall \vec{s}~
  \sem{c} \land
  (F \lor (\exists \vec{s''}~  \sem{p}[\vec{s''}/\vec{s},\vec{s}/\vec{s'}] \land F'[\vec{s''}/\vec{s}])) \implies F'
\end{equation}
This alternate condition is very similar to the previous one (a universally quantified formula with no alternation, since the $\exists$ is negated). For the sake of simplicity, we shall only discuss the treatment of formula~$H$; formula $H_{\textrm{alt}}$ can be treated in the same way.

This formula links the values of the input constraint parameters $p_1, \dots, p_n$ to acceptable values of the invariant constraint parameters $p'_1, \dots, p'_{n'}$.
In the same way that our soundness or correctness condition on abstract transformers allowed any sound post-condition, whether optimal or not, this formula allows any inductive invariant of the required shape as long as it contains the precondition, not just the least one.

The intersection of sets defined by $\vec{p'}_1$ and $\vec{p'}_2$ is defined by $\min(\vec{p'}_1, \vec{p'}_2)$. More generally, the intersection of a family of sets, unbounded yet closed under intersection, defined by $\vec{p'} \in Z$ is defined by $\min \{ p' \mid p' \in Z\}$.
We take for $Z$ the set of acceptable parameters $\vec{p'}$ such that $\vec{p'}$ defines an inductive invariant and $\forall \vec{s}, F \implies F'$; that is, we consider only inductive invariants that contain the set $I = \{ \vec{s} \mid F \}$ of precondition states.

We deduce that $p'_i$ is uniquely defined by:
$p'_i = \min(\exists p'_1, \allowbreak\dots,\allowbreak p'_{i-1},\allowbreak p'_{i+1},\allowbreak \dots,\allowbreak p'_{n'}~H)$
which can be rewritten as
\begin{equation}\label{eqn:optimality}
O'_i \definedAs (\exists p'_1, \dots, p'_{i-1}, p'_{i+1}, \dots, p'_{n'}~H) \land
(\forall \vec{q'}~H[\vec{q'}/\vec{p'}] \implies p'_i \leq q'_i)
\end{equation}
The free variables of this formula are $p_1, \dots, p_n, p'_i$. This formula defines a function (in the mathematical sense) defining $p'_i$ from $p_1, \dots, p_n$. As before, this function can be compiled to an executable version using cascades or trees of tests.

\begin{lemma}\label{lem:loop_correctness}
The formula $O'_i$ defined at Eq.~\ref{eqn:optimality} defines the least value of $p'_i$ for an inductive invariant of the shape defined by~$F$ for the transition relation defined by~$G$.
\end{lemma}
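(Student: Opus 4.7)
The plan is to reduce the claim to three observations and then assemble them: (i) $H$ (Eq.~\ref{eqn:inductive_invariant}) characterises exactly the parameter vectors $\vec{p'}$ for which $\gamma_{F'}(\vec{p'})$ is an inductive invariant of the loop containing $\gamma_F(\vec{p})$; (ii) the set of such acceptable $\vec{p'}$ (with $\vec{p}$ fixed) is closed under coordinate-wise minimum, hence admits a pointwise least element $\vec{p'}^{*}$ whenever it is nonempty; and (iii) the formula $O'_i$ (Eq.~\ref{eqn:optimality}) picks out exactly the $i$-th coordinate of $\vec{p'}^{*}$.

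For (i), I would simply unfold $H$: its first conjunct $\forall \vec{s}~F \Rightarrow F'$ is the inclusion $\gamma_F(\vec{p}) \subseteq \gamma_{F'}(\vec{p'})$ written logically, and its second conjunct is the stability of $\gamma_{F'}(\vec{p'})$ under $G = \sem{c} \land \sem{p}$, i.e.\ the definition of an inductive invariant given just before Eq.~\ref{eqn:stability}. For (ii), I would use the key structural feature of template linear constraint domains: since each parameter $p'_j$ bounds one distinct linear form $L'_j$ independently, one has $\gamma_{F'}(\min(\vec{p'}^{(1)},\vec{p'}^{(2)})) = \gamma_{F'}(\vec{p'}^{(1)}) \cap \gamma_{F'}(\vec{p'}^{(2)})$. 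Since the intersection of two inductive invariants each containing the precondition is again such an inductive invariant, the acceptable set is closed under coordinate-wise min, and in fact under arbitrary coordinate-wise infima (valued in $\bbR\cup\{+\infty\}$); when nonempty, it therefore possesses a pointwise least element $\vec{p'}^{*}$, as claimed in \S\ref{part:template_linear_constraints_invariants} just before Eq.~\ref{eqn:optimality}.

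For (iii), I would read $O'_i$ conjunct by conjunct: the existential $\exists p'_1,\dots,p'_{i-1},p'_{i+1},\dots,p'_{n'}~H$ is satisfied by $p'_i$ iff $p'_i$ extends to an acceptable completion $\vec{p'}$, while the universal $\forall \vec{q'}~H[\vec{q'}/\vec{p'}] \Rightarrow p'_i \leq q'_i$ forces $p'_i$ to be a lower bound on the $i$-th coordinate of every acceptable $\vec{q'}$. By (ii), both conditions are satisfied simultaneously by exactly one value, namely $\vec{p'}^{*}_i$, so $O'_i$ defines $p'_i$ as a partial function of $\vec{p}$ and that function returns the optimal value. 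The main point of care, and the only place where I expect a subtlety, is the boundary case where no finite template-shape inductive invariant containing $\gamma_F(\vec{p})$ exists (as in the $45^\circ$ rotation of Fig.~\ref{fig:unstable}): then the existential conjunct of $O'_i$ is unsatisfiable over $\bbQ$, so $O'_i$ is undefined at this $\vec{p}$. I would argue this is the correct semantics in exact analogy with the loop-free case discussed after Lemma~\ref{lem:loopless_correctness}: the partial-function behaviour signals ``$+\infty$'' for the optimal parameter, consistently with the fact that the only inductive invariant in the intended shape has the $i$-th parameter unbounded.
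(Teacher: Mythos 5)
Your proof is correct and takes essentially the same route as the paper's own (much terser) argument: unfold $H$ to see it characterises the set $Y$ of admissible parameter vectors, use closure of the template linear constraint domain under intersection (coordinate-wise minimum) to obtain a least element of $Y$, and then read the two conjuncts of $O'_i$ as jointly picking out exactly the $i$-th coordinate of that least element. The paper simply asserts that $\vec{p'} = \inf Y$ while you take the extra step of justifying attainment of the infimum via the identity $\gamma_{F'}(\min(\vec{p'}^{(1)},\vec{p'}^{(2)})) = \gamma_{F'}(\vec{p'}^{(1)}) \cap \gamma_{F'}(\vec{p'}^{(2)})$ and inf-continuity, but the decomposition is the same.
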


\begin{proof}
Similarly as for lemma~\ref{lem:loopless_correctness},
the formula $H$ defined at Eq.~\ref{eqn:inductive_invariant} defines a set $Y$ of admissible $p'_1, \dots, p'_{n'}$ such that $F \definedAs L_1(s_1, \dots, s_m) \leq p'_1 \land \dots \land L_n(s_1, \dots, s_m) \leq p'_{n'}$ is an inductive invariant for the loop.
Formula $O'_i$ defines $p'_i$ to be $\inf \{ p''_i \mid (p''_1, \dots, p''_n) \in Y \}$. in other words, $(p'_1, \dots, p'_{n'}) = \inf Y$.
\end{proof}
\bigskip

Thus the overall operation of the analysis method:

We start from quantified formulas $O'_i$ defining the least inductive invariant of the loop \emph{in the abstract domain} (Lemma~\ref{lem:loop_correctness}). We eliminate quantifiers from these formulas; since this does not change their models, the resulting formulas without quantifiers also define the least inductive invariant in the abstract domain.
\begin{itemize}
\item Either the problem had no precondition parameters $p_1, \dots, p_n$, and
  thus each formula $O'_i$ has only one variable $p'_i$. It consists of linear
  (in)equalities, and has a single model for $p'_i$,
  which is straightforward to extract.
  Collect these values, one obtains the values defining the least 
  invariant $(p'_1,\dots,p'_{n'})$ in the abstract domain.
\item Either the problem has precondition parameters and one employs one of the
  methods in \S\ref{part:template_linear_constraints_gen} to obtain equivalent
  executable code.
\end{itemize}

The overall correctness of the method is quite tautological. We start from a non-executable specification of the least inductive invariant in the abstract domain in the form of quantified formulas. We eliminate the quantifiers from these formulas, then process them into equivalent executable code. At all steps, we have preserved logical equivalence with the original definition. In short, we have synthetized the implementation of the best transformer from its specification.

\subsubsection{Simple Loop Example}
\label{part:loop_example}
To show how the method operates in practice, let us consider first a very simple example (\lstinline|nondet()| is a nondeterministic choice):
\begin{lstlisting}
int i=0;
while (i <= n) {
  if (nondet()) {
    i=i+1;
    if (i == n) {
      i=0;
    }
  }
}
\end{lstlisting}

Let us abstract \texttt{i} at the head of the loop using an interval $[i_{\min},i_{\max}]$. For simplicity, we consider the case where the loop is at least entered once, and thus $i=0$ belongs to the invariant.
For better precision, we model each comparison $x \neq y$ over the integers as
$x >= y+1 \lor x <= y-1$; similar transformations apply for other operators.
The formula expressing that such an interval is an inductive invariant is:%
\begin{multline}
i_{\min}\leq 0\land 0\leq
   i_{\max}\land \forall i \forall i'~
   ((i_{\min}\leq i\land i\leq i_{\max}\land\\
   (((i+1\leq n-1\lor i+1\geq
   n+1)\land i'=i+1)\lor\\
   (i+1=n+1\land i'=0)\lor
   i'=i))\implies
   (i_{\min}\leq i'\land
   i'\leq i_{\max}))
\end{multline}

Quantifier elimination produces:
\begin{multline}
(i_{\min}\leq 0\land
   i_{\max}\geq 0\land
   i_{\max}<n\land
   -i_{\min}+n-2<0)\lor\\
   (i_{\min}\leq 0\land
   i_{\max}\geq 0\land
   i_{\max}-n+1\geq 0\land
   i_{\max}<n)
\end{multline}

The formulas defining optimal $i_{\min}$ and $i_{\max}$ are:
\begin{eqnarray}
i_{\min}\geq 0\land
   i_{\min}\leq 0\land n>0\\
(i_{\max}= 0\land n>0\land n<2) \lor
(i_{\max} = n-1 \land i_{\max}\geq 1)
\end{eqnarray}

We note that this invariant is only valid for $n > 0$, which is unsurprising given that we specifically looked for invariants containing the precondition $i = 0$. The output abstract transfer function is therefore:
\begin{lstlisting}
if (n <= 0) {
  fail();
} else {
  iMin = 0;
  if (n < 2) {
    iMax = 0;
  } else /* n >= 2 */
    iMax = n-1;
  }
}
\end{lstlisting}
The case disjunction \texttt{n < 2} looks unnecessary, but is a side effect of the use of rational numbers to model a problem over the integers. The resulting abstract transfer function is optimal, but on such a simple case, one could have obtained the same using polyhedra \cite{CousotHalbwachs78} or octagons \cite{Mine_AST_WCRE01}.

We have already noted (\S\ref{part:absint}) that even with we replace \lstinline|n| by the constant~10, the classical widening/narrowing approach will fail to identify the least invariant of this loop, and that extra techniques such have widening with thresholds have to be used.

\subsubsection{Synchronous Data Flow Example: Rate Limiter}
\label{part:rate_limiter}
To go back to the original problem of floating-point data in data-flow languages, let us consider the following library block: a \emph{rate limiter}. As seen in Listing~\ref{ex:rate_limiter}, such a block in inserted in a reactive loop, as shown below, where \lstinline@assume(c)@ stands for \lstinline@if (c) {} else {fail();}@ and \lstinline|fail()| aborts execution.
\begin{lstlisting}[float,caption={Rate limiter. },label=ex:rate_limiter]
while (true) {
  ...
  e1 = random(); assume(e1 >= e1min && e1 <= e1max);
  e2 = random(); assume(e2 >= e2min && e2 <= e2max);
  e3 = random(); assume(e3 >= e3min && e3 <= e3max);
  olds1 = s1;
  if (nondet()) {
    s1 = e3;
  } else {
    if (e1 - olds1 < -e2) {
      s1 = olds1 - e2;
    }
    if (e1 - olds1 > e2) {
      s1 = olds1 + e2;
    }
  }
  ...
}
\end{lstlisting}

This block has three input streams \lstinline|e1|, \lstinline|e2|, and \lstinline|e3|, and one output stream \lstinline|s1|. In intuitive terms, \lstinline|s1| is the same as \lstinline|e1| but where the maximal slope of the signal between two successive clock ticks is bounded by \lstinline|e2|, thus the name \emph{rate limiter}. At some points in time (modelled by nondeterministic choice), the value of the signal is reset to that of the third input~\lstinline|e3|.

We are interested in the input-output behaviour of that block: obtain bounds on the output \lstinline|s1| of the system as functions of bounds on the inputs (\lstinline|e1|, \lstinline|e2|, \lstinline|e3|).
One difficulty is that the \texttt{s1} output is memorised, so as to be used as an input to the next computation step. The semantics of such a block is therefore expressed as a fixed point.

We wish to know the least inductive invariant of the form
${s_1}_{\textrm{min}} \leq s_1 \leq {s_1}_{\textrm{max}}$ under the assumption that ${e_1}_{\textrm{min}} \leq {e_1}_{\textrm{max}} \land
{e_2}_{\textrm{min}} \leq {e_2}_{\textrm{max}} \land
{e_3}_{\textrm{min}} \leq {e_3}_{\textrm{max}}$.
The stability condition yields, after quantifier elimination and projection on ${s_1}_{\textrm{max}}$ of the condition ${s_1}_{\textrm{max}} \geq {e_1}_{\textrm{max}} \land {s_1}_{\textrm{max}} \geq {e_3}_{\textrm{max}}$. Minimisation then yields an expression that can be compiled to an if-then-else tree (Listing~\ref{ex:itetree}).

\begin{lstlisting}[float,caption={If then else tree},label=ex:itetree]
if (e1max > e3max) {
  s1max = e1max;
} else {
  s1max = e3max;
} 
\end{lstlisting}

This result, automatically obtained, coincides with the intuition that a rate limiter (at least, one implemented with exact arithmetic) should not change the range of the signal that it processes.

This program fragment has a rather more complex behaviour if all variables and operations are IEEE-754 floating-point, since rounding errors introduce slight differences of regimes between ranges of inputs. Rounding errors in the program to be analysed introduce difficulties for analyses using widenings, since invariant candidates are likely to be ``almost stable'', but not truly stable, because of these errors. Again, there exist workarounds so that widening-based approaches can still operate \cite[Sec.~7.1.4]{ASTREE_PLDI03}. We shall see in \S\ref{part:float} how to correctly abstract floating-point behaviour within our framework; unfortunately, the formulas produced tend to be very large due to many case disjunctions. The implementation of the abstract transformer produced for the above rate limiter in floating-point does not fit within one page of article, this is why we omitted it.

\section{Extensions of the framework using real linear arithmetic}
\label{part:extensions}
We shall describe here a few extensions to the class of programs and domains that we can handle, all of which are still based on quantifier elimination over real linear arithmetic. (In \S\ref{part:other_numerical_domains}, we shall investigate extensions using other arithmetic theories.)

\subsection{Emptiness}
We have so far supposed that the statement where the inductive invariant is computed is reachable, and thus that there exists some nonempty set of initial states that constrain the inductive invariant from below. More generally, and especially for the constructs described in \S\ref{part:partitioning} and \S\ref{part:noop_nests}, it may be necessary to encode the bottom element $\bot$ of the abstract domain, which represents the empty set of states.
This can be done using one Boolean variable per element that might be empty: instead of template parameters $(p_1,\dots,p_n)$, we will have $(b,p_1,\dots,p_n)$, with the semantics that $\gamma(\false,p_1,\dots,p_n) = \emptyset$ and $\gamma(\true,p_1,\dots,p_n) = \{ \vec{s} \mid \forall i~ L_i(\vec{s}) \leq p_i \}$.

\citet{Sankaranarayana+others/05/Scalable} use $p_i = -\infty$ for this, but in our framework, infinities themselves have to be encoded using Booleans, as we'll see in the next subsection. Furthermore, if we have an abstract element in normal form with a constraint $L_i(v_1, \dots) \leq -\infty$, it means that all $p_j$ are $-\infty$ and thus it is sufficient to have a single Boolean variable for all of them.

\subsection{Infinities}
\label{part:infinities}
The techniques explained in Sec.~\ref{part:template_linear_constraints} allow only finite bounds. Consider for instance the following program:
\begin{lstlisting}
x = 0;
while (true) {
  x = x+1;
}
\end{lstlisting}
We would like to obtain as a result of the analysis that $x$ lies in $[0,\infty)$. Yet, what will happen is that the formula describing the couples $(x_{\min},x_{\max})$ defining inductive invariants $x_{\min} \leq x \leq x_{\max}$ will be simplified to $\false$.

More annoyingly, with the following program:
\begin{lstlisting}
x = y;
while (true) {
  x = x+1;
}
\end{lstlisting}
we would at least hope to infer that $x_{\min} = y_{\min}$. Yet, if we look for an invariant of the form $x_{\min} \leq x \leq x_{\max}$, there is no solution for any value of $(y_{\min},y_{\max})$! In \S\ref{part:presburger_abstraction} we shall see an example where it is actually interesting to infer the domain of values of the precondition where the least inductive invariant interval is finite, and that this domain can simply be obtained by existential quantification on the parameters of the inductive invariant followed by elimination. But in general, this is not what we want; instead, we would prefer to allow infinite values in the $p$ and~$p'$.

This can be easily achieved by a minor alteration to our definitions. Each parameter $p_i$ is replaced by two parameters $p^b_i$ and $p^\infty_i$. $p^\infty_i$ is constrained to be in $\{0,1\}$ (if the quantifier elimination procedure in use allows Boolean variables, then $p^\infty_i$ can be taken as a Boolean variable); $p^\infty_i=0$ means that $p_i$ is finite and equal to $p_i^b$, $p^\infty_i=1$ means $p_i = +\infty$. $L_i \leq p_i$ becomes $(p^\infty_i > 0) \lor (L_i \leq p^b_i)$, $L_i < p_i$ becomes $(p^\infty_i > 0) \lor (L_i < p^b_i)$. After this rewriting, all formulas are formulas of the theory of linear inequalities without infinities and are amenable to the appropriate algorithms.

Unfortunately, the added combinatorial complexity induced by these Boolean variables tends to lead to intolerable computation times in the quantifier elimination procedures. Further work is needed, probably in the direction of better quantifier elimination procedures for combinations of Boolean and real quantified variables. Alternatively, one can envision directly including reasoning about infinities inside these procedures, though this is of course a delicate matter because of the possibility of generation of indeterminate forms~$\infty-\infty$ if formulas are handled without special care.

\subsection{Non-Convex Domains}
\label{part:non-convex}
Section~\ref{part:template_linear_constraints} constrains formulas to be conjunctions of inequalities of the form $L_i \leq p_i$. What happens if we consider formulas that may contain disjunctions?

The template linear constraint domains of section~\ref{part:template_linear_constraints} have a very important property: they are closed under (infinite) intersection; that is, if we have a family $\vec{p} \in W$, then there exists $p_0$ such that $\bigcap_{\vec{p} \in W} \gamma_F(\vec{p}) = \gamma_F(\vec{p}_0)$ (besides, $p_0 = \inf \{ \vec{p} \mid \vec{p} \in W \}$). This is what enables us to request the \emph{least} element that contains the exact post-condition, or the least inductive invariant in the domain: we take the intersection of all acceptable elements.

Yet, if we allow non-convex domains, there does not necessarily exist a least element $\gamma_F(\vec{p})$ such that $S \subseteq \gamma_F(\vec{p})$. Consider for instance $S = \{0, 1, 2\}$ and $F$ representing unions of two intervals $((-x \leq p_1 \land x \leq p_2) \lor (-x \leq p_3 \land x \leq p_4)) \land p_2 \leq -p_3$. There are two, incomparable, minimal elements of the form $\gamma_F(\vec{p})$: $p_1=p_2=0 \land p_3=-1 \land p_4=2$ and $p_1=0 \land p_2=1 \land p_3=-2 \land p_4=2$.

We consider formulas $F$ built out of linear inequalities $L_i(s_1, \allowbreak\dots,\allowbreak s_n) \leq p_i$ as atoms, conjunctions, and disjunctions. 
By induction on the structure of $F$, we can show that $\gamma_F: (\bbR \cup \{ -\infty \})^n \rightarrow \parts{\bbR^n}$ is inf-continuous; that is, for any descending chain $(\vec{p}_i)_{i \in I}$ such that $\lim_i \vec{p}_i = \vec{p}_\infty$, then $\gamma_F(\vec{p}_i)$ is decreasing and $\bigcap_{i \in I} \gamma_F(\vec{p}_i) = \gamma_F(\vec{p}_\infty)$. The property is trivial for atomic formulas, and is conserved by greatest lower bounds ($\land$) as well as binary least upper bounds ($\lor$).

Let us consider a set $S \subseteq \parts{\bbR^n}$, stable under arbitrary intersection (or at least, greatest lower bounds of descending chains). $S$ can be for instance the set of invariants of a relation, or the set of over-approximations of a set~$W$. $\gamma_F^{-1}(S)$ is the set of suitable domain parameters; for instance, it is the set of parameters representing inductive invariants of the shape specified by $F$, or the set of representable over-approximations of~$W$. $\gamma_F^{-1}(S)$ is stable under greatest lower bounds of descending chains: take a descending chain $(\vec{p}_i)_{i \in I}$, then $\gamma_F(\lim_i \vec{p}_i) = \cap_i \gamma_F(\vec{p}_i) \in S$ by inf-continuity and stability of $S$. By Zorn's lemma, $\gamma_F^{-1}(S)$ has at least one minimal element.

Let $P[\vec{p}]$ be a formula representing~$\gamma_F^{-1}(S)$ (Sec.~\ref{part:template_linear_constraints} proposes formulas defining safe post-conditions and inductive invariants). The formula $G[\vec{p}] \definedAs P[\vec{p}] \land \forall \vec{p'}~ P[\vec{p'}] \land \vec{p'} \leq \vec{p} \implies \vec{p} \leq \vec{p'}$ defines the minimal elements of $\gamma^{-1}(S)$.

For instance, consider $\vec{p}=(a,b,c,d)$, $F \definedAs (-x \leq a \land x \leq b) \lor (-x \leq c \land x \leq d)$, representing unions of two intervals $[-a, b] \cup [-c,d]$. We want upper-approximations of the set $\{0, 1, 3\}$; that is $P[\vec{p}] \definedAs \forall x~ (x=0 \lor x=1 \lor x=3 \implies F[\vec{p},x])$. We add the constraint that $-a \leq b \land b \leq -c \land -c \leq d$, so as not to obtain the same solutions twice (by exchange of $(a,b)$ and $(c,d)$) or solutions with empty intervals. By quantifier elimination over $G$, we obtain
$(a = 0 \land b = 1 \land c = -3 \land d=3) \lor (a = 0 \land b = 0 \land c = -1 \land d=3)$, that is, either $[0,0] \cup [1, 3]$ or $[0,1] \cup [3, 3]$.

\subsection{Domain Partitioning}
\label{part:partitioning}
\begin{figure}
\begin{center}
\includegraphics{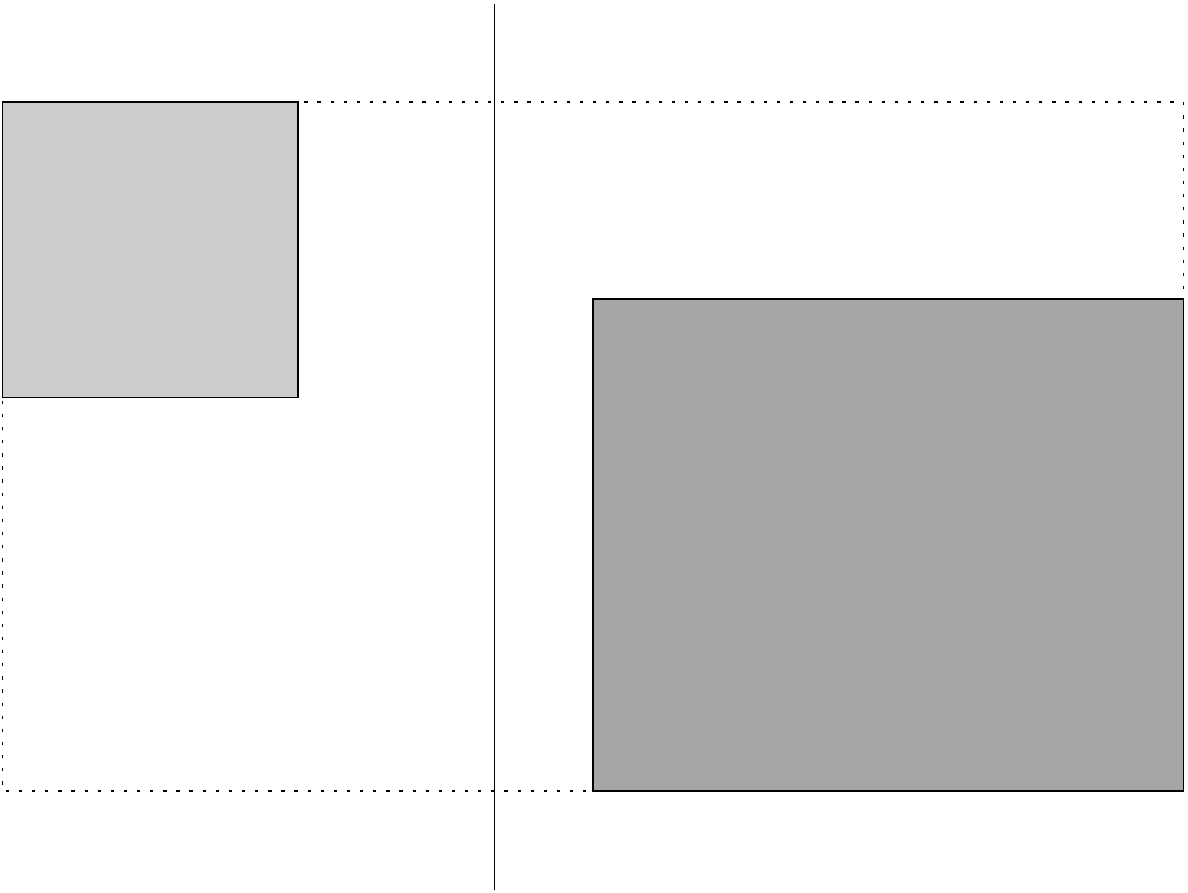}
\end{center}

\caption{The state space is partitioned into $x < 0$ and $x \geq 0$, and on each element of the partition we have a product of intervals, respectively $[-5,-2] \times [4, 7]$ and $[1,7] \times [0,5]$. Without the disjunction, we would have had to consider the much larger set $[-5,7] \times [0,7]$.}
\label{fig:partition01}
\end{figure}

Non-convex domains, in general, are not stable under intersections and thus ``best abstraction'' problems admit multiple solutions as minimal elements of the set of correct abstractions. As explained in the preceding subsection, is not very satisfactory nor efficient for analysis. There are, however, non-convex abstract domains that are stable under intersection and thus admit least elements as well as the template linear constraint domains of Sec.~\ref{part:template_linear_constraints}: those defined by partitioning of the state space.

If, for instance, we know that a program or system behaves differently according to the sign of $x$, then we can decide \emph{in advance} to partition the state space into $x \geq 0$ and $x < 0$. On each element of the partition, we can have a separate template (example Fig.~\ref{fig:partition01}). We can accommodate this within our framework.

More formally, consider pairwise disjoint subsets $(C_i)_{i \in I}$ of the state space $\bbQ^m$, and abstract domains stable under intersection $(S_i)_{i \in I}$, $S_i \subseteq \parts{C_i}$. Elements of the partitioned abstract domain are unions $\bigcup_{i \in I} s_i$ where $s_i \in S_i$. If $\left(\bigcup_i s_{i,j}]\right)_{j \in J}$ is a family of elements of the domain, then $\bigcap_{j \in J}\left(\bigcup_{i \in I} s_{i,j}\right) = \bigcup_{i \in I} \bigcap_{j \in J} s_{i,j}$; that is, intersections are taken separately in each~$C_i$. in other words, the parameters of the templates on each element of the partition can be dealt with independently of each other.

Note the difference with the general disjunctive domains of \S\ref{part:non-convex}: in the general disjunctive domains, there is no partition fixed \emph{a priori}, this is why we may have several incomparable minimal elements $[0,0] \cup [1,2]$ and $[0,1] \cup [2,2]$ in the domain of disjunctions of two intervals representing the same set $\{0,1,2\}$. For a fixed partition $C_i$ and corresponding domains $S_i$, for any set $X$, for every $i$, there is a least element representing $X \cap C_i$ in domain~$S_i$. This motivates the following construct:

Take a family $(F_i[\vec{p}])_{i \in I}$ of formulas defining template linear constraint domains (conjunctions of linear inequalities $L_i(s_1,\allowbreak \dots,\allowbreak s_n) \leq p_i$) and a family $(C_i)_{i \in I}$ of formulas such that for all $i$ and $i'$, $C_i \land C_{i'}$ is equivalent to \textsf{false} and $C_1 \lor \dots \lor C_l$ is equivalent to \textsf{true}. $F = (C_1 \land F_1) \lor \dots \lor (C_l \land F_l)$ then defines an abstract domain such that $\gamma_F$ is a inf-morphism. All the techniques of \S\ref{part:template_linear_constraints} then apply.

For instance, by choosing $C_1 \definedAs x \geq 0$, $C_2 \definedAs x < 0$, $F_1 \definedAs x_{\min}^1 \leq x \leq x_{\max}^1 \land y_{\min}^1 \leq x \leq y_{\max}^1$, and $F_2 \definedAs x_{\min}^2 \leq x \leq x_{\max}^2 \land y_{\min}^2 \leq x \leq y_{\max}^2$, we can obtain Figure~\ref{fig:partition01}.

The above constructions are equivalent to assigning a separate control point to each element in the partition, with guards leading to these points according to the $C_i$, and then performing as described in~\S\ref{part:noop_nests}.

\subsection{Floating-Point Computations}
\label{part:float}
Real-life programs do not operate on real numbers; they operate on fixed-point or floating-point numbers. Floating point operations have few of the good algebraic properties of real operations; yet, they constitute approximations of these real operations, and the \emph{rounding error} introduced can be bounded.

In IEEE floating-point \cite{IEEE-754}, each atomic operation (noting $\oplus$, $\ominus$, $\otimes$, $\oslash$, $\sqrt{}_f$ for operations so as to distinguish them from the operations $+$, $-$, $\times$, $/$, $\sqrt{}$ over the reals) is mathematically defined as the image of the exact operation over the reals by a rounding function.%
\footnote{We leave aside the peculiarities of some implementations, such as those of most C compilers over the 32-bit Intel platform where there are ``extended precisions'' types used for some temporary variables and expressions can undergo double rounding~\cite{Monniaux_TOPLAS08}.}
This rounding function, depending on user choice, maps each real $x$ to the nearest floating-point value $r_n(x)$ (\emph{round to nearest mode}, with some resolution mechanism for non representable values exactly in the middle of two floating-point values), $r_{-\infty}(x)$ the greatest floating-point value less or equal to $x$ (\emph{round toward $-\infty$}), $r_{+\infty}(x)$ the least floating-point value greater or equal to $x$ (\emph{round toward $+\infty$}), $r_0(x)$ the floating-point value of the same sign as $x$ but whose magnitude is the greatest floating-point value less or equal to $|x|$ (\emph{round toward $0$}). If $x$ is too large to be representable, $r(x)=\pm\infty$ depending on the size of $x$

The semantics of the rounding operation cannot be exactly represented inside the theory of linear inequalities.%
\footnote{To be pedantic, since IEEE floating-point formats are of a finite size, the rounding operation could be exactly represented by enumeration of all possible cases; this would anyway be impossible in practice due to the enormous size of such an enumeration.}
As a consequence, we are forced to use an axiomatic over-approximation of that semantics: a formula linking a real number $x$ to its rounded version~$r(x)$.

\citet{Mine_ESOP04} uses an inequality $|r(x)-x| \leq \erel\cdot |x| + \eabs$, where $\erel$ is a \emph{relative error} and $\eabs$ is an \emph{absolute error}, leaving aside the problem of overflows. The relative error is due to rounding at the last binary digit of the significand, while the \emph{absolute error} is due to the fact that the range of exponents is finite and thus that there exists a least positive floating-point number and some nonzero values get rounded to zero instead of incurring a relative error (or get rounded to a denormal, see below).

Because our language for axioms is richer than the interval linear forms used by Min\'e, we can express more precise properties of floating-point rounding. We recall briefly the characteristics of IEEE-754 floating-point numbers. 
Nonzero floating point numbers are represented as follows:
$x = \pm 2^e m$ where $1 \leq m < 2$ is the \emph{mantissa} or
\emph{significand}, which
has a fixed number $p$ of bits, and $e$ ($E_{\min} \leq e \leq E_{\max}$
is the \emph{exponent}).
The difference introduced by changing the last binary digit of the
mantissa is $\pm s.\elast$ where $\elast = 2^{-(p-1)}$:
the \emph{unit in the last place} or \emph{ulp}.
Such a decomposition is unique for a given number
if we impose that the leftmost digit of the
mantissa is $1$ --- this is called a \emph{normalised representation}.
Except in the case of numbers of very small magnitude, IEEE-754 always
works with normalised representations. There exists a least positive normalised number $m_{\textrm{normal}}$ and a least positive denormalized number $m_{\textrm{denormal}}$, and the denormals are the multiples of $m_{\textrm{denormal}}$ less than $m_{\textrm{normal}}$. All representable numbers are multiples of~$m_{\textrm{denormal}}$.
\medskip

We shall now attempt to define an imprecise axiomatisation of the relationship between the result of a floating-point operation and the result of the corresponding ideal operation over real numbers. For this, we shall distinguish operations plus and minus on the one hand, multiplication and division on the other hand, since the former have properties of which we can take advantage.

Let us consider addition or subtraction $x=\pm a \pm b$. Suppose that $0 \leq x \leq m_{\textrm{normal}}$. $a$ and $b$ are multiples of $m_{\textrm{denormal}}$ and thus $a-b$ is exactly represented as a denormalized number; therefore $r(x)=x$. If $x > m_{\textrm{normal}}$, then $|r(x)-x| \leq \erel.x$. In other words, if the result of an addition or subtraction is denormal, then it is exact.%
\footnote{William Kahan has written extensively about the advantages of the existence and proper handling of denormals, otherwise known as \emph{gradual underflow}, as opposed to flushing to zero all numbers too small to be approximated by normal numbers, a process known as \emph{flush to zero}. For instance, with gradual underflow, $a \ominus b = 0$ is equivalent to $a = b$, quite a desirable property. See e.g.~\citep[p.~6]{Kahan_ARITH_17U}. Unfortunately, certain architectures do not implement gradual underflow, for the sake of efficient; then one has to use the $\textit{Round}$ and not the $\textit{Round}^\pm$ predicate.}
The cases for $x \leq 0$ are symmetrical.

We therefore obtain the following axiomatisation of the rounding of a positive real number $x$, result of a floating-point addition or subtraction, to a floating-point value $r$, using round-to-nearest:
\begin{equation}
\textit{Round}^\pm_+(r,x) \definedAs (x \leq m_{\textrm{normal}} \land r = x)
  \lor (x > m_{\textrm{normal}} \land -\erel.x \leq r - x \leq \erel.x)
\end{equation}
We then use this predicate to construct an axiomatisation for rounding of numbers of any sign:
\begin{multline}
\textit{Round}^\pm(r,x) \definedAs (x =0 \land r =0) \lor
  (x > 0 \land \textit{Round}^\pm_+(r,x)) \lor\\
  (x < 0 \land \textit{Round}^\pm_+(-r,-x))
\end{multline}
Equivalently, this last formula can be simplified into the equivalent:
\begin{multline}
\textit{Round}^\pm(r,x) \definedAs
  (-m_{\textrm{normal}} \leq x \leq m_{\textrm{normal}} \land x = r) \\
  \lor
\left(x > m_{\textrm{normal}} \land x (1-\erel) \leq r \leq x (1+\erel)\right)\\
  \lor
\left(x <-m_{\textrm{normal}} \land x (1+\erel) \leq r \leq x (1-\erel)\right)
\end{multline}

Consider now multiplication or division $x = a \otimes b$ or $x = a \oslash b$. Here, we cannot assume that if the result is denormal, then it is exact. A correct axiomatization of rounding for positive numbers is:
\begin{multline}
\textit{Round}_+(r,x) \definedAs
  (x \leq m_{\textrm{normal}} \land r \geq 0 \land x-\eabs \leq r \leq x+\eabs)\\
  \lor (x > m_{\textrm{normal}} \land -\erel.x \leq r - x \leq \erel.x)
\end{multline}
where $\eabs = m_{\textrm{denormal}}/2$. Now for rounding for any sign:
\begin{multline}
\textit{Round}(r,x) \definedAs (x =0 \land r =0) \lor
  (x > 0 \land \textit{Round}_+(r,x)) \lor\\
  (x < 0 \land \textit{Round}_+(-r,-x))
\end{multline}
\medskip

To each floating-point expression $e$, we associate a ``rounded-off'' variable $r_e$, the value of which we constrain using $\textit{Round}^\pm(r_e,e)$ or $\textit{Round}(r_e,e)$. For instance, a expression $e = a \oplus b$ is replaced by a variable $r_e$, and the constraint $\textit{Round}^\pm(r_e, a+b)$ is added to the semantics. In the case of a compound expression $e = ab+c$, we introduce $e_1 = ab$, and we obtain $\textit{Round}^\pm(r_e, r_{e_1}+c) \land \textit{Round}(r_{e_1}, ab)$. If we know that the compiler uses a fused multiply-add operator, we can use $\textit{Round}(r_e,ab+c)$ instead.

The drawbacks of such axiomatization of floating-point operations are that they introduce case disjunctions into formulas, leading to extra work for quantifier elimination procedures, and also that they make very unnatural coefficients appear --- that is, rational numbers with large numerator and denominator, such as $1+\erel$ or $1-\erel$, or very large integers such as $1/m_{\textrm{normal}}$. To go back to our very simple rate limiter running example (\S\ref{part:rate_limiter}), analysis times are multiplied by 12 if one stops assuming that floating-point variables behave like reals, and instead uses some axiomatization~\citep[p.~9]{Monniaux_LPAR08}. Furthermore, the formula produced is very large and hardly readable, doing many case disjunctions. Perhaps it is possible to simplify such large formulas by allowing some limited overapproximation --- for instance, we can replace the function mapping $p$ to $p'$ as defined by $(0 \leq p \leq 1 \leq p' = p) \lor (p > 1 \leq p' = (1+\epsilon)p)$ by the function defined by $0 \leq p \land p' = (1+\epsilon)p$, since the latter formula always gives a solution for $p'$ greater or equal to that of the former. Even better, such simplifications could be performed during the elimination procedure. Further investigations are needed in that respect.

\subsection{Integers}
\label{part:integers}
We have mentioned in \S\ref{part:qf_lia} that Presburger arithmetic admits quantifier elimination. We therefore could apply quantifier elimination in Presburger arithmetic, similarly as we do with respect to real linear arithmetic. Unfortunately, as we shall see in \S\ref{part:presburger_abstraction}, such an approach suffers from explosion in the size of the formulas and the cost of the algorithms.

Instead, we used a \emph{relaxation} approach: all integers are treated as reals; strict inequalities $a < b$ where both sides are integers are recoded $a \leq b-1$. For instance, if the program contains a if-then-else over $x \leq y$, then $x \leq y$ is a precondition for the ``then'' branch, and $x \geq y+1$ is a precondition of the ``else'' branch. Note that this means that traces of execution such that $y < x < y+1$ are considered to fail.

In some cases, such as the McCarthy 91 function example from \S\ref{part:McCarthy91}, it is necessary to constraint the reasoning procedures so that they consider that the negation of $a \leq b$ is $a \geq b+1$. We hope that improvements of quantifier elimination algorithms will be able to allow a more elegant approach.

Another issue is that in many programming languages, integers are bounded and that arithmetic operations are actually performed modulo $2^n$ (with $n$ typically 8, 16, 32 or 64). The problem then lies within an enormous, but finite, state space. Clever techniques for reasoning about \emph{bit-vector arithmetic} are being investigated by the SMT-solving community. Again, we hope that future work will provide good quantifier elimination techniques for this arithmetic, or combinations thereof with the linear theory of reals.

\subsection{Nonlinear constructs}
In \S\ref{sec:real_polynomial_abstraction} we explain how to fully deal with nonlinear program constructs and templates, at the expense of very high computational complexity.

A more practical approach is to linearise the program expressions~\citep{mine:vmcai06}\cite[ch.~6]{Mine_PhD}. If we encounter an assignment $z := x*y$ and we know, perhaps through rough interval analysis, an interval $y \in [y_{\min},y_{\max}]$, we can use the following abstraction of the semantics of this assignment:
\begin{equation}\label{eqn:linearize}
(x \geq 0 \land x y_{\min} \leq z' \leq x y_{\max}) \lor
(x < 0 \land x y_{\max} \leq z' \leq x y_{\min})
\end{equation}

If we know intervals for both $x$ and $y$, then we can apply Eq.~\ref{eqn:linearize} to both $(x,[y_{\min},y_{\max}])$ and  $(y,[x_{\min},x_{\max}])$, and take the conjunction of the resulting formulas.

\section{Complex control flow}
\label{part:control-flow}
We have so far assumed no procedure call, and at most one single loop. We shall see here how to deal with arbitrary control flow graphs and call graph structures.


\subsection{Arbitrary control graph and loop nests}
\label{part:noop_nests}
In Sec.~\ref{part:template_linear_constraints_invariants}, we have explained how to abstract a single fixed point. The method can be applied to multiple nested fixed points by replacing the inner fixed point by its abstraction. For instance, assume the rate limiter of Sec.~\ref{part:rate_limiter} is placed inside a larger loop. One may replace it by its abstraction:

\begin{lstlisting}
if (e1max > e3max) {
  s1max = e1max;
} else {
  s1max = e3max;
} 
assume(s1 <= s1max);
/* and similar for s1min */
\end{lstlisting}

Alternatively, we can extend our framework to an arbitrary control flow graph with nested loops, the semantics of which is expressed as a single fixed point.
We may use the same method as proposed by \citet[\S2]{Gulwani_PLDI08} and other authors. First, a \emph{cut set} of program locations is identified; any cycle in the control flow graph must go through at least one program point in the cut set. In widening-based fixed point approximations, one classically applies widening at each point in the cut set. A simple method for choosing a cut set is to include all targets of back edges in a depth-first traversal of the control-flow graph, starting from the start node; in the case of structured program, this amounts to choosing the head node of each loop. This is not necessarily the best choice with respect to precision, though~\cite[\S2.3]{Gulwani_PLDI08}; \citet[Sec.~3.6]{Bourdoncle_PhD} discusses methods for choosing such as cut-set.

To each point in the cut set we associate an element in the abstract domain, parameterised by a number of variables. The values of these variables for all points in the cut-set define an invariant candidate. Since paths between elements of the cut sets cannot contain a cycle, their denotational semantics can be expressed simply by an existentially quantified formula. Possible paths between each source and destination elements in the cut-set defined a stability condition (Formula~\ref{eqn:stability}). The conjunction of all these stability conditions defines acceptable inductive invariants. As above, the least inductive invariant is obtained by writing a minimisation formula (Sec.~\ref{part:template_linear_constraints_invariants}).

Let us consider the loop nest in Listing~\ref{ex:loop_nest}.
\lstset{language=Cext}
\lstinputlisting[float,caption={Loop nest},label=ex:loop_nest]{loop_nest.c}
We choose program points $A$ and $B$ as cut-set. At program point A, we look for an invariant of the form $I_A(i,j) \definedAs i_{\min,A} \leq i \leq i_{\max,A}$, and at program point B, for an invariant of the form $I_B(i,j) \definedAs i_{\min,B} \leq i \leq i_{\max,B} \land j_{\min} \leq j \leq j_{\max} \land \delta_{\min} \leq i-j \leq \delta_{\max}$ (a \emph{difference-bound} invariant). The (somewhat edited for brevity) stability formula is written:
\begin{multline}
\forall j~ I_A(0,j)\land \forall i \forall j~
((I_B(i,j)\land j\geq i\land (i+1\leq 19\lor\\
i+1=20\lor i+1\geq 21))\Rightarrow
\text{If}[i+1=20,I_A(0,j),I_A(i+1,j)])\land\\
\forall i \forall j~ (I_A(i,j)\Rightarrow
I_B(i,0))\land \forall i \forall j~((I_B(i,j)\land
j<i)\\
\Rightarrow I_B(i,j+1))
\end{multline}
where $\text{If}[b,e_1,e_2] = (b \land e_1) \lor (\neg b \land e_2)$.

Replacing $I_A$ and $I_B$ into this formula, then applying quantifier elimination, we obtain a formula defining all acceptable tuples $(i_{\min,A}, i_{\max,A}, i_{\min,B}, i_{\max,B}, j_{\min}, j_{\max}, \delta_{\min}, \delta_{\max})$. Optimal values are then obtained by further quantifier elimination: $i_{\min,A}=i_{\min,B}=j_{\min}=0$, $i_{\max,A}=i_{\max,B}=19$, $j_{\max}=20$, $\delta_{\min}=1$, $\delta_{\max}=19$.

The same example can be solved by replacing $20$ by another variable \texttt{n} as in Sec.~\ref{part:loop_example}.

\subsection{Procedures and Recursive Procedures}
\label{part:McCarthy91}
We have so far considered abstractions of program blocks with respect to sets of program states. A program block is considered as a transformer from a state of input program states to the corresponding set of output program states. The analysis outputs a sound and optimal (in a certain way) abstract transformer, mapping an abstract set of input states to an abstract set of output states.

Assuming there are no recursive procedures, procedure calls can be easily dealt with. We can simply inline the procedure at the point of call, as done in e.g. \textsc{Astr\'ee} \cite{BlanchetCousotEtAl02-NJ,ASTREE_PLDI03,ASTREE_ESOP05}. Because inlining the concrete procedure may lead to code blowup, we may also inline its abstraction, considered as a nondeterministic program. Consider a complex procedure \texttt{P} with input variable \texttt{x} and output variable \texttt{x}. We abstract the procedure automatically with respect to the interval domain for the postcondition ($z_{\min} \leq z \leq z_{\max}$); suppose we obtain $z_{\max}:=1000; z_{\min}:=x$ then we can replace the function call by \lstinline@z <= 1000 && z >= x@. This is a form of \emph{modular interprocedural analysis}: considering the call graph, we can abstract the leaf procedures, then those calling the leaf procedures and so on. We can also do likewise for nested loops: abstract the innermost loop, then the next innermost one, etc.
This method is however insufficient for dealing with recursive procedures.

In order to analyse recursive procedures, we need to abstract not sets of states, but sets of pairs of states, expressing the input-output relationships of procedures. In the case of recursive procedures, these relationships are the least solution of a system of equations.

To take a concrete example, let us consider McCarthy's famous ``91 function''~\cite{Manna_McCarthy_69,Manna_Pnueli_JACM70}, which, non-obviously, returns 91 for all inputs less than 101:
\begin{lstlisting}
int M(int n) {
  if (n > 100) {
    return n-10;
  } else {
    return M(M(n+11));
  }
}
\end{lstlisting}

The concrete semantics of that function is a relationship $R$ between its input $n$ and its output $r$. It is the least solution of
\begin{multline}\label{eqn:McCarthy91_stability}
R \supseteq \{ (n, r) \in \bbZ^2 \mid (n > 100 \land r=n-10) \lor\\
   (n \leq 100 \land \exists n_2 \in \bbZ
   (n+11, n_2) \in R \land (n_2, r) \in R) \}
\end{multline}

We look for a inductive invariant of the form $I \definedAs ((n \geq A) \land (r-n \geq \delta) \land (r-n \leq \Delta)) \lor ((n \leq B) \land (r=C))$, a non-convex domain (Sec.~\ref{part:non-convex}). By replacing $R$ by $I$ into inclusion~\ref{eqn:McCarthy91_stability}, and by universal quantification over $n,r,n_2$, we obtain the set of admissible parameters for invariants of this shape. By quantifier elimination, we obtain $(C=91) \land (\delta=\Delta=-10) \land (A=101) \land (B=100)$ within a fraction of a second using \textsc{Mjollnir} (see Sec.~\ref{part:experiments}).

In this case, there is a single acceptable inductive invariant of the suggested shape. In general, there may be parameters to optimise, as explained in Sec.~\ref{part:template_linear_constraints_invariants}. The result of this analysis is therefore a map from parameters defining sets of states to parameters defining sets of pairs of states (the abstraction of a transition relation). This abstract transition relation (a subset of $X \times Y$ where $X$ and $Y$ are the input and output state sets) can be transformed into an abstract transformer in $X^\sharp \rightarrow Y^\sharp$ as explained in Sec.~\ref{part:template_linear_constraints_transformers}. Such an interprocedural analysis may also be used to enhance the analysis of loops~\cite{DBLP:conf/cc/MartinAWF98}.

\section{Implementations and Experiments}

\label{part:experiments}
We have implemented the techniques of Sec.~\ref{part:template_linear_constraints_abstraction} in quantifier elimination packages, including  \textsc{Mathematica}%
\footnote{\textsc{Mathematica} is a commercial computer algebra package available under an unfree license from \href{http://www.wolfram.com}{Wolfram Research} \citep{Mathematica_book}.}
and \textsc{Reduce}~3.8\footnote{\href{http://www.reduce-algebra.com/}{\textsc{Reduce}} is a computer algebra package from Anthony C. Hearn, now available under a modified BSD licence.}
+ \textsc{Redlog}\footnote{\href{http://www.redlog.eu}{\textsc{Redlog}} is an extension to \textsc{Reduce} for working over quantified formulas.}
in addition to our own package, \textsc{Mjollnir}~\cite{Monniaux_LPAR08}.%
\footnote{\textsc{Mjollnir} is available under a free license from the \href{http://www-verimag.imag.fr/~monniaux/mjollnir.html.en}{author's home page}. In addition to the author's own quantifier elimination techniques, it implements \citeauthor{FerranteRackoff75} and \citeauthor{LoosWeispfenning93}'s.}
We ignore which exact techniques are implemented within \textsc{Mathematica}.
\footnote{\citeauthor{LoosWeispfenning93}'s quantifier elimination procedure is used by \textsc{Mathematica} to perform simplifications over linear inequalities~\cite[\S A.9.5]{Mathematica_book}, but we are unsure whether this is the algorithm called by the \texttt{Reduce} function.}
\textsc{Redlog} appears to implement some virtual substitution method \citep{Redlog,Weispfenning88}.

As test cases, we took a library of operators for synchronous programming, having streams of floating-point values as input and outputs. These operators are written in a restricted subset of~C and take as much as 20 lines. A front-end based on \textsc{CIL}~\cite{Cil} converts them into formulas, then these formulas are processed and the corresponding abstract transfer functions are pretty-printed. Since for our application, it is important to bound numerical quantities, we chose the interval domain.

Among the extensions in \S\ref{part:extensions}, we implemented those relevant to floating-point (\S\ref{part:float}) and integers (\S\ref{part:integers}), and did more manual experiments with infinities (\S\ref{part:infinities}) and recursive functions (\S\ref{part:McCarthy91}).

For instance, the rate limiter presented in Sec.~\ref{part:rate_limiter} was extracted from that library. Since this operator includes a memory (a variable whose value is retained from a call to the operator to the next one), its data-flow semantics is expressed using a fixed-point. When considered with real variables, the resulting expanded formula was approximately 1000 characters long, and with floating point variables approximately 8000 characters long. Despite the length of these formulas, they can be processed by \textsc{Mjollnir} in a matter of seconds. The result can then be saved once and for all.

Analysers such as \textsc{Astr\'ee}~\cite{BlanchetCousotEtAl02-NJ,ASTREE_PLDI03,ASTREE_ESOP05} must have special knowledge about such operators, otherwise the analysis results are too coarse (for instance, the intervals do not get stabilized at all). The \textsc{Astr\'ee} development team therefore had to provide specialized, hand-written analyses for certain operators. In contrast, all linear floating-point operators in the library were analysed within a fraction of a second using the method in the present article, assuming that floating-point values in the source code were real numbers. If one considered instead the abstraction of floating-point computations using real numbers from Sec.~\ref{part:float}, computation times did not exceed 17~seconds per operator; the formulas produced are considerably more complex than in the real case. Note that this computation is done once and for all for each operator; a static analyser can therefore cache this information for further use and need not recompute abstractions for library functions or operators unless these functions are updated.

Our analyser front-end currently cannot deal with non-numerical operations and data structures (pointers, records, and arrays). It is therefore not yet capable of directly dealing with the real control programs that e.g. \textsc{Astr\'ee} accepts, which do not consist purely of numerical operators. We plan to integrate our analysis method into a more generic analyser. Alternatively, we plan to adapt a front-end for synchronous programming languages such as~\textsc{Simulink}, a tool widely used by control/command engineers.

The correctness of the methods described in this article does not rely on any particularity of the quantifier elimination procedure used, provided one also has symbolic computation procedures for e.g. putting formulas in disjunctive normal form and simplifying them. The difference between the various quantifier elimination and simplification procedures is efficiency; experiments showed that ours was vastly more efficient than the others tested for this kind of application. For instance, our implementation of our quantifier elimination algorithm~\cite{Monniaux_LPAR08} was able to complete the analysis of the rate limiter of Sec.~\ref{part:rate_limiter}, implemented over the reals, in 1.4~s, and in 17~s with the same example over floating-point numbers, while \textsc{Redlog} took 182~s for the former and could not finish the latter, and \textsc{Mathematica} could analyse neither (out-of-memory). On other examples, our quantifier elimination procedure is faster than the other ones, or can complete eliminations that the others cannot.

\section{Extensions to other numerical domains}
\label{part:other_numerical_domains}
We have so far concerned ourselves solely with real (and possibly Boolean) variables appearing in linear arithmetic formulas. In \S\ref{part:extensions}, we have seen how to reason over certain other data types (integers, floating-point values), but again modeling them as real numbers. Yet, in \S\ref{part:qe}, we pointed out that linear real arithmetic is not the only arithmetic theory with quantifier elimination algorithms; other well-known examples include Presburger arithmetic (linear integer arithmetic) and the theory of real closed fields (that is, polynomial real arithmetic).

In this section, we report on using quantifier elimination in both these theories for computing optimal transformers or fixed points. Unfortunately, experiments have shown that the high complexity of the quantifier elimination algorithms for these theories and the lack of simplifications for the formulas they produce preclude their use in practice. The results in this section are thus mainly of theoretical interest.

\subsection{Presburger arithmetic}
\label{part:presburger_abstraction}

The approach from \S\ref{part:integers} relaxes integers to reals. It therefore yields correct results, but might lead to overapproximations that could be avoided if integers were used instead. What if we used quantifier elimination on Presburger arithmetic instead?

Consider the following simple example:
\begin{lstlisting}
int a, m;
...
i = 0;
while (i < m) {
  i = i+a;
}
\end{lstlisting}

Let us abstract the loop variable \lstinline|a| using an interval $[l,h]$. These bounds must satisfy $I \definedAs l \leq 0 \leq h$ (initialisation of the loop) and
\begin{equation}
S \definedAs \forall i~ \left(l\leq i \leq u \Rightarrow (i > m \lor l \leq i+a \leq u)\right)
\end{equation}

The condition for the existence of a finite range of values for \lstinline|i| is therefore $\exists l \exists u~I \land S$.%
\footnote{This example is motivated by the fact that for $a \neq 0$, the loop terminates if and only if $[l,u]$ is finite. It is a simplified version of a loop termination problem from Paul Feautrier and Laure Gonnord.}
Intuitively, this condition is equivalent to $m < 0 \lor a \geq 0$. Yet, the formula produced by the quantifier elimination for Presburger arithmetic from \textsc{Redlog} yields a very large formula, more than one page long, which we have therefore omitted.%
\footnote{It could be that \textsc{Redlog} gives erroneous answers. At some point, we generated a formula $F$ with free variables $a,m$ such that \textsc{Redlog} produced \false when eliminating quantifiers from $\exists m \exists a~F$, and produced \true  when eliminating quantifiers from $\exists a \exists m~F$; at another point we made \textsc{Reduce}/\textsc{Redlog} crash with a segmentation fault.}
\textsc{Redlog} cannot even perform simplifications such as replacing $i=0 \lor i=1 \lor i=2$ by $0 \leq i \leq 2$.

In comparison, the real relaxation gives exact and fast results:
\begin{equation}
S_\bbR \definedAs \forall i~ \left(i < l \lor i > u \lor i \geq m+1 \lor l \leq i+a \leq u)\right)
\end{equation}
Eliminating quantifiers from $\exists l \exists u~I \land S_\bbR$ yields immediately the answer $a \geq 0 \lor m \leq -1$.

\subsection{Real polynomial constraint domains}
\label{sec:real_polynomial_abstraction}
We now consider the abstraction of program states (in $\bbR^V$) using domains defined by polynomial constraints, a natural extension of those seen previously (\S\ref{part:template_linear_constraints}); the orthogonal extensions from \S~\ref{part:extensions} also apply. Instead of quantifier elimination in linear real arithmetic, we shall use quantifier elimination in the theory of real closed fields. One difference, though, is that we will not be able to produce nice, closed form formulas, at least not in general.

\subsubsection{Method}
We generalise the constructs of \S\ref{part:template_linear_constraints_abstraction}, except those of \S~\ref{part:template_linear_constraints_gen}, to formulas over polynomial inequalities.
The same results hold:
\begin{itemize}
\item For any loop-free program code, and any template polynomial abstract domain with parameters $p_1,\dots,p_n$, there is a family of formulas $F_1,\dots,F_n$ that uniquely defines the optimal parameters $p'_1,\dots,p'_{n'}$ of the postcondition with respect those $p_1,\dots,p_n$ in the precondition (the free variables of $F_i$ are among $p_1,\dots,p_n,p'_i$).
\item For any loop, and any template polynomial abstract domain with parameters $p_1,\allowbreak\dots,\allowbreak p_n$, there is a family of formulas $F_1,\dots,F_n$ that uniquely defines the optimal parameters $p'_1,\dots,p'_{n'}$ of the least inductive invariant for that loop, with respect those $p_1,\dots,p_n$ in the precondition (the free variables of $F_i$ are among $p_1,\dots,p_n,p'_i$).
\end{itemize}

The main obstacle is the high cost of quantifier elimination in the theory of real closed fields.
The other crucial difference is that it is in general impossible to move from such a formula to a formula computing $p'_i$ from $p_1,\dots,p_n$, as we did in \S~\ref{part:template_linear_constraints_gen}. By performing the cylindrical algebraic decomposition with variables $p_1,\dots,p_n$ first, we could obtain the tree structure with case disjunctions, as the output of Algorithm~\ref{alg:ToIfThenElseTree}. But at the leaves, we would obtain formulas defining $p'_i$ as a specific root of a polynomial in the variable $p'_i$, with coefficients themselves polynomials in $p_1,\dots,p_n$.

In Sec.~\ref{part:template_linear_constraints_gen}, we explained how to turn a formula over linear arithmetic defining a partial function from $p_1,\dots,p_n$ to $p'_i$ --- that is, a relation between $(p_1,\dots,p_n,p'_i)$ such that for any choice of $p_1,\dots,p_n$ there is at most one suitable $p'_i$ --- into an algorithmic function, expressed using linear assignments and if-then-else over linear inequalities.
Can we do the same here? That is, can we turn a formula over nonlinear arithmetic defining a partial function from $p_1,\dots,p_n$ to $p'_i$ into a simple algorithm written using, say, if-then-else tests and normal arithmetic operators as well as the $n$-th root operations $\sqrt[n]{}$~? For instance, if we have a formula $p' \geq 0 \land {p'}^2 = 2p$, we would like to obtain $p' = \sqrt{2p}$.

Unfortunately, this is impossible in the general case.
The Abel-Ruffini theorem, from Galois theory, states that for polynomials in one variable of degrees higher or equal to 5, there is in general no way to express the value of the roots using only arithmetic operations ($+$, $-$, $\times$, $/$) and radicals ($\sqrt[n]{}$). Thus, we cannot hope to obtain in general a simple algorithm expressing $p'_i$ as a function of $p_1, \dots, p_n$ using tests, arithmetic operations ($+$, $-$, $\times$, $/$) and radicals ($\sqrt[n]{}$).

Let us now assume that there are no precondition parameters $p_1,\dots,p_n$ or, equivalently, that we know exactly their value. Would it be at least possible to compute the values of the $p'_i$?

$p'_i$ is defined as the only solution of a logical formula with a single free variable, built using polynomial arithmetic. By putting this formula into DNF, we can reduce the problem to computing the only solution, if any, of a conjunction of polynomial inequalities and equalities. Such a solution can be computed to arbitrary precision\,; in fact, for any $\epsilon$, one can obtain bounds $[l,h]$ such that $l \leq p'_i \leq h$ and $h-l \leq \epsilon$. Unfortunately, the cost of such computations is high.~\cite{Basu_Pollack_Roy_algoreal_2006}

\subsubsection{Experiments}
Consider $l_x \leq x \leq h_x$, $l_y \leq y \leq h_y$ and the problem of generating the optimal abstract transfer function for the multiplication operation, $z := x * y$. We wish to obtain $l_z$ and $h_z$ such that $l_z \leq z \leq h_z$, and $l_z$ and $h_z$ are optimal ($l_z$ is maximal, $h_z$ is minimal). We first define the set of admissible (not necessarily minimal)~$h_z$:
\begin{equation}
A \definedAs \forall x \forall y (l_x \leq x\leq h_x \land l_y \leq y\leq h_y
  \Rightarrow x y\leq h_z)
\end{equation}

Now we define the \emph{least} value for $h_z$:
\begin{equation}
O \definedAs A \land \forall h~(A[h/h_z] \Rightarrow h \geq h_z)
\end{equation}
The free variables of this formula are $l_x$, $h_x$, $l_y$, $h_y$ and~$h_z$.

\begin{figure}
{\small
\begin{multline}
\left(l_y<0\land
  \left(\left(h_y=l_y\land h_x\geq \
    \frac{l_x l_y}{h_y}\land h_z=l_x \
    l_y\right) \right.\right.\\\left.\left.
  \lor (l_y<h_y\leq 0\land ((l_x\leq 0\land \
    h_x\geq l_x\land h_z=l_x l_y)
  \lor \
     (l_x>0\land h_x\geq l_x\land h_z=h_y \
     l_x))) \right.\right. \\ \left.\left.
  \lor \left(h_y>0\land \left(\left(l_x<0\land \
     \left(\left(l_x\leq h_x\leq \frac{l_x \
       l_y}{h_y}\land h_z=l_x l_y\right)\lor \
     \left(h_x>\frac{l_x l_y}{h_y}\land \
     h_z=h_x h_y\right)\right)\right) 
     \right.\right.\right.\right. \\ \left.\left.\left.\left.
  \lor (l_x=0\land \
     h_x\geq 0\land h_z=h_x h_y)\lor (l_x>0\land \
     ((h_x=l_x\land h_z=h_y l_x) \
     \right.\right.\right.\right. \\ \left.\left.\left.\left.
     \lor (h_x>l_x\land h_z=h_x \
     h_y)))\right)\right)\right)\right)\\
\lor (l_y=0\land \
((h_y=0\land h_x\geq l_x\land h_z=0)\lor \
(h_y>0\land ((l_x<0\land ((l_x\leq h_x\leq 0\land \
h_z=0) \\
\lor (h_x>0\land h_z=h_x h_y)))\lor
(l_x=0\land h_x\geq 0\land h_z=h_x h_y)\lor \
(l_x>0\land ((h_x=l_x\land h_z=h_y l_x) \\
\lor (h_x>l_x\land h_z=h_x \
h_y)))))))\\
\lor \left(l_y>0\land \
\left(\left(h_y=l_y\land \left(\left(l_x<0\land \
\left(\left(h_x=\frac{l_x l_y}{h_y}\land \
h_z=l_x l_y\right)
\right.\right.\right.\right.\right.\right. \\ \left.\left.\left.\left.\left.\left.
\lor \left(h_x>\frac{l_x \
l_y}{h_y}\land h_z=h_x \
h_y\right)\right)\right)
\lor (l_x=0\land h_x\geq 0\land \
h_z=h_x h_y)\lor
\right.\right.\right.\right. \\ \left.\left.\left.\left.
\left(l_x>0\land \
\left(\left(h_x=\frac{l_x l_y}{h_y}\land \
h_z=l_x l_y\right)\lor \left(h_x>\frac{l_x \
l_y}{h_y}\land h_z=h_x \
h_y\right)\right)\right)\right)\right)
\right.\right. \\ \left.\left.
\lor (h_y>l_y\land \
((l_x<0\land ((h_x=l_x\land h_z=l_x \
l_y)\lor (l_x<h_x\leq 0\land h_z=h_x \
l_y)
\right.\right. \\ \left.\left.
\lor (h_x>0\land h_z=h_x h_y)))\lor \
(l_x=0\land h_x\geq 0\land h_z=h_x h_y) \
\right.\right. \\ \left.\left.
\lor (l_x>0\land ((h_x=l_x\land h_z=h_y \
l_x)
\lor (h_x>l_x\land h_z=h_x \
h_y)))))\right)\right).
\end{multline}
}

\caption{This formula is the result of quantifier elimination. It defines $h_z$ to be the least upper bound of $xy$ for $x \in [l_x,h_x]$ and $y \in [l_y,h_y]$.}
\label{fig:interval_mul}
\end{figure}

Mathematica~7 performs quantified elimination by cylindrical algebraic decomposition on this formula in 4.3~s and yields a large formula (Fig.~\ref{fig:interval_mul}) with many case disjunctions, most notably on the sign of $l_x$, $h_x$, $l_y$, $h_y$. This is quite natural: the monotonicity of the function $y \mapsto xy$ changes according to the sign of~$x$. This function is equivalent to the much more terse
\begin{equation}
h_z = \max(l_xl_y, h_xl_y, l_xh_y, h_xh_y)
\end{equation}

This illustrates the limit of our approach on nonlinear problems: even on simple program constructions and with simple invariants, quantifier elimination takes nonneglible time and outputs complicated formulas. We therefore did not pursue this direction further.

\section{Related work}
Since the first numerical abstract analysis techniques were proposed in the 1970s, there has been considerable work on improving precision, efficiency, or both. Without attempting to be exhaustive, we shall now describe a few of the approaches and how they differ from ours.



\label{part:related}

\subsection{Relational abstract domains and modular analysis}
There is a sizeable amount of literature concerning relational numerical abstract domains; that is, domains that express constraints between numerical variables. Convex polyhedra were proposed in the 1970s~\cite{Halbwachs_PhD,CousotHalbwachs78}, and there have been since then many improvements to the technique; a bibliography was gathered by~\citet{PPL}. Algorithms on polyhedra are costly and thus a variety of domains intermediate between simple interval analysis and convex polyhedra were proposed~\cite{Mine_AST_WCRE01,Clariso_Cortadella_SAS2004,Sankaranarayana+others/05/Scalable}. 

It is possible to use relational abstract domains such as polyhedra to model the input/output relationship of a program, function or block~\citep[\S7.2, p.~112]{Halbwachs_PhD}. Instead of considering only the current values of the program variables $(v_1,\dots,v_n)$ at the various program points, one also considers the initial values $(v^0_1,\dots,v^0_n)$ of these variables at the beginning of the program, function or block; thus the computed polyhedra, for instance, relate $(v^0_1,\dots,v^0_n,v_1,\dots,v_n)$. We employed this approach when dealing with recursive procedures (\S\ref{part:McCarthy91}). Such an approach is modular: one can for instance analyse a procedure in such a way, and plug the result of the analysis at each point of call; though of course one loses optimality. It also provides for modular analysis of loop nests: one first analyses the innermost loop, and then replace this innermost loop by the result of the analysis, considered as a nondeterministic program; one then proceeds to the next innermost loop.

One limit of this approach is that the relationship between input and output is constrained by the abstract domain. Most numerical abstract domains concern \emph{convex} relations: difference bound constraints, octagons, polyhedra etc. are all geometrically convex (given two points $a,b$ in the concretisation, the segment $[a,b]$ is also in the concretisation). Note that the result of the analysis of the absolute value function (\S\ref{part:template_linear_constraints_transformers}), as expressed by Rel.~\ref{eqn:abs_dnf}, or that of the rate limiter (\S\ref{part:rate_limiter}), are piecewise linear but not convex.

The idea of producing procedure summaries \cite{Sharir_Pnueli_81} as formulas mapping input bounds to output bounds is not new. \citet{Rugina_Rinard_05}, in the context of pointer analysis (with pointers considered as a base plus an integer offset), proposed a reduction to linear programming. This reduction step, while sound, introduces an imprecision that is difficult to measure in advance; our method, in contrast, is guaranteed to be ``optimal'' in a certain sense. \citeauthor{Rugina_Rinard_05}'s method, however, allows some nonlinear constructs in the program to be analysed. \citet{DBLP:conf/cc/MartinAWF98} proposed applying interprocedural analysis to loops.

\citet{Seidl_ESOP07} also produce procedure summaries as numerical constraints. Our procedure summaries are implementations of the corresponding abstract transformer over some abstract domain, while theirs outputs a relationship between input and output concrete values. Their analysis considers a \emph{convex} set of concrete input-output relationships, expressed as \emph{simplices}, a restricted class of convex polyhedra. This restriction trades precision for speed: the generator and constraint representations of simplices have approximately the same size, while in general polyhedra exponential blowup can occur. Tests by arbitrary linear constraints cannot be adequately represented within this framework. \citet[Sec.~4]{Seidl_ESOP07} propose deferring those constraints using auxiliary variables; this, however, loses some precision. Their analysis and ours are therefore incomparable, since they make different choices between precision and efficiency.

\citet{Reps_CAV05} proposed an interprocedural analysis of numerical properties of functions using weighted pushdown automata. The ``weights'' are taken in a finite height abstract domain, while the domains we consider have infinite height.

\subsection{Computations of exact fixed points}
The limitations of the widening approach explained in \S\ref{part:absint} have been recognized for long. There has therefore been extensive research about computing precise inductive invariants, if possible the least inductive invariant inside the abstract domain considered.

Several methods have been proposed to synthesize invariants without using widening operators~\cite{Colon_CAV03,Cousot05-VMCAI,Sankaranarayanan_SAS04}. In common with us, they express as constraints the conditions under which some parametric invariant shape truly is an invariant, then they use some resolution or simplification technique over those constraints. Again, these methods are designed for solving the problem for one given set of constraints on the inputs, as opposed to finding a relation between the output or fixed-point constraints and the input constraints. In some cases, the invariant may also not be minimal.

\citet{DBLP:conf/sas/BagnaraHMZ05,DBLP:journals/scp/BagnaraHRZ05} proposed improvements over the ``classical'' widenings on linear constraint domains~\cite{Halbwachs_PhD}. \citet{DBLP:conf/cav/GopanR06} introduced ``lookahead widenings'': standard widening-based analysis is applied to a sequence of syntactic restrictions of the original program, which ultimately converges to the whole program; the idea is to distinguish phases or modes of operation in order to make the widening more precise.
\citet{DBLP:conf/sas/GonnordH06,Gonnord_PhD,LEROUX-SUTRE-SAS2007} have proposed \emph{acceleration} techniques: when the transition relation $\tau$ is of certain particular forms, it is possible to compute its transitive closure $\tau^+$ exactly or with small imprecision. Typically, acceleration techniques have difficulties dealing with programs where the control flow is not \emph{flat}, for instance when there are paths through a loop body that affect the iteration variables in different ways, such as the circular buffer example (Listing~\ref{ex:circularbuffer}).

\citet{arXiv:0806.1160,GGTZ:07,DBLP:conf/cav/CostanGGMP05} proposed a ``policy iteration'' or ``strategy iteration'' approach,%
\footnote{The terminology comes from game theory. In broad terms, their consider equations with max/min operators, which are similar to the ``minimax'' operators appearing in the definition of the value of games in game theory. Choosing which argument of a ``min'' is used corresponds, in game theory terms, of choosing a \emph{strategy} or \emph{policy} for a player that tries to minimise the value of the game.}
by downwards iterations providing successive over-ap\-pro\-xi\-ma\-tions of the least fixed point. Their approach can fail to converge to the least fixed point, for instance with expansive semantics such as those of the ``circular buffer'' example (Listing~\ref{ex:circularbuffer}), though for some classes of programs it converges to the least fixed point~\citep{arXiv:0806.1160}.

\citet{Gawlitza_Seidl_ESOP07} proposed another policy iteration approach, which is guaranteed to provide the least fixed point of the system of abstract equations. In contrast to the above method, they use upwards iterations, so each value computed is an under-approximation of the abstract least fixed point. They extended that approach to template linear constraint domains \citep{DBLP:conf/csl/GawlitzaS07}. The differences with our approach are twofold:
\begin{itemize}
\item Their approach computes the least fixed point of a system of min/max abstract equations, derived from the source code of the program. In intuitive terms, min's correspond to conditions (and closures operators in relational domains), and max's to ``merge points'' in the control flow graph (end of if-then-else). This approach thus incurs the same problem of ``undistinguished paths'' as the example from \S\ref{part:absint}. Even then, the policy iteration algorithm may iterate across a number of iterations exponential in the number of merge points.

An alternative would be to consider a cut-set for the control flow graph and distinguish each path between two points in the cut-set. in other words, for a single loop, one would consider each individual control path inside the loop body. The number of such paths is exponential in the number of tests, and thus the ``max'' operation in the abstract equations would also have an exponential number of arguments. Such an approach would compute the same result as ours; however, it would be unworkable without further work to get rid of the explicitly exponential number of arguments in the ``max'' operation.

\item Their approach needs all preconditions exactly known. In contrast, we compute it as an explicit function of the precondition. In short, our invariants are \emph{parametric} in the precondition, while theirs are not.
\end{itemize}

\citet{Gulwani_PLDI08} have also proposed a method for generating linear invariants over integer variables, using a class of templates. The methods described in the present article can be applied to linear invariants over integer variables in two ways: either by abstracting them using rationals (as in examples in Sec.~\ref{part:loop_example}, \ref{part:noop_nests}), either by replacing quantifier elimination over rational linear arithmetic by quantifier elimination over linear integer arithmetic, also known as Presburger arithmetic (\S\ref{part:integers}). \citeauthor{Gulwani_PLDI08} instead chose to first consider integer variables as rationals, so as to be able to compute over rational convex polyhedra, then bound variables and constraint parameters so as to model them as finite bit vectors, finally obtaining a problem amenable to SAT~solving. Program variables \emph{are} finite bit vectors in most industrial programming languages, and parameters to useful invariants over integer variables are often small, thus their approach seems justified. We do not see, however, how their method could be applied to programs operating over real or floating-point variables, which are the main motivation for the present article.

\subsection{Limitations of template-based approaches}
Much, if not all, of the published results on computing least inductive invariants in abstract domains, or at least abstract fixed points, deal with template domains \cite{GGTZ:07,DBLP:conf/csl/GawlitzaS07,Gulwani_PLDI08}, including intervals~\cite{DBLP:conf/cav/CostanGGMP05,Gawlitza_Seidl_ESOP07}. The fundamental reasons for this are:
\begin{itemize}
\item The domain of convex polyhedra is not closed under infinite intersection. Thus, in general, there is no best abstraction of a set of states. In general, there is no least inductive invariant inside the domain.
\item These methods replace the problem of dealing with arbitrarily complex shapes such as convex polyhedra by dealing with a vector of real numbers in finite, fixed dimension. The coefficients of these vectors are then amenable to a variety of constraint solving techniques.
\end{itemize}

A common criticism of template-based approaches, is that they suppose that one knows the interesting templates beforehands --- in the case of linear constraint domains, interesting directions in space. In contrast, methods based on general convex polyhedra infer these directions themselves, through the convex hull and widening operations~\citep{CousotHalbwachs78,Halbwachs_PhD}. For instance, an analysis using standard polyhedra of the following program will infer that $2x=y$, while a template based approach would succeed in doing so only if a template of the form $2x-y$ has been provided:
\lstset{language=Cext}
\begin{lstlisting}
x = y = 0;
while (true) {
  x = x+1;
  y = y+2;
}
\end{lstlisting}

We understand and share that criticism. In some cases there exist ``natural'' templates, such as intervals, or when dealing with timing or scheduling constraints, difference bounds $v_i - v_j \leq C$, but in general, finding the correct templates seems a hard problem. A suggestion is to run iterations using general convex polyhedra and look at the stable directions of the faces of these polyhedra.

We think however that criticism of template domains should be part of wider considerations on how to choose the proper abstract domain.
Abstract interpretation essentially replaces the unsolvable problem of computing the least inductive invariant of the concrete problem by the solvable problem of computing an inductive invariant in an abstract domain --- in some lucky cases such as the one dealt with in this article, actually obtaining the least inductive invariant \emph{in the abstract domain}. The choice of the abstract domain is at present somewhat arbitrary, typically hardwired into the analysis tool, at best chosen by some command-line flags.

Convex polyhedra \citep {Halbwachs_PhD,CousotHalbwachs78,halbwachs94b} are popular because many programming idioms naturally exhibit convexity --- for instance, the set of loop indices $(i,j,k,\dots)$ of nested loops occurring in numerical analysis programs is often convex. Yet, one can easily think of programs where some interesting properties are not convex (a test $|x| \geq 1$, for instance). There are some cases where it is important not to enforce convexity and instead implement disjunctive domains, capable of representing properties such as $x \leq -1 \vee x \geq 1$; an example is \emph{trace partitioning}~\cite{Rival_Mauborgne_TOPLAS07}. Thus, a static analysis tool based on general convex polyhedra also enforces an \emph{a priori} convex shape that might not be representative of the useful program invariants.

While convex polyhedra are not enough, they are often ``too much'': their complexity is too high for many applications. Indeed, even for less costly constraint domains such as octagons \citep{mine:hosc06,Mine_AST_WCRE01}, it is simply too expensive to compute constraints between all visible program variables, so some analysers choose \emph{a priori} to consider relations only between certain variables --- an approach knowing as \emph{packing} in the Astr\'ee tool~\citep{Monniaux_ASIAN06,ASTREE_ESOP05}. Again, the packing choice is a form of \emph{a priori} template guess made by the analyser, using heuristics that look at the way the program is organized.

Further research is obviously needed in how to choose and adapt abstract domains so that they can represent interesting inductive properties at reasonable costs. This problem is similar to the problem of finding the correct predicates in predicate abstraction. For this, various methods for finding predicates in addition to those syntactically present in the program have been proposed, especially those based on the analysis of spurious counterexamples (\emph{counterexample-guided abstraction refinement} or CEGAR).\footnote{The literature on CEGAR is too vast to be cited here without unfairness. \citet{Clarke_et_al_JACM03} introduced this approach for symbolic model checking. Notable applications to software model checking include the \textsc{Blast} \citep{BLAST_STTT07} and \textsc{Slam} \citep{SLAM_PLDI01} tools.} Perhaps similar ideas could be employed for suggesting suitable additional numerical templates, finer-grained packing or disjunctions.

\subsection{Relational domains beyond polyhedra}
In earlier works, we have proposed a method for obtaining input-output relationships of digital linear filters with memories, taking into account the effects of floating-point computations~\cite{Monniaux_CAV05}. This method computes an exact relationship between bounds on the input and bounds on the output, without the need for an abstract domain for expressing the local invariant; as such, for this class of problems, it is more precise than the method from this article. This technique, however, cannot be easily generalized to cases where the operator block contains tests and other nonlinear constructs; the semantics of nonlinear constructs must be approximated by e.g. interval analysis.

There have been several published approaches to finding nonlinear relationships between program variables. One approach obtains polynomial equalities through computations on ideals using Gr\"obner bases~\citep{Rodriguez_Kapur_SAS2004,1222597}. This work only deals with equalities (not inequalities), uses a classical approach of computing output constraints from a set of input constraints (instead of finding relationships between the two sets of constraints), and deals with loops using a widening operator. In comparison, our approach abstracts whole program fragments, and is modular --- it is possible to ``plug'' the result of the analysis of a procedure at the location of a procedure call, though of course this is less precise than inlining the procedure.

Since nonlinear relations are notoriously costly to compute upon, \citet{DBLP:conf/sas/BagnaraRZ05} have proposed using further abstraction to be able to reduce the problem to computations over convex polyhedra.

\citet{Kapur_ACA04} also proposed to use quantifier elimination to obtain invariants: he considers program invariants with parameters, and derives constraints over those parameters from the program.
Our work improves on his by noting that least invariants of the chosen shape can be obtained, not just any invariant; that the abstraction can be done modularly and compositionally (a program fragment can be analysed, and the result of its analysis can be plugged into the analysis of a larger program), or combined into a ``conventional'' abstract interpretation framework (by using invariants of a shape compatible with that framework), and that the resulting invariants can be ``projected'' to obtain numerical quantities.

\section{Conclusion and future prospects}
\label{part:conclusion}
Writing static analysers by hand has long been found tedious and error-prone. One may of course prove an existing analyser correct through assisted proof techniques, which removes the possibility of soundness mistakes, at the expense of much increased tediousness. In this article, we proposed instead effective methods to synthesize abstract domains by automatic techniques. The advantages are twofold: new domains can be created much more easily, since no programming is involved; a single procedure, testable on independent examples, needs be written and possibly formally proved correct. To our knowledge, this is the first effective proposal for generating numerical abstract domains automatically, and one of the few methods for generating numerical summaries. Also, it is also the only method so far for computing summaries of \emph{floating-point} functions.

We have shown that floating-point computations could be safely abstracted using our method. The formulas produced are however fairly complex in this case, and we suspect that further over-approximation could dramatically reduce their size. There is also nowadays significant interest in automatizing, at least partially, the tedious proofs that computer arithmetic experts do and we think that the kind of methods described in this article could help in that respect.

We have so far experimented with small examples, because the original goal of this work was the automatic, on-the-fly, synthesis of abstract transfer functions for small sequences of code that could be more precise than the usual composition of abstract of individual instructions, and less tedious for the analysis designer than the method of pattern-matching the code for ``known'' operators with known mathematical properties. A further goal is the precise analysis of longer sequences, including integer and Boolean computations. We have shown in Sec.~\ref{part:partitioning} how it was possible to partition the state space and abstract each region of the state-space separately; but naive partitioning according to $n$ Booleans leads to $2^n$ regions, which can be unbearably costly and is unneeded in most cases. We think that automatic refinement and partitioning techniques \cite{jeannet03a} could be developed in that respect.

The main practical application that we envision is to be able to analyse numerical operator blocks from synchronous programming languages such as \textsc{Simulink},%
\footnote{\textsc{Simulink} is a graphical dataflow modeling tool sold as an extension to the \textsc{Matlab} numerical computation package. It allows modeling a physical or electrical environment along the computerized control system. A code generator tool can then provide executable code for the control system for a variety of targets, including generic~C. \textsc{Simulink} is available from \href{http://www.mathworks.com/}{The Mathworks}.}
\textsc{Scicos},%
\footnote{\href{http://www.scicos.org/}{\textsc{Scicos}} is a graphical dataflow modeling tool coming with the \textsc{Scilab} numerical computation package, similar in use to \textsc{Simulink}.~\citep{Scilab_Scicos} It is available from INRIA under the GNU General Public License and also has code generation capabilities.}
\textsc{Lustre},%
\footnote{\textsc{Lustre} is a synchronous programming language, from which code can be generated for a variety of platforms \citep{LUSTRE_POPL87}.}
\textsc{Scade}%
\footnote{\textsc{Scade} is a graphical synchronous programming language derived from \textsc{Lustre}. It is available from \href{http://www.esterel-technologies.com/}{Esterel Technologies}. It was used for implementing parts of the Airbus A380 fly-by-wire systems, among others. \cite{DBLP:conf/safecomp/SouyrisD07,DBLP:conf/sas/DelmasS07}}
or \textsc{Sao},%
\footnote{\textsc{Sao} is an earlier industrial graphical synchronous programming language, used, for implementing parts of the Airbus A340 fly-by-wire systems \citep{Briere_Traverse_FTCS-23}, among others.}
which are widely used for programming control systems \citep{Astrom_Wittenmark_97}, particularly in the automative and avionic industries. In order to obtain good analysis precision, such blocks often have to be analysed as a whole instead of decomposing them into individual components and applying individual transfer functions, as in our rate limiter example. The static analysis tool \textsc{Astr\'ee} \citep{Monniaux_ASIAN06,ASTREE_TASE07,ASTREE_ESOP05,ASTREE_PLDI03,DBLP:conf/safecomp/SouyrisD07,DBLP:conf/sas/DelmasS07} outputs few, if any, false alarms on some classes of control programs because it has specific specialized transfer functions for certain operator blocks or coding patterns. Such transfer functions had to be implemented by hand; the techniques described in the present article could have been used to implement some of them automatically and even on-the-fly.

There are two important drawbacks to our method, which make it currently only useful for very precise analysis of small parts of programs. The first is that we need to ``see'' the whole of the loop or function that we are analysing, the instructions of which must belong to the class of constructs that we are capable of dealing with, or at least can be abstracted by them. In contrast, iterative techniques are more tolerant: they see the program state locally, at each program point, and the numerical analysis may easily interact with other analyses, such as pointers~\citep{ASTREE_PLDI03,BlanchetCousotEtAl02-NJ}. The second issue is the high cost of quantifier elimination. Despite our work on new algorithms \citep{Monniaux_LPAR08}, in which we are still making progress, scalability remains an issue.

\section*{Acknowledgements}
We would like to thanks the anonymous referees for careful reading.



\newcommand{\doix}[1]{\href{http://dx.doi.org/#1}{#1}\endgroup}
\newcommand{\doi}{\begingroup\footnotesize doi: \catcode`\_=13\def\_{\textunderscore}\doix}
\newcommand{\isbn}[1]{{\footnotesize \href{http://www.worldcat.org/isbn/#1}{ISBN #1}}}
\newcommand{\issn}[1]{{\footnotesize \href{http://www.worldcat.org/issn/#1}{ISSN #1}}}
\renewcommand{\UrlFont}{\footnotesize\tt}

\bibliographystyle{dmplainnat}
\bibliography{automatic_modular_exact_invariants_lmcs}

\def\cprime{$'$}
\begin{thebibliography}{108}
\providecommand{\natexlab}[1]{#1}
\providecommand{\url}[1]{\texttt{#1}}
\expandafter\ifx\csname urlstyle\endcsname\relax
  \providecommand{\doi}[1]{doi: #1}\else
  \providecommand{\doi}{doi: \begingroup \urlstyle{rm}\Url}\fi

\bibitem[Adj\'e et~al.(2008)Adj\'e, Gaubert, and Goubault]{arXiv:0806.1160}
Assal\'e Adj\'e, St\'ephane Gaubert, and \'Eric Goubault.
\newblock Computing the smallest fixed point of nonexpansive mappings arising
  in game theory and static analysis of programs.
\newblock preprint, arXiv:0806.1160v2, 2008.
\newblock URL \url{http://arxiv.org/abs/0806.1160}.

\bibitem[Anai and Weispfenning(2000)]{Anai_Weispfenning_00}
Hirokazu Anai and Volker Weispfenning.
\newblock Deciding linear-trigonometric problems.
\newblock In \emph{International symposium on symbolic and algebraic
  computation (ISSAC)}. ACM, 2000.
\newblock \isbn{1-58113-218-2}.
\newblock \doi{10.1145/345542.345567}.

\bibitem[{\AA}str{\"o}m and Wittenmark(1997)]{Astrom_Wittenmark_97}
Karl~Johan {\AA}str{\"o}m and Bj{\"o}rn Wittenmark.
\newblock \emph{Computer-controlled systems}.
\newblock Prentice-Hall, 1997.
\newblock \isbn{0-13-314899-8}.

\bibitem[Bagnara et~al.()Bagnara, Hill, and Zaffanella]{PPL}
Roberto Bagnara, Patricia~M. Hill, and Enea Zaffanella.
\newblock \emph{The Parma Polyhedra Library, version 0.9}.
\newblock URL \url{http://www.cs.unipr.it/ppl}.

\bibitem[Bagnara et~al.(2005{\natexlab{a}})Bagnara, Hill, Mazzi, and
  Zaffanella]{DBLP:conf/sas/BagnaraHMZ05}
Roberto Bagnara, Patricia~M. Hill, Elena Mazzi, and Enea Zaffanella.
\newblock Widening operators for weakly-relational numeric abstractions.
\newblock In  \citet{DBLP:conf/sas/2005}, pages 3--18.
\newblock \isbn{3-540-28584-9}.
\newblock \doi{10.1007/11547662_3}.

\bibitem[Bagnara et~al.(2005{\natexlab{b}})Bagnara, Hill, Ricci, and
  Zaffanella]{DBLP:journals/scp/BagnaraHRZ05}
Roberto Bagnara, Patricia~M. Hill, Elisa Ricci, and Enea Zaffanella.
\newblock Precise widening operators for convex polyhedra.
\newblock \emph{Science of Computer Programming}, 58\penalty0 (1-2):\penalty0
  28--56, 2005{\natexlab{b}}.
\newblock \doi{10.1016/j.scico.2005.02.003}.

\bibitem[Bagnara et~al.(2005{\natexlab{c}})Bagnara, Rodr\'{\i}guez-Carbonell,
  and Zaffanella]{DBLP:conf/sas/BagnaraRZ05}
Roberto Bagnara, Enric Rodr\'{\i}guez-Carbonell, and Enea Zaffanella.
\newblock Generation of basic semi-algebraic invariants using convex polyhedra.
\newblock In  \citet{DBLP:conf/sas/2005}, pages 19--34.
\newblock \isbn{3-540-28584-9}.
\newblock \doi{10.1007/11547662_4}.

\bibitem[Ball et~al.(2001)Ball, Majumdar, Millstein, and Rajamani]{SLAM_PLDI01}
Thomas Ball, Rupak Majumdar, Todd Millstein, and Sriram~K. Rajamani.
\newblock Automatic predicate abstraction of {C} programs.
\newblock In \emph{PLDI}, pages 203--213. ACM, 2001.
\newblock \isbn{1-58113-414-2}.
\newblock \doi{10.1145/378795.378846}.

\bibitem[Barrett et~al.(2008)Barrett, Ranise, Stump, and Tinelli]{SMTLIB}
Clark Barrett, Silvio Ranise, Aaron Stump, and Cesare Tinelli.
\newblock The satisfiability modulo theories library {(SMT-LIB)}.
\newblock \href{www.smtlib.org}{\texttt{www.smtlib.org}}, 2008.

\bibitem[Basu et~al.(2006)Basu, Pollack, and
  Roy]{Basu_Pollack_Roy_algoreal_2006}
Saugata Basu, Richard Pollack, and Marie-Fran\c{c}oise Roy.
\newblock \emph{Algorithms in real algebraic geometry}.
\newblock Algorithms and computation in mathematics. Springer, 2nd edition,
  2006.
\newblock \isbn{3-540-33098-4}.

\bibitem[Becker et~al.(2007)Becker, Dax, Eisinger, and Klaedtke]{LIRA}
Bernd Becker, Christian Dax, Jochen Eisinger, and Felix Klaedtke.
\newblock {LIRA:} handling constraints of linear arithmetics over the integers
  and the reals.
\newblock In Werner Damm and Holger Hermanns, editors, \emph{Computer-aided
  verification (CAV)}, volume 4590 of \emph{LNCS}, pages 307--310. Springer,
  2007.
\newblock \isbn{978-3-540-73367-6}.
\newblock \doi{10.1007/978-3-540-73368-3_36}.

\bibitem[Beyer et~al.(2007)Beyer, Henzinger, Jhala, and Majumdar]{BLAST_STTT07}
Dirk Beyer, Thomas~A. Henzinger, Ranjit Jhala, and Rupak Majumdar.
\newblock The software model checker {\textsc{blast}}.
\newblock \emph{Int. J. of Software Tools for Technology Transfer (STTT)},
  9\penalty0 (5--6):\penalty0 505--525, 2007.
\newblock \doi{10.1007/s10009-007-0044-z}.
\newblock Special section FASE '04/05.

\bibitem[Biere et~al.(2003)Biere, Cimatti, Clarke, Strichman, and
  Zhu]{DBLP:journals/ac/BiereCCSZ03}
Armin Biere, Alessandro Cimatti, Edmund~M. Clarke, Ofer Strichman, and Yunshan
  Zhu.
\newblock Bounded model checking.
\newblock \emph{Advances in Computers}, 58:\penalty0 118--149, August 2003.
\newblock \doi{10.1016/S0065-2458(03)58003-2}.

\bibitem[Blanchet et~al.(2002)Blanchet, Cousot, Cousot, Feret, Mauborgne,
  Min{\'e}, Monniaux, and Rival]{BlanchetCousotEtAl02-NJ}
Bruno Blanchet, Patrick Cousot, Radhia Cousot, J{\'e}r{\^o}me Feret, Laurent
  Mauborgne, Antoine Min{\'e}, David Monniaux, and Xavier Rival.
\newblock Design and implementation of a special-purpose static program
  analyzer for safety-critical real-time embedded software.
\newblock In Torben~{\AE}. Mogensen, David~A. Schmidt, and I.~Hal Sudborough,
  editors, \emph{The Essence of Computation: Complexity, Analysis,
  Transformation}, number 2566 in LNCS, pages 85--108. Springer, 2002.
\newblock \isbn{3-540-00326-6}.
\newblock \doi{10.1007/3-540-36377-7_5}.

\bibitem[Blanchet et~al.(2003)Blanchet, Cousot, Cousot, Feret, Mauborgne,
  Min\'e, Monniaux, and Rival]{ASTREE_PLDI03}
Bruno Blanchet, Patrick Cousot, Radhia Cousot, J\'er\^ome Feret, Laurent
  Mauborgne, Antoine Min\'e, David Monniaux, and Xavier Rival.
\newblock A static analyzer for large safety-critical software.
\newblock In \emph{Programming Language Design and Implementation (PLDI)},
  pages 196--207. ACM, 2003.
\newblock \isbn{1-58113-662-5}.
\newblock \doi{10.1145/781131.781153}.

\bibitem[Boigelot et~al.(2005)Boigelot, Jodogne, and
  Wolper]{Boigelot_et_al_TOCL05}
Bernard Boigelot, S\'{e}bastien Jodogne, and Pierre Wolper.
\newblock An effective decision procedure for linear arithmetic over the
  integers and reals.
\newblock \emph{ACM Transactions on Computational Logic (TOCL)}, 6\penalty0
  (3):\penalty0 614--633, 2005.
\newblock \issn{1529-3785}.
\newblock \doi{10.1145/1071596.1071601}.

\bibitem[Bouajjani et~al.(1997)Bouajjani, Esparza, and
  Maler]{DBLP:conf/concur/BouajjaniEM97}
Ahmed Bouajjani, Javier Esparza, and Oded Maler.
\newblock Reachability analysis of pushdown automata: Application to
  model-checking.
\newblock In Antoni~W. Mazurkiewicz and J{\'o}zef Winkowski, editors,
  \emph{Concurrency Theory (CONCUR)}, volume 1243 of \emph{Lecture Notes in
  Computer Science}, pages 135--150. Springer, 1997.
\newblock \isbn{3-540-63141-0}.
\newblock \doi{10.1007/3-540-63141-0_10}.

\bibitem[Bourdoncle(1992)]{Bourdoncle_PhD}
Fran\c{c}ois Bourdoncle.
\newblock \emph{S\'emantique des langages imp\'eratifs d'ordre sup\'erieur et
  interpr\'etation abstraite}.
\newblock PhD thesis, \'Ecole polytechnique, Palaiseau, 1992.

\bibitem[Bradley and Manna(2007)]{BradleyManna07}
Aaron~R. Bradley and Zohar Manna.
\newblock \emph{The Calculus of Computation: Decision Procedures with
  Applications to Verification}.
\newblock Springer, October 2007.
\newblock \isbn{3-540-74112-7}.

\bibitem[Bri\`ere and Traverse(1993)]{Briere_Traverse_FTCS-23}
Dominique Bri\`ere and Pascal Traverse.
\newblock {A}irbus {A320/A330/A340} electrical flight controls --- a family of
  fault-tolerant systems.
\newblock In \emph{FTCS-23 (Symposium on Fault-Tolerant Computing)}, pages
  616--623. IEEE, June 1993.
\newblock \isbn{0-8186-3680-7}.
\newblock \doi{10.1109/FTCS.1993.627364}.

\bibitem[Brown and Davenport(2007)]{Brown_Davenport_ISSAC07}
Christopher~W. Brown and James~H. Davenport.
\newblock The complexity of quantifier elimination and cylindrical algebraic
  decomposition.
\newblock In \emph{ISSAC (Symposium on Symbolic and algebraic computation)},
  pages 54--60. ACM, 2007.
\newblock \isbn{978-1-59593-743-8}.
\newblock \doi{10.1145/1277548.1277557}.

\bibitem[Campbell et~al.(2006)Campbell, Chancelier, and
  Nikoukhah]{Scilab_Scicos}
Stephen~L. Campbell, Jean-Philippe Chancelier, and Ramine Nikoukhah.
\newblock \emph{Modeling and Simulation in {S}cilab/{S}cicos}.
\newblock Springer, 2006.
\newblock \isbn{0-387-27802-8}.

\bibitem[Caspi et~al.(1987)Caspi, Pilaud, Halbwachs, and Plaice]{LUSTRE_POPL87}
Paul Caspi, Daniel Pilaud, Nicolas Halbwachs, and John~A. Plaice.
\newblock {LUSTRE}: a declarative language for real-time programming.
\newblock In \emph{POPL (Symposium on Principles of programming languages)},
  pages 178--188. ACM, 1987.
\newblock \isbn{0-89791-215-2}.
\newblock \doi{10.1145/41625.41641}.

\bibitem[Caviness and Johnson(1998)]{CAD_1998}
Bob~F. Caviness and Jeremy~R Johnson, editors.
\newblock \emph{Quantifier elimination and cylindrical algebraic
  decomposition}.
\newblock Springer, 1998.
\newblock \isbn{3-211-82794-3}.

\bibitem[Claris{\'o} and Cortadella(2004)]{Clariso_Cortadella_SAS2004}
Robert Claris{\'o} and Jordi Cortadella.
\newblock The octahedron abstract domain.
\newblock In Roberto Giacobazzi, editor, \emph{Static Analysis (SAS)}, number
  3148 in LNCS. Springer, 2004.
\newblock \isbn{3-540-22791-1}.
\newblock \doi{10.1007/b99688}.

\bibitem[Clarke et~al.(2003)Clarke, Grumberg, Jha, Lu, and
  Veith]{Clarke_et_al_JACM03}
Edmund Clarke, Orna Grumberg, Somesh Jha, Yuan Lu, and Helmut Veith.
\newblock Counterexample-guided abstraction refinement for symbolic model
  checking.
\newblock \emph{Journal of the ACM}, 50\penalty0 (5):\penalty0 752--794, 2003.
\newblock \issn{0004-5411}.
\newblock \doi{10.1145/876638.876643}.

\bibitem[Clarke et~al.(1999)Clarke, Grumberg, and Peled]{ClarkeGrumbergPeled99}
Edmund~M. Clarke, Jr, Orna Grumberg, and Doron~A. Peled.
\newblock \emph{Model Checking}.
\newblock MIT Press, 1999.
\newblock \isbn{0-262-03270-8}.

\bibitem[Collins(1975)]{CAD_Collins75}
George Collins.
\newblock Quantifier elimination for real closed fields by cylindric algebraic
  decomposition.
\newblock In \emph{Automata theory and formal languages (2nd GI conference)},
  LNCS, pages 134--183. Springer, 1975.
\newblock \isbn{0-387-07407-4}.
\newblock reprinted as \cite{CAD_1998}.

\bibitem[Col{\'o}n et~al.(2003)Col{\'o}n, Sankaranarayanan, and
  Sipma]{Colon_CAV03}
Michael~A. Col{\'o}n, Sriram Sankaranarayanan, and Henny Sipma.
\newblock Linear invariant generation using non-linear constraint solving.
\newblock In \emph{Computer Aided Verification (CAV)}, number 2725 in LNCS,
  pages 420--433. Springer, 2003.
\newblock \isbn{3-540-40524-0}.
\newblock \doi{10.1007/b11831}.

\bibitem[Cooper(1972)]{Cooper72}
D.~C. Cooper.
\newblock Theorem proving in arithmetic without multiplication.
\newblock In Bernard Meltzer and Donald Michie, editors, \emph{Machine
  Intelligence~7}, pages 91--100. Edinburgh University Press, 1972.
\newblock \isbn{0-85224-234-4}.

\bibitem[Costan et~al.(2005)Costan, Gaubert, Éric Goubault, Martel, and
  Putot]{DBLP:conf/cav/CostanGGMP05}
Alexandru Costan, Stephane Gaubert, Éric Goubault, Matthieu Martel, and Sylvie
  Putot.
\newblock A policy iteration algorithm for computing fixed points in static
  analysis of programs.
\newblock In  \citet{CAV_2005}, pages 462--475.
\newblock \isbn{3-540-27231-3}.
\newblock \doi{10.1007/11513988_46}.

\bibitem[Cousot(2005)]{Cousot05-VMCAI}
Patrick Cousot.
\newblock Proving program invariance and termination by parametric abstraction,
  {L}agrangian relaxation and semidefinite programming.
\newblock In Radhia Cousot, editor, \emph{Verification, Model Checking and
  Abstract Interpretation (VMCAI)}, number 3385 in LNCS, pages 1--24. Springer,
  2005.
\newblock \isbn{3-540-24297-X}.
\newblock \doi{10.1007/b105073}.

\bibitem[Cousot(1978)]{Cousot_PhD}
Patrick Cousot.
\newblock \emph{M\'ethodes it\'eratives de construction et d'approximation de
  points fixes d'op\'erateurs monotones sur un treillis, analyse s\'emantique
  des programmes}.
\newblock State doctorate thesis, Universit\'e scientifique et m\'edicale de
  Grenoble and Institut National Polytechnique de Grenoble, 1978.
\newblock URL \url{http://tel.archives-ouvertes.fr/tel-00288657/en/}.

\bibitem[Cousot and Cousot(1977)]{CousotCousot76-1}
Patrick Cousot and Radhia Cousot.
\newblock Static determination of dynamic properties of programs.
\newblock In \emph{Proceedings of the Second International Symposium on
  Programming (1976)}, pages 106--130, Paris, 1977. Dunod.
\newblock \isbn{2-04-005185-6}.
\newblock Also known as \emph{Actes du deuxième colloque international sur la
  programmation}.

\bibitem[Cousot and Cousot(1992)]{CousotCousot_JLC92}
Patrick Cousot and Radhia Cousot.
\newblock Abstract interpretation frameworks.
\newblock \emph{J. of Logic and Computation}, pages 511--547, August 1992.
\newblock \issn{0955-792X}.
\newblock \doi{10.1093/logcom/2.4.511}.

\bibitem[Cousot and Halbwachs(1978)]{CousotHalbwachs78}
Patrick Cousot and Nicolas Halbwachs.
\newblock Automatic discovery of linear restraints among variables of a
  program.
\newblock In \emph{Principles of Programming Languages (POPL)}, pages 84--96.
  ACM, 1978.
\newblock \doi{10.1145/512760.512770}.

\bibitem[Cousot et~al.(2005)Cousot, Cousot, Feret, Mauborgne, Min\'e, Monniaux,
  and Rival]{ASTREE_ESOP05}
Patrick Cousot, Radhia Cousot, J\'er\^ome Feret, Laurent Mauborgne, Antoine
  Min\'e, David Monniaux, and Xavier Rival.
\newblock The {ASTR\'EE} analyzer.
\newblock In Shmuel~``Mooly'' Sagiv, editor, \emph{Programming Languages and
  Systems (ESOP)}, number 3444 in LNCS, pages 21--30. Springer, 2005.
\newblock \isbn{3-540-25435-8}.
\newblock \doi{10.1007/b107380}.

\bibitem[Cousot et~al.(2007)Cousot, Cousot, Feret, Min\'e, Mauborgne, Monniaux,
  and Rival]{ASTREE_TASE07}
Patrick Cousot, Radhia Cousot, Jerome Feret, Antoine Min\'e, Laurent Mauborgne,
  David Monniaux, and Xavier Rival.
\newblock Varieties of static analyzers: A comparison with {ASTR\'EE}.
\newblock In \emph{Theoretical Aspects of Software Engineering (TASE)}. IEEE,
  2007.
\newblock \isbn{0-7695-2856-2}.

\bibitem[Cousot et~al.(2008)Cousot, Cousot, Feret, Mauborgne, Min\'e, Monniaux,
  and Rival]{Monniaux_ASIAN06}
Patrick Cousot, Radhia Cousot, J\'er\^ome Feret, Laurent Mauborgne, Antoine
  Min\'e, David Monniaux, and Xavier Rival.
\newblock Combination of abstractions in the astr\'ee static analyzer.
\newblock In Mitsu Okada and Ichiro Satoh, editors, \emph{Advances in Computer
  Science --- ASIAN 2006}, volume 4435 of \emph{LNCS}, pages 272--300.
  Springer, 2008.
\newblock \isbn{978-3-540-77504-1}.
\newblock \doi{10.1007/978-3-540-77505-8_23}.
\newblock URL \url{http://www.di.ens.fr/~cousot/COUSOTpapers/ASIAN06.shtml}.

\bibitem[Davenport and Heintz(1988)]{Davenport_Heintz88}
James~H. Davenport and Joos Heintz.
\newblock Real quantifier elimination is doubly exponential.
\newblock \emph{Journal of Symbolic Computation}, 5\penalty0 (1--2):\penalty0
  29--35, April 1988.
\newblock \doi{10.1016/S0747-7171(88)80004-X}.

\bibitem[de~Moura and Bj{\o}rner(2008)]{DBLP:conf/tacas/MouraB08}
Leonardo de~Moura and Nikolaj Bj{\o}rner.
\newblock {Z3}: An efficient {SMT} solver.
\newblock In C.~R. Ramakrishnan and Jakob Rehof, editors, \emph{TACAS}, volume
  4963 of \emph{LNCS}, pages 337--340. Springer, 2008.
\newblock \isbn{3-540-78799-2}.
\newblock \doi{10.1007/978-3-540-78800-3\_24}.

\bibitem[de~Nicola(2007)]{ESOP_2007}
Rocco de~Nicola, editor.
\newblock \emph{Programming Languages and Systems (ESOP)}, volume 4421 of
  \emph{LNCS}, 2007. Springer.
\newblock \isbn{978-3-540-71316-6}.

\bibitem[Delmas and Souyris(2007)]{DBLP:conf/sas/DelmasS07}
David Delmas and Jean Souyris.
\newblock {A}str{\'e}e: From research to industry.
\newblock In  \citet{SAS_2007}, pages 437--451.
\newblock \isbn{978-3-540-74060-5}.
\newblock \doi{10.1007/978-3-540-74061-2_27}.

\bibitem[Dolzmann et~al.(2006)Dolzmann, Seidl, and Sturm]{Redlog}
Andreas Dolzmann, Andreas Seidl, and Thomas Sturm.
\newblock \emph{{R}edlog User Manual}, 3.1 edition, 2006.
\newblock for \textsc{Redlog} version 3.06 and \textsc{Reduce} version 3.8.

\bibitem[Dutertre and de~Moura(2006)]{DBLP:conf/cav/DutertreM06}
Bruno Dutertre and Leonardo de~Moura.
\newblock A fast linear-arithmetic solver for {DPLL(T)}.
\newblock In Thomas Ball and Robert~B. Jones, editors, \emph{Computer-aided
  verification (CAV)}, volume 4144 of \emph{LNCS}, pages 81--94. Springer,
  2006.
\newblock \isbn{3-540-37406-X}.
\newblock \doi{10.1007/11817963_11}.

\bibitem[Etessami and Rajamani(2005)]{CAV_2005}
Kousha Etessami and Sriram~K. Rajamani, editors.
\newblock \emph{Computer Aided Verification (CAV)}, number 4590 in LNCS, 2005.
  Springer.
\newblock \isbn{3-540-27231-3}.
\newblock \doi{10.1007/b138445}.

\bibitem[Ferrante and Rackoff(1975)]{FerranteRackoff75}
Jeanne Ferrante and Charles Rackoff.
\newblock A decision procedure for the first order theory of real addition with
  order.
\newblock \emph{{SIAM} J. on Computing}, 4\penalty0 (1):\penalty0 69--76, March
  1975.
\newblock \issn{0097-5397}.
\newblock \doi{10.1137/0204006}.

\bibitem[Fischer and Rabin(1974)]{Fischer_Rabin_exponential_74}
Michael~J. Fischer and Michael~O. Rabin.
\newblock Super-exponential complexity of {P}resburger arithmetic.
\newblock In Richard Karp, editor, \emph{Complexity of Computation}, number~7
  in SIAM--AMS proceedings, pages 27--42. American Mathematical Society, 1974.
\newblock \isbn{0-8218-1327-7}.

\bibitem[Gaubert et~al.(2007)Gaubert, Goubault, Taly, and Zennou]{GGTZ:07}
St\'ephane Gaubert, \'Eric Goubault, Ankur Taly, and Sarah Zennou.
\newblock Static analysis by policy iteration on relational domains.
\newblock In  \citet{ESOP_2007}, pages 237--252.
\newblock \isbn{978-3-540-71316-6}.

\bibitem[Gawlitza and Seidl(2007{\natexlab{a}})]{DBLP:conf/csl/GawlitzaS07}
Thomas Gawlitza and Helmut Seidl.
\newblock Precise relational invariants through strategy iteration.
\newblock In Jacques Duparc and Thomas~A. Henzinger, editors, \emph{Computer
  science logic (CSL)}, volume 4646 of \emph{LNCS}, pages 23--40. Springer,
  2007{\natexlab{a}}.
\newblock \isbn{978-3-540-74914-1}.
\newblock \doi{10.1007/978-3-540-74915-8_6}.

\bibitem[Gawlitza and Seidl(2007{\natexlab{b}})]{Gawlitza_Seidl_ESOP07}
Thomas Gawlitza and Helmut Seidl.
\newblock Precise fixpoint computation through strategy iteration.
\newblock In  \citet{ESOP_2007}, pages 300--315.
\newblock \isbn{978-3-540-71316-6}.
\newblock \doi{10.1007/978-3-540-71316-6_21}.

\bibitem[Gonnord(2007)]{Gonnord_PhD}
Laure Gonnord.
\newblock \emph{Accel\'eration abstraite pour l'am\'elioration de la
  pr\'ecision en analyse des relations lin\'eaires}.
\newblock PhD thesis, Universit\'e Joseph Fourier, October 2007.
\newblock URL \url{http://tel.archives-ouvertes.fr/tel-00196899/en/}.

\bibitem[Gonnord and Halbwachs(2006)]{DBLP:conf/sas/GonnordH06}
Laure Gonnord and Nicolas Halbwachs.
\newblock Combining widening and acceleration in linear relation analysis.
\newblock In Kwangkeun Yi, editor, \emph{Static analysis (SAS)}, volume 4134 of
  \emph{LNCS}, pages 144--160. Springer, 2006.
\newblock \isbn{3-540-37756-5}.
\newblock \doi{10.1007/11823230_10}.

\bibitem[Gopan and Reps(2006)]{DBLP:conf/cav/GopanR06}
Denis Gopan and Thomas~W. Reps.
\newblock Lookahead widening.
\newblock In Thomas Ball and Robert~B. Jones, editors, \emph{Computer Aided
  Verification (CAV)}, volume 4144 of \emph{LNCS}, pages 452--466. Springer,
  2006.
\newblock \isbn{3-540-37406-X}.
\newblock \doi{10.1007/11817963_41}.

\bibitem[Granger(1991)]{Granger91}
Philippe Granger.
\newblock Static analysis of linear congruence equalities among variables of a
  program.
\newblock In \emph{TAPSOFT}, volume 493 of \emph{LNCS}. Springer, 1991.
\newblock \doi{10.1007/3-540-53982-4_10}.

\bibitem[Gulwani et~al.(2008)Gulwani, Srivastava, and
  Venkatesan]{Gulwani_PLDI08}
Sumit Gulwani, Saurabh Srivastava, and Ramarathnam Venkatesan.
\newblock Program analysis as constraint solving.
\newblock In \emph{Programming Language Design and Implementation (PLDI)}. ACM,
  2008.
\newblock \isbn{978-1-59593-860-2}.
\newblock \doi{10.1145/1375581.1375616}.

\bibitem[Halbwachs(1993)]{Halbwachs_CAV93}
Nicolas Halbwachs.
\newblock Delay analysis in synchronous programs.
\newblock In Costas Courcoubetis, editor, \emph{Computer Aided Verification
  (CAV)}, volume 697 of \emph{LNCS}, pages 333--346. Springer, 1993.
\newblock \isbn{3-540-56922-7}.
\newblock \doi{10.1007/3-540-56922-7_28}.

\bibitem[Halbwachs(1979)]{Halbwachs_PhD}
Nicolas Halbwachs.
\newblock \emph{D\'etermination automatique de relations lin\'eaires
  v\'erifi\'ees par les variables d'un programme}.
\newblock PhD thesis, Universit\'e scientifique et m\'edicale de Grenoble and
  Institut National Polytechnique de Grenoble, 1979.
\newblock URL \url{http://tel.archives-ouvertes.fr/tel-00288805/en/}.

\bibitem[Halbwachs et~al.(1994)Halbwachs, Proy, and Raymond]{halbwachs94b}
Nicolas Halbwachs, Yann-Erick Proy, and Pascal Raymond.
\newblock Verification of linear hybrid systems by means of convex
  approximations.
\newblock In \emph{Static analysis (SAS)}, September 1994.
\newblock \isbn{3-540-58485-4}.
\newblock \doi{10.1007/3-540-58485-4_43}.

\bibitem[Hankin and Siveroni(2005)]{DBLP:conf/sas/2005}
Chris Hankin and Igor Siveroni, editors.
\newblock \emph{Static analysis (SAS)}, volume 3672 of \emph{LNCS}, 2005.
  Springer.
\newblock \isbn{3-540-28584-9}.

\bibitem[IEE(1985)]{IEEE-754}
\emph{IEEE standard for Binary floating-point arithmetic for microprocessor
  systems}.
\newblock IEEE, 1985.
\newblock ANSI/IEEE Std 754-1985.

\bibitem[Jeannet(2003)]{jeannet03a}
Bertrand Jeannet.
\newblock Dynamic partitioning in linear relation analysis. application to the
  verification of reactive systems.
\newblock \emph{Formal Methods in System Design}, 23\penalty0 (1):\penalty0
  5--37, July 2003.
\newblock \issn{0925-9856}.
\newblock \doi{10.1023/A:1024480913162}.

\bibitem[Kahan(2005)]{Kahan_ARITH_17U}
William Kahan.
\newblock Advantages of gradual over abrupt underflow to zero.
\newblock Slides of a keynote talk given at ARITH17, 2005.

\bibitem[Kapur(2004)]{Kapur_ACA04}
Deepak Kapur.
\newblock Automatically generating loop invariants using quantifier
  elimination.
\newblock In \emph{Applications of Computer Algebra (ACA)}, 2004.

\bibitem[Karr(1976)]{Karr76}
Michael Karr.
\newblock Affine relationships among variables of a program.
\newblock \emph{Acta Informatica}, 6\penalty0 (2), June 1976.

\bibitem[Kroening and Strichman(2008)]{Kroening_Strichman_08}
Daniel Kroening and Ofer Strichman.
\newblock \emph{Decision procedures}.
\newblock Springer, 2008.
\newblock \isbn{978-3-540-74104-6}.

\bibitem[Lal et~al.(2005)Lal, Balakrishnan, and Reps]{Reps_CAV05}
Akash Lal, Gogul Balakrishnan, and Thomas Reps.
\newblock Extended weighted pushdown systems.
\newblock In  \citet{CAV_2005}, pages 343--357.
\newblock \isbn{3-540-27231-3}.
\newblock \doi{10.1007/11817963_32}.

\bibitem[Leroux and Sutre(2007)]{LEROUX-SUTRE-SAS2007}
Jérôme Leroux and Grégoire Sutre.
\newblock Accelerated data-flow analysis.
\newblock In \emph{Static Analysis, 14th International Symposium, SAS 2007,
  Kongens Lyngby, Denmark, August 22-24, 2007, Proceedings}, volume 4634 of
  \emph{Lecture Notes in Computer Science}, pages 184--199. Springer, 2007.
\newblock \doi{10.1007/s10009-008-0064-3}.

\bibitem[Loos and Weispfenning(1993)]{LoosWeispfenning93}
R{\"u}diger Loos and Volker Weispfenning.
\newblock Applying linear quantifier elimination.
\newblock \emph{The Computer Journal}, 36\penalty0 (5):\penalty0 450--462,
  1993.
\newblock Special issue on computational quantifier elimination.

\bibitem[Manna and McCarthy(1969)]{Manna_McCarthy_69}
Zohar Manna and John McCarthy.
\newblock Properties of programs and partial function logic.
\newblock In Bernard Meltzer and Donald Michie, editors, \emph{Machine
  Intelligence}, 5, pages 27--38. Edinburgh University Press, 1969.
\newblock \isbn{0-85224-176-3}.

\bibitem[Manna and Pnueli(1970)]{Manna_Pnueli_JACM70}
Zohar Manna and Amir Pnueli.
\newblock Formalization of properties of functional programs.
\newblock \emph{Journal of the ACM}, 17\penalty0 (3):\penalty0 555--569, 1970.
\newblock \doi{10.1145/321592.321606}.

\bibitem[Martin et~al.(1998)Martin, Alt, Wilhelm, and
  Ferdinand]{DBLP:conf/cc/MartinAWF98}
Florian Martin, Martin Alt, Reinhard Wilhelm, and Christian Ferdinand.
\newblock Analysis of loops.
\newblock In Kai Koskimies, editor, \emph{Compiler Construction (CC)}, volume
  1383 of \emph{LNCS}, pages 80--94. Springer, 1998.
\newblock \isbn{3-540-64304-4}.
\newblock \doi{10.1007/BFb0026424}.

\bibitem[Matiyasevich(1993)]{Matiyasevich}
Yuri~V. Matiyasevich.
\newblock \emph{{H}ilbert's Tenth Problem}.
\newblock MIT Press, 1993.
\newblock \isbn{0262132958}.

\bibitem[Min\'e(2001)]{Mine_AST_WCRE01}
Antoine Min\'e.
\newblock The octagon abstract domain.
\newblock In Axel Simon, Andy King, and Jacob~M. Howe, editors, \emph{WCRE
  (Analysis, Slicing, and Transformation)}, pages 310--319. IEEE, 2001.
\newblock \doi{10.1109/WCRE.2001.957836}.

\bibitem[Min\'e(2004{\natexlab{a}})]{Mine_ESOP04}
Antoine Min\'e.
\newblock Relational abstract domains for the detection of floating-point
  run-time errors.
\newblock In David Schmidt, editor, \emph{Programming Languages and Systems
  (ESOP)}, number 2986 in LNCS, pages 3--17. Springer, 2004{\natexlab{a}}.
\newblock \isbn{3-540-21313-9}.
\newblock \doi{10.1007/b96702}.

\bibitem[Min\'e(2004{\natexlab{b}})]{Mine_PhD}
Antoine Min\'e.
\newblock \emph{Domaines numériques abstraits faiblement relationnels}.
\newblock PhD thesis, \'Ecole polytechnique, 2004{\natexlab{b}}.
\newblock URL \url{http://tel.archives-ouvertes.fr/tel-00136630/fr/}.

\bibitem[Min\'e(2006{\natexlab{a}})]{mine:hosc06}
Antoine Min\'e.
\newblock The octagon abstract domain.
\newblock \emph{Higher-Order and Symbolic Computation}, 19\penalty0
  (1):\penalty0 31--100, 2006{\natexlab{a}}.
\newblock \doi{10.1007/s10990-006-8609-1}.

\bibitem[Min\'e(2006{\natexlab{b}})]{mine:vmcai06}
Antoine Min\'e.
\newblock Symbolic methods to enhance the precision of numerical abstract
  domains.
\newblock In \emph{Verification, Model Checking, and Abstract Interpretation
  (VMCAI'06)}, volume 3855 of \emph{LNCS}, pages 348--363. Springer, January
  2006{\natexlab{b}}.
\newblock \isbn{3-540-31139-4}.
\newblock \doi{10.1007/11609773}.

\bibitem[Monniaux(2005)]{Monniaux_CAV05}
David Monniaux.
\newblock Compositional analysis of floating-point linear numerical filters.
\newblock In  \citet{CAV_2005}, pages 199--212.
\newblock \isbn{3-540-27231-3}.
\newblock \doi{10.1007/b138445}.

\bibitem[Monniaux(2010)]{Monniaux_CAV10}
David Monniaux.
\newblock Quantifier elimination by lazy model enumeration.
\newblock In \emph{Computer aided verification (CAV)}, LNCS. Springer, 2010.
\newblock To appear.

\bibitem[Monniaux(2008{\natexlab{a}})]{Monniaux_LPAR08}
David Monniaux.
\newblock A quantifier elimination algorithm for linear real arithmetic.
\newblock In Iliano Cervesato, Helmut Veith, and Andrei Voronkov, editors,
  \emph{Logic for Programming Artificial Intelligence and Reasoning (LPAR)},
  volume 5330 of \emph{LNAI}, pages 243--257. Springer, 2008{\natexlab{a}}.
\newblock \isbn{978-3-540-89438-4}.
\newblock \doi{10.1007/978-3-540-89439-1_18}.

\bibitem[Monniaux(2009)]{Monniaux_POPL09}
David Monniaux.
\newblock Automatic modular abstractions for linear constraints.
\newblock In Benjamin~C. Pierce, editor, \emph{Symposium on Principles of
  programming languages (POPL)}. ACM, 2009.
\newblock \isbn{978-1-60558-379-2}.
\newblock \doi{10.1145/1480881.1480899}.

\bibitem[Monniaux(2007)]{Monniaux_SAS07}
David Monniaux.
\newblock Optimal abstraction on real-valued programs.
\newblock In  \citet{SAS_2007}, pages 104--120.
\newblock \isbn{978-3-540-74060-5}.
\newblock \doi{10.1007/978-3-540-74061-2_7}.

\bibitem[Monniaux(2008{\natexlab{b}})]{Monniaux_TOPLAS08}
David Monniaux.
\newblock The pitfalls of verifying floating-point computations.
\newblock \emph{Transactions on programming languages and systems (TOPLAS)},
  30\penalty0 (3):\penalty0 12, May 2008{\natexlab{b}}.
\newblock \issn{0164-0925}.
\newblock \doi{10.1145/1353445.1353446}.
\newblock URL \url{http://hal.archives-ouvertes.fr/hal-00128124/en/}.

\bibitem[Necula et~al.(2002)Necula, McPeak, Rahul, and Weimer]{Cil}
George~C. Necula, Scott McPeak, Shree~P. Rahul, and Westley Weimer.
\newblock {CIL}: Intermediate language and tools for analysis and
  transformation of {C} programs.
\newblock In R.~Nigel Horspool, editor, \emph{Compiler Construction (CC)},
  volume 2304 of \emph{LNCS}, pages 209--265. Springer, 2002.
\newblock \isbn{3-540-43369-4}.
\newblock \doi{10.1007/3-540-45937-5_16}.

\bibitem[Nielson and Fil{\'e}(2007)]{SAS_2007}
Hanne~Riis Nielson and Gilberto Fil{\'e}, editors.
\newblock \emph{Static analysis (SAS)}, volume 4634 of \emph{LNCS}, 2007.
  Springer.
\newblock \isbn{978-3-540-74060-5}.

\bibitem[Nipkow(2008)]{NipkowIJCAR08}
Tobias Nipkow.
\newblock Linear quantifier elimination.
\newblock In Alessandro Armando, Peter Baumgartner, and Gilles Dowek, editors,
  \emph{Automated reasoning (IJCAR)}, volume 5195 of \emph{LNCS}, pages 18--33.
  Springer, 2008.
\newblock \isbn{978-3-540-71069-1}.
\newblock \doi{10.1007/978-3-540-71070-7_3}.

\bibitem[Perrin(1990)]{Perrin_HTCS}
Dominique Perrin.
\newblock Finite automata.
\newblock In Jan van Leeuwen, editor, \emph{Handbook of Theoretical Computer
  Science, vol. {B}}, chapter~1. MIT Press, 1990.
\newblock \isbn{0444880747}.

\bibitem[Presburger(1929)]{Presburger29}
Moj\.{z}esz Presburger.
\newblock {\"Uber die Vollstandigkeit eines Gewissen Systems der Arithmetik
  Ganzer Zahlen, in Welchem die Addition als Einzige Operation Hervortritt}.
\newblock In \emph{Comptes-rendus du premier congr\`es des math\'ematiciens des
  pays slaves}, pages 92--101, Warsaw, 1929.

\bibitem[Pugh(1991)]{Pugh_Omega91}
William Pugh.
\newblock The {O}mega test: a fast and practical integer programming algorithm
  for dependence analysis.
\newblock In \emph{Supercomputing '91}, pages 4--13. ACM, 1991.
\newblock \isbn{0-89791-459-7}.
\newblock \doi{10.1145/125826.125848}.

\bibitem[Reps et~al.(2004)Reps, Sagiv, and Wilhelm]{DBLP:conf/cav/RepsSW04}
Thomas~W. Reps, Shmuel Sagiv, and Reinhard Wilhelm.
\newblock Static program analysis via 3-valued logic.
\newblock In Rajeev Alur and Doron Peled, editors, \emph{Computer Aided
  Verification (CAV)}, volume 3114 of \emph{LNCS}, pages 15--30. Springer,
  2004.
\newblock \isbn{3-540-22342-8}.
\newblock \doi{10.1007/b98490}.

\bibitem[Rice(1953)]{Rice_1953}
Henry~Gordon Rice.
\newblock Classes of recursively enumerable sets and their decision problems.
\newblock \emph{Transactions of the American Mathematical Society}, 74\penalty0
  (2):\penalty0 358--366, March 1953.

\bibitem[Rival and Mauborgne(2007)]{Rival_Mauborgne_TOPLAS07}
Xavier Rival and Laurent Mauborgne.
\newblock The trace partitioning abstract domain.
\newblock \emph{Transactions on Programming Languages and Systems (TOPLAS)},
  29\penalty0 (5):\penalty0 26, 2007.
\newblock \issn{0164-0925}.
\newblock \doi{10.1145/1275497.1275501}.

\bibitem[Rodr\'{\i}guez-Carbonell and Kapur(2007)]{1222597}
E.~Rodr\'{\i}guez-Carbonell and D.~Kapur.
\newblock Automatic generation of polynomial invariants of bounded degree using
  abstract interpretation.
\newblock \emph{Science of Computer Programming}, 64\penalty0 (1):\penalty0
  54--75, 2007.
\newblock \issn{0167-6423}.
\newblock \doi{10.1016/j.scico.2006.03.003}.

\bibitem[Rodr\'iguez-Carbonell and Kapur(2004)]{Rodriguez_Kapur_SAS2004}
Enric Rodr\'iguez-Carbonell and Deepak Kapur.
\newblock An abstract interpretation approach for automatic generation of
  polynomial invariants.
\newblock In \emph{SAS (Static Analysis)}, number 3148 in LNCS. Springer, 2004.

\bibitem[Rogers(1987)]{Rogers}
Hartley Rogers, Jr.
\newblock \emph{Theory of Recursive Functions and Effective Computability}.
\newblock MIT Press, 1987.
\newblock \isbn{0-262-68052-1}.

\bibitem[Rugina and Rinard(2005)]{Rugina_Rinard_05}
Radu Rugina and Martin Rinard.
\newblock Symbolic bounds analysis for pointers, array indices, and accessed
  memory regions.
\newblock \emph{ACM Trans. on Programming Languages and Systems (TOPLAS)},
  27\penalty0 (2):\penalty0 185--235, 2005.
\newblock \doi{10.1145/349299.349325}.

\bibitem[Sankaranarayanan et~al.(2004)Sankaranarayanan, Sipma, and
  Manna]{Sankaranarayanan_SAS04}
Sriram Sankaranarayanan, Henny Sipma, and Zohar Manna.
\newblock Constraint-based linear-relations analysis.
\newblock In \emph{Static Analysis (SAS)}, number 3148 in LNCS, pages 53--68.
  Springer, 2004.
\newblock \doi{10.1007/b99688}.

\bibitem[Sankaranarayanan et~al.(2005)Sankaranarayanan, Sipma, and
  Manna]{Sankaranarayana+others/05/Scalable}
Sriram Sankaranarayanan, Henny Sipma, and Zohar Manna.
\newblock Scalable analysis of linear systems using mathematical programming.
\newblock In \emph{Verification, Model Checking, and Abstract Interpretation
  (VMCAI)}, volume 3385 of \emph{LNCS}, pages 21--47. Springer, 2005.
\newblock \doi{10.1007/b105073}.

\bibitem[Scholl et~al.(2009)Scholl, Disch, Pigorsch, and
  Kupferschmid]{DBLP:conf/tacas/SchollDPK09}
Christoph Scholl, Stefan Disch, Florian Pigorsch, and Stefan Kupferschmid.
\newblock Computing optimized representations for non-convex polyhedra by
  detection and removal of redundant linear constraints.
\newblock In Stefan Kowalewski and Anna Philippou, editors, \emph{Tools and
  Algorithms for Construction and Analysis of Systems (TACAS)}, volume 5505 of
  \emph{LNCS}, pages 383--397. Springer, 2009.
\newblock \isbn{978-3-642-00767-5}.
\newblock \doi{10.1007/978-3-642-00768-2_32}.

\bibitem[Seidenberg(1954)]{Seidenberg54}
Abraham Seidenberg.
\newblock A new decision method for elementary algebra.
\newblock \emph{Annals of Mathematics}, 60\penalty0 (2):\penalty0 365--374,
  September 1954.
\newblock URL \url{http://www.jstor.org/stable/1969640}.

\bibitem[Seidl et~al.(2007)Seidl, Flexeder, and Petter]{Seidl_ESOP07}
Helmut Seidl, Andrea Flexeder, and Michael Petter.
\newblock Interprocedurally analysing linear inequality relations.
\newblock In  \citet{ESOP_2007}, pages 284--299.
\newblock \isbn{978-3-540-71316-6}.
\newblock \doi{10.1007/978-3-540-71316-6_20}.

\bibitem[Sharir and Pnueli(1981)]{Sharir_Pnueli_81}
Micha Sharir and Amir Pnueli.
\newblock Two approaches to inter-procedural data-flow analysis.
\newblock In Neil Jones and Steven Muchnik, editors, \emph{Program Flow
  Analysis: Theory and Application}. Prentice-Hall, 1981.

\bibitem[Souyris and Delmas(2007)]{DBLP:conf/safecomp/SouyrisD07}
Jean Souyris and David Delmas.
\newblock Experimental assessment of {A}str{\'e}e on safety-critical avionics
  software.
\newblock In Francesca Saglietti and Norbert Oster, editors, \emph{SAFECOMP},
  volume 4680 of \emph{LNCS}, pages 479--490. Springer, 2007.
\newblock \isbn{978-3-540-75100-7}.
\newblock \doi{10.1007/978-3-540-75101-4_45}.

\bibitem[Tarski(1951)]{Tarski51}
Alfred Tarski.
\newblock \emph{A Decision Method for Elementary Algebra and Geometry}.
\newblock University of California Press, 1951.
\newblock reprinted as \citep{Tarski51_RAND}.

\bibitem[Tarski(1957)]{Tarski51_RAND}
Alfred Tarski.
\newblock A decision method for elementary algebra and geometry.
\newblock Technical Report 109, The RAND Corporation, 1957.
\newblock Second edition (original 1948, revised 1951), reprint of
  \citep{Tarski51}.

\bibitem[Weispfenning(1988)]{Weispfenning88}
Volker Weispfenning.
\newblock The complexity of linear problems in fields.
\newblock \emph{Journal of Symbolic Computation}, 5\penalty0 (1--2):\penalty0
  3--27, February 1988.
\newblock \issn{0747-7171}.
\newblock \doi{10.1016/S0747-7171(88)80003-8}.

\bibitem[Wolfram(2005)]{Mathematica_book}
Stephen Wolfram.
\newblock \emph{The \emph{Mathematica} Book}.
\newblock Wolfram Research, 2005.
\newblock Manual coming with version 5.2 of \emph{Mathematica}.

\end{thebibliography}
\end{document}